\newtheorem{theorem}{Theorem}
\newtheorem{lemma}{Lemma}
\newtheorem{proposition}{Proposition}
\newtheorem{corollary}{Corollary}
\newcommand{\csection}[1]
    {\begin{center}
        \stepcounter{section}
        {\bf\large\arabic{section}. #1}
    \end{center}
}
\newcommand{\scsection}[1]
    {\begin{center}
        {\bf\large #1}
    \end{center}
}
\newcommand{\csubsection}[1]{
\begin{center}
\stepcounter{subsection}
{\it\arabic{section}.\arabic{subsection}. #1}
\end{center}
}
\newcommand{\scsubsection}[1]{
\begin{center}
\stepcounter{subsection}
{\it #1}
\end{center}
}
\def\black{\color{black}}
\def\beq{\begin{equation}}
\def\eeq{\end{equation}}
\def\beqr{\begin{eqnarray}}
\def\eeqr{\end{eqnarray}}
\def\beqrs{\begin{eqnarray*}}
\def\eeqrs{\end{eqnarray*}}
\def\bet{\begin{theorem}}
\def\eet{\end{theorem}}
\def\bel{\begin{lemma}}
\def\eel{\end{lemma}}
\def\bep{\begin{proposition}}
\def\eep{\end{proposition}}
\def\bg{\begin{figure}[tbph]\begin{center}}
\def\eg{\end{center}\end{figure}}
\def\bc{\begin{center}}
\def\ec{\end{center}}
\def\mR{\mathbb{R}}
\def\mF{\mathcal F}
\renewcommand{\arraystretch}{1.3}
\numberwithin{equation}{section}
\begin{document}
\begin{center}
{\bf\Large Do more observations bring more information in rare events?}\\
\bigskip

Danyang Huang$^{1}$, Liyuan Wang$^1$, and Liping Zhu$^{2,*}$

{\it\small
$^1$  Center for Applied Statistics, School of Statistics, Renmin University of China;\\
$^2$  Institute of Statistics and Big Data, Renmin University of China.


}

\end{center}

\begin{footnotetext}[1]
{Liping Zhu is the corresponding author, Institute of Statistics and Big Data, Renmin University of China, Beijing 100872, P.R. China. Email: zhu.liping@ruc.edu.cn. This work was supported by the National Natural Science Foundation of China (grant numbers 72471230, 12071477) and the MOE Project of Key Research Institute of Humanities and Social Sciences (22JJD110001), and the fund for building world-class universities (disciplines) at Renmin University of China. The authors gratefully acknowledge the support of Public Computing Cloud, Renmin University of China.}
\end{footnotetext}

\begin{singlespace}
\begin{abstract}

 It is generally believed that more observations provide more information. However, we observe that in the independence test for rare events, the power of the test is, surprisingly, determined by the number of rare events rather than the total sample size. Moreover, the correlations tend to shrink to zero even as the total sample size increases, as long as the proportion of rare events decreases. We demonstrate this phenomenon in both fixed and high-dimensional settings. To address these issues, we first rescale the covariances to account for the presence of rare events. We then propose a boosted procedure that uses only a small subset of non-rare events, yet achieves nearly the same power as using the full set of observations. As a result, computational complexity is significantly reduced. The theoretical properties, including asymptotic distribution and local power analysis, are carefully derived for both the rescaled statistic based on the full sample and the boosted test statistic based on subsampling. Furthermore, we extend the theory to multi-class rare events. Extensive simulations and real-world data analyses confirm the effectiveness and computational efficiency of the proposed approach.
\\
\noindent {\bf Key Words: } Independence test; Rare Event; Generalized U-statistics; Large-Scale Datasets.\\

\end{abstract}
\end{singlespace}

\newpage

\csection{INTRODUCTION}

Big data often contains rare events, also known as imbalanced data, where the number of occurrences (one class of the outcome) is significantly lower than the number of non-occurrences (the other class). Rare event data are prevalent across many scientific fields, including fraud detection \citep{phua2010comprehensive,chung2023credit}, medical diagnosis \citep{chawla2004special,haixiang2017learning,you2024sequential}, and customer churn prediction \citep{huang2013effective,verbeke2012building}. Following previous literature, we refer to events as {\it cases} and non-events as {\it controls}, with corresponding sample sizes denoted as $n_1$ and $n_0$. In rare event analysis, a key challenge is identifying features that depend on the class label through independence tests.

Independence testing plays a fundamental role in statistical analysis
\citep{hoeffding1948ANonParametric,gretton2007kernel,berrett2019nonparametric,li2020projective,shi2022distribution,deb2023multivariate,tong2023model}. Classical test statistics include Pearson correlation, Spearman's $\rho$ \citep{spearman1904proof}, and Kendall's $\tau$ rank correlation \citep{kendall1938new}. In recent years, independence tests have gained increasing attention, focusing on multivariate dependence testing and addressing high-dimensional data challenges \citep{runze2022linear,xu2024reducing}. Popular independence test statistics include, but are not limited to, distance correlation \citep{szekely2007measuring,sejdinovic2013equivalence,gao2021asymptotic}, projection correlation \citep{zhu2017projection,zhang2024projective}, and energy statistics \citep{szekely2013energy,deb2023multivariate}. These methods have demonstrated strong potential for capturing complex dependencies in high-dimensional settings. It is commonly believed that more data bring more information. Traditional theory suggests that larger sample sizes lead to better test performance. It is natural to ask whether this holds in the context of independence tests for rare events.


When the sample sizes of cases and controls are comparable, the test statistic under the null hypothesis depends on the total sample size $n = n_1 + n_0$. However, when $n_1/n_0 \to 0$, this conclusion becomes problematic. Consider a toy example where \( Y \) follows a Bernoulli distribution with parameter \( p_n \), and assume \( p_n \to 0 \) as the sample size \( n \) increases.
We evaluate the performance of Pearson correlation and Kendall's \(\tau\) rank correlation under two scenarios for the distribution of \( Y \). In both cases, when \( Y = 1 \), \( X \) is drawn i.i.d. from \( N(0.2,1) \), and when \( Y = 0 \), \( X \) is drawn from \( N(0,1) \).
\begin{figure}[htbp]
	\centering
\includegraphics[width=1\textwidth]{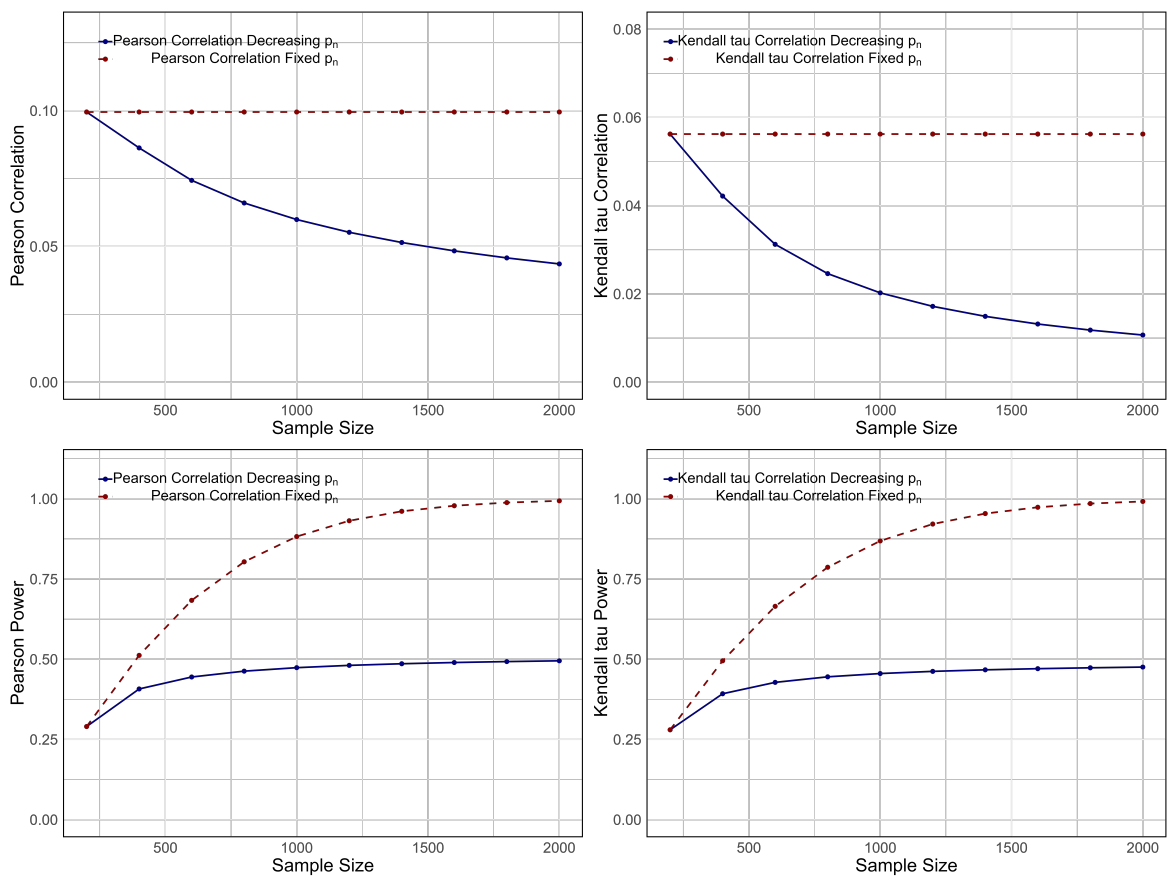}
	\caption{The upper row displays the expected correlation, while the lower row illustrates the test power. The left two plots show the results for Pearson correlation, while the right two plots correspond to Kendall's \(\tau\) rank correlation. The dashed lines represent the case with fixed \(p_n = 0.5\) (Scenario 1), whereas the solid lines correspond to decreasing \(p_n\) (Scenario 2).}\label{fig:rare_event}
\end{figure}

\noindent
Scenario 1. (Fixed $p_n$) For sample sizes ranging from 200 to 2,000, the marginal probability of $Y=1$ is fixed at $p_n=0.5$, which is balanced for all sample sizes.

\noindent
Scenario 2. (Decreasing $p_n$) With $n_1=100$ fixed, as $n$ increases from 200 to 2,000, the occurrence of $Y=1$ transitions to a rare event with $p_n$ decreasing from 0.5 to 0.025.

We examine the Pearson and Kendall \(\tau\) rank correlation and the test power, as shown in Figure \ref{fig:rare_event}. In Scenario 1, with fixed \(p_n\), the correlation remains stable as \(n\) grows, and the power of both tests rapidly approaches 1. In Scenario 2, as \(n\) increases, the correlation decreases, and the power stabilizes at a value significantly below 1 when the probability of cases decreases. This indicates that increasing the sample size does not enhance the identification of the correlation or reduce the Type II error as long as the proportion of cases decrease. The power seems to depend on the number of cases rather than the overall sample size, challenging classical theory. Thus, the key issue in rare event data is to rescale the statistic and analyze its rate of convergence.

Based on this phenomenon, an important question arises: can we achieve comparable test performance using a smaller subset of data proportional to the number of rare events? Rare event data inherently require sampling to mitigate computational challenges and improve estimation accuracy.
Two common importance sampling techniques could be used for imbalanced data: under-sampling controls \citep{drummond2003c4,liu2008exploratory} or over-sampling cases \citep{chawla2002smote,douzas2017improving}. For imbalanced data with \(c_1 < n_1/n_0 < c_2\) in logistic regression, \cite{fithian2014local} proposed a subsampling method assigning higher probability to controls similar to cases. In feature screening, \cite{xie2020fused} generalized Kendall's $\tau$ statistic using under-sampling controls and proposed a guaranteed screening procedure. For extremely imbalanced data with \(n_1/n_0 \rightarrow 0\), \cite{wang2020logistic} and \cite{wang2021nonuniform} showed that under-sampling achieves the same convergence rate as the full sample. \cite{song2024minimax} studied minimax risk bounds for imbalanced data, showing bounds depend on the effective sample size \( \kappa n \), where \( \kappa \) represents the maximum class imbalance. For high-dimensional linear discriminant analysis, \cite{mojiri2022new} examined the effects of imbalanced class sizes. Under-sampling also improves performance in machine learning methods, such as decision trees \citep{lin2017clustering}, support vector machines \citep{bao2016boosted}, and others \citep{spelmen2018review}.

Inspired by previous literature, this work focuses on the theoretical analysis of independence tests for rare events with \(n_1/n_0 \to 0\). We first rescale the test statistic and establish its asymptotic properties, demonstrating that the convergence rate is primarily determined by the number of rare events \(n_1\), rather than the total sample size \(n\). Many classical test statistics exhibit similar behavior and can be rescaled accordingly. To reduce computational costs in large-scale data, we propose a Boosted Independence Test (BIT) procedure that uses only a small subset of non-rare events, achieving nearly the same power as using the full sample. We investigate and compare the theoretical properties of both the rescaled statistics and the newly proposed one, considering both fixed and high-dimensional settings. Finally, we extend our results to accommodate multiple rare events, broadening the applicability of the framework.

The rest of the paper is organized as follows. In Section 2, we introduce the rescaled independence test for rare event scenarios and provide its asymptotic properties. Section 3 presents the results of simulation studies and real data analyses that demonstrate the effectiveness of the proposed method. Finally, in Section 4, we conclude with remarks and discuss potential future research directions.

\csection{BOOSTED INDEPENDENCE TEST FOR RARE EVENT}
\csubsection{Notations and Background}

Let \((X_i, Y_i)\) be the observation from the \(i\)-th subject, where \(1 \leq i \leq n\). Here, \(Y_i \in \{0,1\}\) is the class label and \(X_i = (X_{i1}, \dots, X_{ip})^\top \in \mR^p\) is the associated \(p\)-dimensional feature vector. We begin with binary \(Y\) and later extend to multi-category cases.
For the independence test, the null hypothesis is that the distribution of \(X\) is independent of \(Y\). Many correlation measures for balanced data can be framed within a U-statistic framework \citep{serfling2009approximation,li2020distributed}, which is given by,
\begin{equation}
	\label{eq:U_stat}
	U =  \binom{n}{m}^{-1}  \sum_{1 \leqslant i_1 < i_2 < \cdots < i_m \leqslant n} h(X_{i_1}, X_{i_2}, \ldots, X_{i_m}),
\end{equation}
where $m$ is the number of observations in each combination, and $h$ is a $m$-ary symmetric kernel function with zero mean. U-statistics are well-established theoretically \citep{hoeffding1948AClass, brown1978reduced, randles1979introduction, nolan1987u, huskova1993consistency, AOP2013, lee2019u}.
 By \citep{serfling2009approximation}, we define the $c$-observation projection as $h_c( X_{1},\cdots,X_{c} )=\mathrm{E}\{ h(X_{1}, X_{2}, \ldots, X_{m}) \mid  X_{1},\cdots,X_{c}\}$
for $1 \leqslant c<m $. The asymptotic distributions for $c$=1 and $c$=2 are established in Lemma \ref{lemma_tradition_U}.
\begin{lemma}
	\label{lemma_tradition_U}
	(1) \citep{hoeffding1948AClass} If $\xi_1 = \operatorname{Var} \big\{ h_1(X_1)\big\}  > 0$, then the U-statistic satisfies
	$n^{1/2} U \xrightarrow{d} N(0,m^2 \xi_1),$
	where $\xrightarrow{d}$ denotes convergence in distribution.

	(2) \citep{GREGORY1977} If $\xi_1 = 0$ but $\xi_2 =\operatorname{Var} \big\{ h_2(X_1,X_2)\big\}  > 0$, then we have
	$n U \xrightarrow{d} m(m-1)\eta_0/2,$
	where $\eta_0=\sum_{j=1}^{\infty} \lambda_j\left(\chi_{1 j}^2-1\right)$ and $\chi_{1j}^2$s are independent $\chi_1^2$ variables, and $\lambda_j$ is eigenvalues of the linear operator $A$, with $A g(x) = \int_{-\infty}^{\infty} h_{2}(x,y) g(y) dF(y)$  mapping a function $g(x)$ to another one $A g(x)$.
\end{lemma}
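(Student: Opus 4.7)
The plan is to use Hoeffding's H-decomposition as the unifying tool for both parts. I would first rewrite the U-statistic as $U = \sum_{c=1}^m \binom{m}{c} H_c^{(n)}$, where each $H_c^{(n)}$ is a completely degenerate U-statistic built from a canonical kernel $g_c$ obtained by inclusion--exclusion from the projections $h_1,\ldots,h_c$ (so $g_1 = h_1$, $g_2(x,y) = h_2(x,y) - h_1(x) - h_1(y)$, and so on). The components $H_c^{(n)}$ are mutually uncorrelated, and a direct variance calculation yields $\operatorname{Var}(H_c^{(n)}) = O(n^{-c})$ whenever $\operatorname{Var}(g_c)$ is finite. This variance ranking identifies the dominant term under each regime.

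For part (1), since $\xi_1 > 0$ the leading contribution is linear: $m H_1^{(n)} = (m/n)\sum_{i=1}^n h_1(X_i)$ is an average of iid mean-zero variables with variance $\xi_1$. The classical central limit theorem gives $n^{1/2}\, m H_1^{(n)} \xrightarrow{d} N(0, m^2\xi_1)$. The remainder $\sum_{c\geq 2}\binom{m}{c} H_c^{(n)}$ has standard deviation of order $n^{-1}$, hence is $o_p(n^{-1/2})$, and Slutsky's theorem delivers the stated limit.

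For part (2), the assumption $\xi_1 = 0$ forces $h_1 \equiv 0$ almost surely (because $\E h_1(X_1) = \E h(X_1,\ldots,X_m) = 0$), so the canonical kernel of order two coincides with $h_2$ and the dominant term is $\binom{m}{2} H_2^{(n)} = \{m(m-1)/[n(n-1)]\}\sum_{i<k} h_2(X_i,X_k)$. Here I would invoke Mercer's theorem to expand $h_2(x,y) = \sum_{j=1}^\infty \lambda_j \phi_j(x)\phi_j(y)$ in $L^2(F\otimes F)$, where $\{\phi_j\}$ are orthonormal eigenfunctions of the operator $A$ and satisfy $\E\{\phi_j(X)\} = 0$ (a consequence of $h_1\equiv 0$). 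Substituting and using the identity $\sum_{i<k} \phi_j(X_i)\phi_j(X_k) = \tfrac{1}{2}\{(\sum_i \phi_j(X_i))^2 - \sum_i \phi_j(X_i)^2\}$, the CLT applied to $n^{-1/2}\sum_i \phi_j(X_i)$ yields independent $N(0,1)$ limits $Z_j$, while the law of large numbers gives $n^{-1}\sum_i \phi_j(X_i)^2 \to 1$. Termwise convergence then produces $\lambda_j(Z_j^2-1)/2$, and summing delivers the weighted chi-square limit $m(m-1)\eta_0/2$.

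The hard part, as expected, is the spectral step in part (2): one must justify interchanging the limit in $n$ with the infinite Mercer series. I would handle this by truncating at level $J$, establishing the finite-dimensional limit for $(n^{-1/2}\sum_i \phi_1(X_i),\ldots,n^{-1/2}\sum_i \phi_J(X_i))$ via the Cram\'{e}r--Wold device, and controlling the tail through $\sum_{j>J}\lambda_j^2$, which is finite because $\xi_2 = \sum_j \lambda_j^2 < \infty$ and vanishes as $J\to\infty$. This simultaneously bounds the pre-limit variance contribution from the truncated tail and the tail mass of the candidate limit law, which together with a tightness argument closes the gap. The higher-order Hoeffding terms $H_c^{(n)}$ for $c\geq 3$ contribute only $o_p(n^{-1})$ and are absorbed by a final application of Slutsky.
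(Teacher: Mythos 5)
Your proposal is a correct reconstruction of the classical proofs, but note that the paper does not prove this lemma at all: it is quoted from \citet{hoeffding1948AClass} and \citet{GREGORY1977}, and the only place the paper carries out comparable arguments is the proof of Theorem \ref{thm:gen_U} for the two-sample statistic $T$. There the authors use the projection method directly --- compute $\operatorname{Var}(T)$ term by term, define the H\'{a}jek projection $V$, show $\operatorname{Var}\{n^{1/2}(T-V)\}\to 0$ via the covariance identities of Lemma \ref{lemma:cor_hh_rs}, and for the degenerate case delegate the weighted chi-square limit entirely to Lemma \ref{lemma:second_order_U_asym} (Serfling). Your route through the full orthogonal Hoeffding decomposition is equivalent for part (1) (the first canonical term is exactly the projection $V$, and bounding $\operatorname{Var}(H_c^{(n)})=O(n^{-c})$ for $c\geq 2$ is the same computation as $\operatorname{Var}(T-V)\to 0$), while for part (2) you actually carry out the spectral argument the paper outsources; your truncation-plus-tail-variance scheme with $\sum_{j>J}\lambda_j^2\to 0$ is the standard and correct way to close it. One technical imprecision: what you need is not Mercer's theorem (which requires a continuous, positive semidefinite kernel on a compact space and yields uniform convergence) but the spectral theorem for the compact self-adjoint Hilbert--Schmidt operator $A$ on $L^2(F)$, which gives $h_2(x,y)=\sum_j\lambda_j\phi_j(x)\phi_j(y)$ converging only in $L^2(F\otimes F)$ and allows negative $\lambda_j$; since your argument only ever uses $L^2$ convergence and $\sum_j\lambda_j^2=\xi_2<\infty$, this substitution is harmless, but the citation should be corrected.
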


\noindent
Considering Kendall's $\tau$ rank correlation in the toy example from the introduction, it can be verified that,
\begin{equation}
    \label{eq:tau_origion}
    \hat{\tau} = \frac{2}{n(n-1)} \sum_{1 \leqslant i < j \leqslant n} \text{sgn}(X_i - X_j) \text{sgn}(Y_i - Y_j),
\end{equation}
where the kernel function is \( h\{(X_i, Y_i), (X_j, Y_j)\} = \text{sgn}(X_i - X_j) \text{sgn}(Y_i - Y_j) \). For fixed \(p_n = p_0\), if \( X \) and \( Y \) are independent, the asymptotic distribution is \(n^{1/2} \hat{\tau} \xrightarrow{d} N(0, 4 p_0 (1 - p_0) / 3)\). When \( X \) and \( Y \) are  dependent, \(\mathrm{E} \hat{\tau} =2\{2 \mbox{P}(X_1 > X_2 \mid Y_1 = 1, Y_2 = 0) - 1\} p_0 (1 - p_0) \). For \( Y_i = Y_j \), \( \operatorname{sgn}(Y_i - Y_j) = 0 \), which leads to the number of non-zero terms being \( n_0 n_1 \), so \( \left|\sum \operatorname{sgn}(X_i - X_j) \operatorname{sgn}(Y_i - Y_j)\right| \leqslant n_0 n_1 \). Then, for extremely imbalanced data with \( p_n \to 0 \), \( n^{1/2} \hat{\tau} \xrightarrow{p} 0 \) under \( \mbox{H}_0 \), and \( \mathrm{E} \hat{\tau} \to 0 \) under \( \mbox{H}_1 \), making the test statistic ineffective. This work focuses on the independent test in this extremely imbalanced data.

 The kernel's asymmetry makes projection-based methods in classical U-statistics infeasible. \citet{sen1974weak} used Brownian motion to establish the asymptotic normality of non-degenerate generalized U-statistics, later extended by \citet{neuhaus1977functional} to degenerate cases. \citet{lee2019u} introduced generalized U-statistics without comprehensive theoretical analysis. Most prior research focuses on balanced scenarios, leaving imbalanced cases largely unexplored.


\csubsection{Rescaled Independence Test for Rare Event}

For the theoretical analysis of the test statistic in an imbalanced dataset, we treat \( Y = 0 \) and \( Y = 1 \) as two distinct types of observations: \( \mathcal{D}^0 = \{ X_1^{(0)}, X_2^{(0)}, \dots, X_{n_0}^{(0)} \} \) for \( Y = 0 \), and \( \mathcal{D}^1 = \{ X_1^{(1)}, X_2^{(1)}, \dots, X_{n_1}^{(1)} \} \) for \( Y = 1 \). Inspired by the concept of generalized U-statistics \citep{sen1974weak,neuhaus1977functional,lee2019u}, we define the following statistic to estimate the covariance between \( X \) and \( Y \) for rare events:
\begin{equation}
	\label{eq:k2_U_form}
	T = \binom{n_0}{m_0}^{-1} \binom{n_1}{m_1}^{-1} \sum_c h\left(X_{i^0_{1}}^{(0)}, \cdots,X_{i^0_{m_0}}^{(0)} ;X_{i^1_{1}}^{(1)}, \cdots,X_{i^1_{m_1}}^{(1)}\right),
\end{equation}
where $h(\cdot)$ is assumed to be zero-mean and symmetric within each $\{X_{i^k_{1}}^{(k)}, \cdots, X_{i^k_{m_k}}^{(k)}\}$ for $k = 0, 1$; $\left\{i^k_{1}, \ldots, i^k_{m_k}\right\}$ denotes a set of $m_k$ distinct elements from the index set $\left\{1, 2, \ldots, n_k\right\}$; and $\sum_c$ denotes the summation over all possible combinations. We refer to the statistic in \eqref{eq:k2_U_form} as a rescaled independence test statistic for rare events, which is referred to as RIT for convenience.
 Note that the number of (non-zero) terms in the sum equals the denominator in the form of \eqref{eq:k2_U_form}. Therefore, if $X$ and $Y$ are not independent, $T$ can be verified to converge to a non-zero constant as the number of cases increases. For example, the test statistic generalized from Kendall's \( \tau \) can be defined with \( h(X^{(0)}, X^{(1)}) = \operatorname{sgn}(X^{(1)} - X^{(0)}) \) as,
\begin{equation}
 		\label{eq:tau_with_U_form}
 		T_{\tau}= \frac1{n_0 n_1} \sum_{i=1}^{n_1} \sum_{j=1}^{n_0}  \operatorname{sgn}\left( X_{i}^{(1)} - X_{j}^{(0)}\right) .
\end{equation}
\noindent

To establish the statistical properties of the test statistic in (\ref{eq:k2_U_form}), we define the $\left(a, b\right)$-observation projection as
\begin{equation}
	\label{eq:projection_h_ab}
	\begin{aligned}
		&h_{a,b} \left( X_{1}^{(0)} , \cdots,X_{a}^{(0)};X_{1}^{(1)} , \cdots,X_{b}^{(1)}\right)  \\
		= & \mathrm{E}\left\lbrace  h\left(X_{1}^{(0)}, \cdots,X_{m_0}^{(0)} ;X_{1}^{(1)}, \cdots,X_{m_1}^{(1)}\right) \mid  X_{1}^{(0)} , \cdots,X_{a}^{(0)}; X_{1}^{(1)} , \cdots,X_{b}^{(1)} \right\rbrace,
	\end{aligned}
\end{equation}
for $0 \leqslant a < m_0$ and $0 \leqslant b < m_1$. Define $\xi_{a,b} = \operatorname{Var} \{ h_{a,b} ( X_{1}^{(0)}, \cdots, X_{a}^{(0)}; X_{1}^{(1)}, \cdots, X_{b}^{(1)} ) \}$. Then we then investigate the theoretical properties of $T$ in the scenario of rare events. 

\begin{theorem} (The asymptotic properties of $T$)
    \label{thm:gen_U}
    (1) Under the null hypothesis and the condition of rare events $n_1/n \rightarrow 0$, if $\xi_{0,1} = \operatorname{Var} \{h_{0,1}( X_1^{(1)} ) \} > 0$, then the RIT satisfies
    $n_1^{1/2} T \xrightarrow{d} N\left( 0, {m_1}^2 \xi_{0,1} \right)$.
    (2) Assume $\xi_{0,1} = 0$ and $\xi_{0,2} > 0$. If at least one of the following conditions holds: (a) $\xi_{1,0} = 0$, or (b) $n_1^2 / n_0 \rightarrow 0$, then we have
   $ n_1 T \xrightarrow{d} m_1(m_1 - 1) \eta / 2$,
    where $\eta = \sum_{j=1}^{\infty} \lambda_j\left( \chi_{1 j}^2 - 1 \right)$, $\chi_{1j}^2$ are independent $\chi_1^2$ variables, and $\lambda_j$ are weights depending on the distribution of $X^{(1)}$ and $h_{0,2}(\cdot)$.
\end{theorem}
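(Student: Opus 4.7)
The plan is to apply a Hoeffding-style decomposition tailored to the two-sample U-statistic $T$ and then identify, in each scenario, the single projection that survives the proposed rescaling. I would write
$$T = \sum_{(a,b) \neq (0,0)} \binom{m_0}{a}\binom{m_1}{b}\, U_{n_0,n_1}^{(a,b)},$$
where each canonical statistic $U_{n_0,n_1}^{(a,b)}$ is a bi-order U-statistic built from the centered, orthogonalized projections $g_{a,b}$ obtained from $h_{a,b}$ by inclusion--exclusion. Under $H_0$ with $\mathrm{E}(h)=0$ these canonical terms are pairwise uncorrelated, and $\operatorname{Var}(U_{n_0,n_1}^{(a,b)})$ is of order $n_0^{-a} n_1^{-b}$ whenever $\xi_{a,b}>0$. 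This decomposition is the engine for selecting the dominant term in each part.

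For Part (1), the $(0,1)$ contribution equals $m_1 n_1^{-1}\sum_{i=1}^{n_1} h_{0,1}(X_i^{(1)})$, a sum of i.i.d.\ centered variables with variance $\xi_{0,1}$; after multiplying by $n_1^{1/2}$ the Lindeberg--L\'evy CLT delivers the claimed normal limit $N(0, m_1^2\xi_{0,1})$. Every other $(a,b)$ term has rescaled variance at most $n_1^{1-b}/n_0^a$: for $b=0,\ a\geq 1$ this reads $n_1/n_0^a\to 0$ under $n_1/n\to 0$, while for $b\geq 2$ or $a\geq 1,\ b\geq 1$ it vanishes trivially. A Slutsky-type combination of the CLT on the dominant term with variance bounds on the remainders then gives the claim.

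For Part (2), $\xi_{0,1}=0$ annihilates the $(0,1)$ projection, leaving the $(0,2)$ piece
$$\binom{m_1}{2}\binom{n_1}{2}^{-1}\sum_{1\leq i<j\leq n_1} h_{0,2}(X_i^{(1)}, X_j^{(1)})$$
as the candidate leading term. This is $\binom{m_1}{2}$ times a one-sample degenerate U-statistic in the rare-event sample, so Lemma~\ref{lemma_tradition_U}(2) yields $n_1 T\xrightarrow{d} m_1(m_1-1)\eta/2$. The $(1,0)$ term has rescaled variance $n_1^2 m_0^2\xi_{1,0}/n_0$, which vanishes under either condition (a) $\xi_{1,0}=0$ or condition (b) $n_1^2/n_0\to 0$; pure-$X^{(0)}$ projections $(a,0)$ with $a\geq 2$ contribute $O(n_1^2/n_0^a)$ and vanish because $n_1/n_0\to 0$; cross terms with $a\geq 1,\ b\geq 1$ contribute at most $n_1/n_0\to 0$; and higher pure-$X^{(1)}$ projections $(0,b)$ with $b\geq 3$ contribute $O(n_1^{2-b})\to 0$.

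The main obstacle is the asymmetric scaling: the classical two-sample U-statistic theory of Sen (1974) and Neuhaus (1977) assumes $n_0/n_1$ converges to a finite positive constant, so their Brownian-motion embeddings and CLTs do not apply directly when $n_1/n_0\to 0$. Carrying the plan through requires explicit bookkeeping of the hypergeometric coefficients $\binom{m_k}{a}\binom{n_k-m_k}{m_k-a}/\binom{n_k}{m_k}$ to confirm that the heuristic variance rate $n_0^{-a}n_1^{-b}$ is actually attained, and, under condition (b) of Part (2), the non-degenerate $(1,0)$ piece and the degenerate $(0,2)$ piece operate at different speeds, so the negligibility of the $(1,0)$ remainder has to be established in probability via Markov's inequality rather than distributionally.
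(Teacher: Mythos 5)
Your proposal is correct and arrives at the same limits, but it is organized around the full two-sample Hoeffding decomposition into orthogonal canonical components $U^{(a,b)}_{n_0,n_1}$, whereas the paper uses the H\'ajek projection method: it singles out one candidate leading term, namely $V = m_1 n_1^{-1}\sum_{i} h_{0,1}(X_i^{(1)})$ in part (1) and $V = m_1(m_1-1)n_1^{-1}(n_1-1)^{-1}\sum_{i<j} h_{0,2}(X_i^{(1)},X_j^{(1)})$ in part (2), computes $\operatorname{Var}(T)$, $\operatorname{Var}(V)$ and $\operatorname{Cov}(T,V)$ exactly via a counting lemma (the identity $\mathrm{E}(h\,h')=\xi_{r,s}$ when the two index sets share $r$ controls and $s$ cases, Lemma 2 of the appendix), and concludes $\operatorname{Var}\{n_1^{1/2}(T-V)\}\to 0$ (respectively with the $n_1$ rescaling), after which the i.i.d.\ CLT, respectively the one-sample degenerate U-statistic limit from Serfling, finishes the argument --- the same ingredient you invoke through Lemma 1(2). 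The two routes rest on identical combinatorics: the variance orders $n_0^{-a}n_1^{-b}$ you attribute to orthogonality are exactly what the paper's explicit counts of overlapping indices deliver, and your leading terms coincide with the paper's $V$ because $g_{0,1}=h_{0,1}$ and, once $\xi_{0,1}=0$, $g_{0,2}=h_{0,2}$. What your decomposition buys is a uniform, one-shot bound on all remainder components and a transparent explanation of why condition (a) or (b) is needed (the $(1,0)$ component alone contributes variance of order $n_1^2\xi_{1,0}/n_0$ after rescaling by $n_1$); what the paper's projection buys is that it never constructs the orthogonalized kernels $g_{a,b}$, needing only covariance identities for the raw projections $h_{a,b}$. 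Your closing caveats are apt: the Sen--Neuhaus theory indeed presumes comparable sample sizes, which is precisely why the paper redoes the variance bookkeeping from scratch, and the negligibility of the $(1,0)$ piece under condition (b) is established in probability via Chebyshev plus Slutsky, exactly as you propose.
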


\noindent
See Appendix C for the detailed proof of Theorem \ref{thm:gen_U}. Notably, Theorem \ref{thm:gen_U} indicates that in rare event analysis, the convergence rate of the distribution of $T$ is determined solely by \(n_1\). This conclusion differs entirely from that obtained in Lemma \ref{lemma_tradition_U} for balanced data.
Note that $h_{0,2}(X_1^{(1)}, X_2^{(1)})$ differs from $h_2(X_1, X_2)$ in Lemma \ref{lemma_tradition_U}, as it is a projection onto the observations of cases, with the information from controls integrated out.  Thus, \( h_{0,2}(X_1^{(1)}, X_2^{(1)}) \) is a function that captures information from both cases and controls. Additionally, while Theorem \ref{thm:gen_U} provides the asymptotic distribution of \( n_1 T \) under the null hypothesis, this distribution is intractable due to the unknown weights \( \lambda_j \), which cannot be easily computed from the distribution of \( X \). Therefore, the random permutation method is typically used to approximate this distribution in independence tests \citep{berrett2020conditional,zhang2024projective}.

Theorem \ref{thm:gen_U} shows that the available information is determined by the scale of \( n_1 \), and a large sample size \( n \) does not necessarily imply a large amount of information. This is a key conclusion in rare event scenarios, emphasizing the importance of discussing the boosted independence test theory. Specifically, when \( \xi_{0,1} \neq 0 \), we refer to this test statistic as a {\it first-order RIT}. When \( \xi_{0,1} = \xi_{1,0} = 0 \) and \( \xi_{0,2} > 0 \), we require the imbalance to satisfy \( n_1 / n_0 \to 0 \). When \( \xi_{0,1} = 0 \), \( \xi_{1,0} \neq 0 \), and \( \xi_{0,2} > 0 \), a rarer condition \( n_1^2 / n_0 \to 0 \) is required.
We define the test statistic in these two cases as a {\it second-order RIT}. However, when \( \xi_{0,1} = 0 \), \( \xi_{1,0} \neq 0 \), \( \xi_{0,2} > 0 \), and \( n_1^2 / n_0 \nrightarrow 0 \), the asymptotic variance of \( T \) could still be influenced by the term \( n_0^{-1} {m_0}^2 \xi_{1,0} \). For simplicity, this case is not included in Theorem \ref{thm:gen_U}.

Note that determining the asymptotic distribution of the second-order RIT is computationally challenging. We then examine a special case where \( X \) is high-dimensional and refer to this test statistic as \( T_{\infty} \).
Further define $G(x, y) = \mathrm{E} \{ h_{0,2}(X^{(1)}, x) h_{0,2}(X^{(1)}, y) \}$. Then, we have the following theorem.

\begin{theorem} (The asymptotic properties of $T_\infty$)
    \label{thm:T_div_p}
    Under the null hypothesis and $n_1/n\rightarrow 0$, if $\xi_{0,1} = \xi_{1,0} = 0$, $p \rightarrow \infty$, and $h_{0,2}(\cdot, \cdot)$ depending on $n_1$ satisfying
    \begin{equation}
        \label{eq:condition_clt}
        \frac{\mathrm{E}\left\lbrace G^2\left( X^{(1)}_1, X^{(1)}_2 \right) \right\rbrace + n_1^{-1} \mathrm{E}\left\lbrace h_{0,2}^4\left( X^{(1)}_1, X^{(1)}_2 \right) \right\rbrace}{\mathrm{E}\left\lbrace h_{0,2}^2\left( X^{(1)}_1, X^{(1)}_2 \right) \right\rbrace^2} \rightarrow 0,
    \end{equation}
    then $n_1 T_\infty / \xi_{0,2}^{1/2}$ is asymptotically normally distributed with zero mean and variance $m_1^2 (m_1 - 1)^2 / 2$ as $n_1 \to \infty$.
\end{theorem}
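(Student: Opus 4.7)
The plan is to reduce the rescaled statistic $n_1 T_\infty$ to a completely degenerate second-order U-statistic built only from the case observations, and then to invoke a central limit theorem for degenerate U-statistics in the spirit of Hall (1984) using the variance and fourth-moment control provided by condition (3.10). This parallels the proof of Theorem 1 (2), except that the growing dimension $p \to \infty$ now pushes the chi-squared mixture limit of the degenerate U-statistic into a Gaussian regime.

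First I would perform the Hoeffding-type decomposition of the two-sample generalized U-statistic $T$. Since $\xi_{0,1} = \xi_{1,0} = 0$, all $(a,b)$-projections with $a + b = 1$ vanish, and the leading component is the $(0,2)$ projection. Combinatorial bookkeeping identical to case (a) of the proof of Theorem 1 (2) yields
\begin{equation*}
n_1 T_\infty \;=\; \frac{m_1(m_1-1)}{2}\, n_1 U_{n_1} + R_{n_1}, \qquad U_{n_1} \;=\; \binom{n_1}{2}^{-1} \sum_{1 \le i < j \le n_1} h_{0,2}\!\left(X_i^{(1)}, X_j^{(1)}\right),
\end{equation*}
where $R_{n_1}$ collects the $(a,b)$-projections with $a + b \ge 3$. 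Orthogonality of the Hoeffding components together with $\xi_{0,1} = \xi_{1,0} = 0$ controls $\mathrm{Var}(R_{n_1})/\xi_{0,2}$, and a direct order-of-magnitude comparison gives $R_{n_1}/\xi_{0,2}^{1/2} = o_p(1)$.

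Next I would apply the martingale CLT to $U_{n_1}$. Writing $\sum_{i<j} h_{0,2}(X_i^{(1)}, X_j^{(1)}) = \sum_{j=2}^{n_1} D_j$ with $D_j = \sum_{i<j} h_{0,2}(X_i^{(1)}, X_j^{(1)})$, the degeneracy $\mathrm{E}\{h_{0,2}(x, X_2^{(1)})\} = h_{0,1}(x) = 0$ makes $\{D_j\}$ a square-integrable martingale difference array for the natural filtration of the cases. Its conditional variance is
\begin{equation*}
\sum_{j=2}^{n_1} \mathrm{E}\!\left(D_j^2 \,\big|\, X_1^{(1)},\ldots,X_{j-1}^{(1)}\right) \;=\; \binom{n_1}{2}\, \xi_{0,2} \,\{1 + o_p(1)\},
\end{equation*}
whenever $\mathrm{E}\{G^2(X_1^{(1)}, X_2^{(1)})\}/\xi_{0,2}^2 \to 0$, which is the first half of (3.10); the conditional Lindeberg condition reduces to the fourth-moment ratio $n_1^{-1}\mathrm{E}\{h_{0,2}^4\}/\xi_{0,2}^2 \to 0$, which is the second half. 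Hence $n_1 U_{n_1}/\xi_{0,2}^{1/2} \xrightarrow{d} N(0, 2)$, and Slutsky combined with the $o_p$ remainder above produces
\begin{equation*}
\frac{n_1 T_\infty}{\xi_{0,2}^{1/2}} \;\xrightarrow{d}\; N\!\left(0,\; \frac{m_1^2(m_1-1)^2}{2}\right).
\end{equation*}

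The main obstacle I expect is the triangular-array aspect: the kernel $h_{0,2}$ depends on $n_1$ through the diverging dimension $p$, so both the conditional-variance calculation and the Lindeberg step must be carried out uniformly along the array. Condition (3.10) is tailored precisely to make these two checks immediate, so most of the actual work will lie in the careful identification of each $(a,b)$-projection in the Hoeffding decomposition, bounding the higher-order residual $R_{n_1}$, and verifying that the martingale difference structure survives once the control sample has been integrated out to produce $h_{0,2}$.
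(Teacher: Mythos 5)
Your proposal follows essentially the same route as the paper: reduce $n_1 T_\infty$ to the degenerate second-order projection on the cases (the paper reuses the variance calculation from Theorem~\ref{thm:gen_U}(2) to conclude $n_1(T-V)\xrightarrow{p}0$ with $V$ your $\tfrac{m_1(m_1-1)}{2}U_{n_1}$), then apply the martingale CLT to $Z_i=\sum_{j<i}h_{0,2}(X_i^{(1)},X_j^{(1)})$ with the Lyapunov and conditional-variance checks supplied by condition \eqref{eq:condition_clt}. The only minor inaccuracy is attributing the conditional-variance check solely to the $\mathrm{E}\{G^2\}$ half of the condition: the paper's bound on $\mathrm{E}\bigl(\sum_i v_i^2-s^2\bigr)^2$ also picks up a diagonal term $\mathrm{E}\{G^2(X_1^{(1)},X_1^{(1)})\}\leqslant \mathrm{E}\{h_{0,2}^4(X_1^{(1)},X_2^{(1)})\}$, so both halves of \eqref{eq:condition_clt} enter that step as well, but since both are assumed this does not affect the argument.
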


\noindent
The proof of Theorem \ref{thm:T_div_p} is given in Appendix C. Notably, for second-order RIT, we only consider the projections of the kernel function $h_{0,2}(X_i^{(1)}, X_j^{(1)})$. Define $Z_i = \sum_{j=1}^{i-1} h_{0,2}(X_i^{(1)}, X_j^{(1)})$, $V = \sum_{i=2}^{n_1} Z_i$, and $s^2 = \mathrm{E}(V^2)$.
Condition (\ref{eq:condition_clt}) implies the following two conditions required by the {Lyapunov} martingale central limit theorem:
$s^{-4} \sum_{i=2}^{n_1} \mathrm{E}(Z_i^4) \to 0$, and $s^{-2} \sum_{i=2}^{n_1} \mathrm{E}(Z_i^2 \mid X_1^{(1)}, \ldots, X_{i-1}^{(1)}) \xrightarrow{p} 1$.
 Therefore, the asymptotic normality of $V$ follows from the martingale central limit theorem, which subsequently establishes the asymptotic normality of $T_\infty$. Notably, when \( p \) is fixed, condition (\ref{eq:condition_clt}) converges to a non-zero constant, causing the theorem to no longer hold.

To make valid statistical inferences based on (\ref{eq:k2_U_form}), one must estimate \( \xi_{0,1} \) or \( \xi_{0,2} \). If \( \xi_{0,1} = \operatorname{Var} \{ h_{0,1} ( X_1^{(1)} ) \} > 0 \), for the kernel function in \eqref{eq:k2_U_form}, the computational complexity of calculating \( \hat{\xi}_{0,1} \) is \( O(n^{m_1 + m_0}) \). By Slutsky's theorem, a consistent estimate of \( \hat{\xi}_{0,1} \) is sufficient, and thus a computationally efficient estimator can be obtained through subsampling.

%
%
%

\csubsection{Discussion and Rescaling of Test Statistics in the RIT Framework}

To provide a more detailed analysis, we discuss five examples. For the first-order RIT, we consider those rescaled from Pearson correlation, Kendall's \( \tau \) rank correlation, and a newly designed imbalanced Kendall's \( \tau \) statistic. For the second-order RIT, we examine those generated from distance correlation and improved projection correlation. All related verifications about kernel functions are provided in Appendix B.

1. Rescaled Pearson $T_r$.
Consider the kernel function $h\left( X^{(0)}, X^{(1)} \right) = X^{(1)} - X^{(0)}$. Then the rescaled Pearson $T_r$ is defined as
$  T_r = n_1^{-1} \sum_{i=1}^{n_1} X_i^{(1)} - n_0^{-1} \sum_{j=1}^{n_0} X_i^{(0)}$.
Note that $\xi_{0,1} > 0$.
Define $\sigma_{\rm x}^2 = \operatorname{Var}(X)$. Under the null hypothesis and $n_1/n\rightarrow 0$, we have
${n_1}^{1/2} T_r \xrightarrow{d} N(0, \sigma_{\rm x}^2)$ as $n_1 \to \infty.$
The proof of this conclusion requires only the calculation of the variance of the conditional expectation of the kernel function $\xi_{0,1}$, which is $\operatorname{Var}(X)$ in this case.

2. Rescaled Kendall's $T_\tau$.
We have discussed the rescaled Kendall's $\tau$ with $T_\tau$ defined in \eqref{eq:tau_with_U_form}. The kernel function is $h(X^{(0)}, X^{(1)}) = \text{sgn}(X^{(0)} < X^{(1)})$. By verifying that $\xi_{0,1} = \operatorname{Var}(h_{0,1}) = 1/3 > 0$ and applying Theorem \ref{thm:gen_U}, 
we have
${n_1}^{1/2} T_\tau \xrightarrow{d} N(0, 1/3)$ as $n_1 \to \infty
$ under the null hypothesis and $n_1/n\rightarrow 0$.
We can similarly construct the rescaled Spearman's $T_\rho$, which could be verified to be the same with the Kendall's $\tau$.

3. Rescaled Imbalanced Kendall's \( T_m \). We introduce a new statistic \( T_m \) treating cases and controls unequally. The kernel function is defined as
$
h(X_1^{(0)}, \dots, X_m^{(0)}; X^{(1)}) = \text{sgn}( X^{(1)} > m^{-1} \sum_{k=1}^{m} X_{k}^{(0)}),
$
where information from one case and multiple controls is used. The value of \( \xi_{0,1} > 0 \) is distribution-specific, and \( \xi_{0,1} \neq \xi_{1,0} \). In the special case where \( X^{(0)} \) and \( X^{(1)} \) are drawn from \( N(0,1) \), we verify that
$
\xi_{0,1} = \int_{\mathbb{R}} \{1 - 2\Phi(m^{1/2}x)\}^2 \phi(x) \, dx
$
and
$
\xi_{1,0} = \int_{\mathbb{R}} \{1 - 2\Phi((m^2 + m - 1)^{-1/2}x)\}^2 \phi(x) \, dx,
$
where \( \Phi(x) \) and \( \phi(x) \) are the cumulative distribution function and probability density function of \( N(0,1) \), respectively. 
Thus, in this special case, under the null hypothesis and $n_1/n\rightarrow 0$, we have ${n_1}^{1/2} T_m \xrightarrow{d} N\left(0, \int_{\mathbb{R}} \{1 - 2\Phi(m^{1/2}x)\}^2 \phi(x) \, dx \right)\text{as } n_1 \to \infty$.


4. Rescaled Distance Covariance $T_{\rm dcov}$.
The kernel function for the rescaled distance covariance is defined as
\begin{equation*}
    \begin{aligned}
        h_{\operatorname{dcov}}(X_1^{(0)}, X_2^{(0)}, X_1^{(1)}, X_2^{(1)}) &= \left\| X_1^{(0)} - X_1^{(1)} \right\|_2 + \left\| X_1^{(0)} - X_2^{(1)} \right\|_2 \\
        &+ \left\| X_2^{(0)} - X_1^{(1)} \right\|_2 + \left\| X_2^{(0)} - X_2^{(1)} \right\|_2 \\
        &- 2 \left\| X_1^{(0)} - X_2^{(0)} \right\|_2 - 2 \left\| X_1^{(1)} - X_2^{(1)} \right\|_2.
    \end{aligned}
\end{equation*}
\noindent
The rescaled distance covariance $T_{\rm dcov}$ is defined as
\begin{equation*}
    T_{\rm dcov} = \binom{n_0}{2}^{-1} \binom{n_1}{2}^{-1} \sum_{i=1}^{n_0} \sum_{j > i} \sum_{k=1}^{n_1} \sum_{l > k} h_{\operatorname{dcov}}(X_i^{(0)}, X_j^{(0)}, X_k^{(1)}, X_l^{(1)}).
\end{equation*}
\noindent
It is a second-order RIT with $\xi_{0,1} = \xi_{1,0} = 0$ and $\xi_{0,2} > 0$. Define $D_{\operatorname{dcov}}(v) = \mathrm{E} \| v - X\|_2$ and $\gamma^* = \mathrm{E}\{D_{\operatorname{dcov}}(\tilde{X})\} $, where $\tilde{X}$ is an independent copy of $X$. Applying Theorem \ref{thm:gen_U}, we derive the following conclusion.
	Under the null hypothesis and $n_1/n\rightarrow 0$, (1) for fixed $p$, we have
	$n_1 T_{\rm dcov}  \xrightarrow{d} \eta_{\operatorname{dcov}}$ as $n_1 \to\infty$, where $\eta_{\operatorname{dcov}} = \sum_{j=1}^{\infty} \lambda^*_j (\chi_{1j}^2 - 1)$, $\lambda^*_j$s are weights depending on the distribution of $X^{(1)}$ and the projection kernel function $h_{0,2}(x,y) = D_{\operatorname{dcov}}(x) + D_{\operatorname{dcov}}(y) - \left\|x - y\right\|_2 - \gamma^*$; (2) for diverging $p$ satisfying (\ref{eq:condition_clt}), we have \( n_1 T_{\rm dcov} /\xi_{0,2}^{1/2} \xrightarrow{d} N(0,2) \), where
	$\xi_{0,2} = \mathrm{Var}\left\lbrace  h_{0,2} (X_1 ,X_2) \right\rbrace $,
	\( X_1 \) and \( X_2 \) are independent and identically distributed as \( X \).

5. Rescaled Projection Covariance $T_{\rm IPcov}$.
The kernel function is defined as
\begin{equation*}
    \begin{aligned}
        &h_{\operatorname{IPcov}}(X_1^{(0)}, X_2^{(0)}, X_1^{(1)}, X_2^{(1)}) = A(X_1^{(0)}, X_1^{(1)}) + A(X_1^{(0)}, X_2^{(1)})\\
        &~~~~~~~~+ A(X_2^{(0)}, X_1^{(1)}) + A(X_2^{(0)}, X_2^{(1)}) - 2 A(X_1^{(0)}, X_2^{(0)}) - 2 A(X_1^{(1)}, X_2^{(1)}),
    \end{aligned}
\end{equation*}
where $ A(X_1, X_2) = \mathrm{arccos} \left\{ (c_{\sigma^2} + X_1^{\top} X_2) (c_{\sigma^2} + X_1^{\top} X_1)^{-1/2}(c_{\sigma^2} + X_2^{\top} X_2)^{-1/2} \right\}$. Applying Theorem \ref{thm:gen_U}, we derive the similar conclusion with that for $T_{\rm dcov}$.
    Under the null hypothesis and $n_1/n\rightarrow 0$:
    (1) for fixed $p$, we have
    $
    n_1 T_{\rm IPcov} \xrightarrow{d} \eta_{\operatorname{IPcov}}$ as $n_1 \to \infty,
    $
    where $\eta_{\operatorname{IPcov}} = \sum_{j=1}^{\infty} \lambda^{\dagger}_j (\chi_{1j}^2 - 1)$ with $\lambda^{\dagger}_j$ correspondingly defined;
    (2) for diverging $p$ satisfying (\ref{eq:condition_clt}), we have
    $n_1 T_{\rm IPcov} / \xi_{0,2}^{1/2} \xrightarrow{d} N(0, 2)$, with appropriately defined $\xi_{0,2}$.

In fact, the RIT theory proposed in this work encompasses a broader range of test statistics. In the case of rare events where \(n_0\) and \(n_1\) satisfy certain conditions, as discussed, the convergence rate of the statistics is primarily determined by \(n_1\). For other statistics, it is also possible to define the corresponding kernel functions and derive the theoretical properties of these test statistics within the context of rare events.

\black

\csubsection{Boosted Independence Test based on Subsampling}

Theorem \ref{thm:gen_U} shows that the convergence rate of $T$ depends solely on the number of events $n_1$. This suggests that increasing the sample size for controls $n_0$, contributes the majority of the computational complexity but does not provide much improvement in the convergence rate.
If we can verify that a subsampled statistic exhibits asymptotic properties similar to those of the full sample, computational complexity could be significantly reduced by employing subsampling.

In this subsampling process, we retain all cases (i.e., $Y_i = 1$) and select a subset of controls (i.e., $Y_i = 0$). Let $\delta_i$ be a binary indicator for whether the $i$-th control is included in the sample, where $1 \leqslant i \leqslant n_0$. If the $i$-th control is included, define $\delta_i = 1$; otherwise, $\delta_i = 0$. We assume $\delta_i$ follows an independent Bernoulli distribution, i.e., $\delta_i \sim \operatorname{Bernoulli}(s n_1 / n_0)$, where $s \geqslant 2$ is a predefined integer. Given the kernel function $h(\cdot)$, we define the subsampling statistic, referred to as the boosted independent test statistic based on subsampling (BIT), as
\begin{equation}
    \label{eq:sample_U_form}
    T_{\rm S} = \binom{s n_1}{m_0}^{-1} \binom{n_1}{m_1}^{-1} \sum_c \left(\prod_{j=1}^{m_0} \delta_{i^0_j}\right) h\left(X_{i^0_1}^{(0)}, \cdots, X_{i^0_{m_0}}^{(0)} ; X_{i^1_1}^{(1)}, \cdots, X_{i^1_{m_1}}^{(1)}\right).
\end{equation}
\noindent
Notably, the statistic defined in \eqref{eq:sample_U_form} does not constitute a classical generalized $U$ statistic, since its randomness arises from both $X$ and $\delta$, and the term in the denominator, $s n_1$, does not precisely represent the actual number of sampled controls.
In (\ref{eq:sample_U_form}), incorporating $\delta_i$ into the summation ensures that a term is included only when all observations within the kernel function belong to the subsampled set. This reduces the number of non-zero terms in the summation from $\binom{n_0}{m_0} \binom{n_1}{m_1}$ to $\binom{s n_1}{m_0} \binom{n_1}{m_1}$. The statistic in (\ref{eq:sample_U_form}) is computed using $n_1$ cases and approximately $s n_1$ controls. This leads to the following theorem.

\begin{theorem} (The asymptotic properties of $T_{\rm S}$)
    \label{thm:sample_U}
    Under the null hypothesis and $n_1/n\rightarrow 0$:
    (1) If $\xi_{0,1} > 0$ or $\xi_{1,0} > 0$, and $\mathrm{E}\{ h_{1,0}^4(X_1^{(0)}) \} < \infty$, then the BIT in (\ref{eq:sample_U_form}) satisfies
    \begin{equation*}
        {n_1}^{1/2} T_{\rm S} \xrightarrow{d} N(0, {m_1}^2 \xi_{0,1} + m_0^2 \xi_{1,0} / s).
    \end{equation*}

    (2) If $\xi_{1,0} = \xi_{0,1} = 0$ but $\xi_{0,2} > 0$, $\xi_{2,0} > 0$, or $\xi_{1,1} > 0$, then $n_1 T_{\rm S}$ will converge in distribution to a nondegenerate distribution with zero mean and variance
    $m_1^2 (m_1 - 1)^2 \xi_{0,2} / 2 + m_1^2 m_0^2 \xi_{1,1} / s + m_0^2 (m_0 - 1)^2 \xi_{2,0} / 2 s^2.
    $
\end{theorem}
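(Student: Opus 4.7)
The plan is to develop a Hoeffding-type decomposition of $T_{\rm S}$ into orthogonal projections that respect both the two-sample structure and the subsampling indicators $\delta$. Writing the canonical kernels $g^{(a,b)}$ obtained from $h$ via the usual double Hoeffding construction, one obtains $T_{\rm S} = \sum_{a=0}^{m_0} \sum_{b=0}^{m_1} T_{\rm S}^{(a,b)}$, where each $T_{\rm S}^{(a,b)}$ is an average of $g^{(a,b)}$ over $(a,b)$-tuples of indices with $\prod_{j=1}^{a} \delta_{i_j^0}$ attached to the control coordinates. A preliminary step is to replace the deterministic normalization $\binom{sn_1}{m_0}$ by the random $\binom{N}{m_0}$, where $N = \sum_i \delta_i$, using the moment bound $(N-sn_1)/(sn_1) = O_p(n_1^{-1/2})$ to control the resulting $1 + O_p(n_1^{-1/2})$ multiplicative correction.

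For Part (1), I would isolate the two leading terms $T_{\rm S}^{(0,1)} = (m_1/n_1)\sum_{j=1}^{n_1} h_{0,1}(X_j^{(1)})$ and $T_{\rm S}^{(1,0)} \approx (m_0/(sn_1))\sum_{i=1}^{n_0} \delta_i\, h_{1,0}(X_i^{(0)})$. The classical CLT yields $n_1^{1/2} T_{\rm S}^{(0,1)} \xrightarrow{d} N(0, m_1^2 \xi_{0,1})$. For the second term, the independence of $\delta_i$ from $X_i^{(0)}$ together with $\mathrm{E}(\delta_i) = sn_1/n_0$ give variance $sn_1 \xi_{1,0}$ for the inner sum, and the fourth-moment assumption $\mathrm{E}\{h_{1,0}^4(X_1^{(0)})\} < \infty$ verifies the Lyapunov condition, producing $n_1^{1/2} T_{\rm S}^{(1,0)} \xrightarrow{d} N(0, m_0^2 \xi_{1,0}/s)$. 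The two limits are independent because they depend on disjoint sources of randomness. Hoeffding orthogonality and moment bounds then show $\sum_{a+b \geq 2} T_{\rm S}^{(a,b)} = o_p(n_1^{-1/2})$, and Slutsky's theorem concludes.

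For Part (2), with $h_{0,1} \equiv h_{1,0} \equiv 0$, I would identify the three simultaneously leading terms $T_{\rm S}^{(0,2)}$, $T_{\rm S}^{(1,1)}$, and $T_{\rm S}^{(2,0)}$, each of order $n_1^{-1}$. The $(0,2)$ piece is a classical degenerate second-order U-statistic in the $n_1$ cases with kernel $h_{0,2}$, so Lemma \ref{lemma_tradition_U}(2) delivers $n_1 T_{\rm S}^{(0,2)}$ converging to a weighted chi-squared law with variance $m_1^2(m_1-1)^2 \xi_{0,2}/2$. For the $(1,1)$ bilinear piece $(m_0 m_1/(sn_1 n_1))\sum_{i,j} \delta_i\, h_{1,1}(X_i^{(0)}, X_j^{(1)})$, I would exploit $\mathrm{E}\{h_{1,1}(x, X^{(1)})\} = h_{1,0}(x) = 0$ and the analogous identity in the other argument to kill all off-diagonal covariance terms, leaving $\operatorname{Var}(n_1 T_{\rm S}^{(1,1)}) \to m_0^2 m_1^2 \xi_{1,1}/s$. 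The $(2,0)$ piece is a degenerate U-statistic with $\delta_i \delta_j$ weights over the controls, yielding $\operatorname{Var}(n_1 T_{\rm S}^{(2,0)}) \to m_0^2(m_0-1)^2 \xi_{2,0}/(2s^2)$ by analogous computations. These three limits are mutually independent by $L^2$-orthogonality, so the variances add and the joint limit is nondegenerate.

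The hard part will be coupling the randomness of $X^{(0)}$ with that of the Bernoulli indicators $\delta$ inside the kernel sum while simultaneously reconciling the deterministic normalization $\binom{sn_1}{m_0}$ with the random count of active control tuples $\binom{N}{m_0}$. This requires careful moment bookkeeping to disentangle the Hoeffding projections from the fluctuations of $N$ around $sn_1$, and to verify that the projections $T_{\rm S}^{(a,b)}$ with $a+b \geq 2$ in Part (1), and with $a+b \geq 3$ in Part (2), are asymptotically negligible at the appropriate rate. Once these controls are in place, the argument reduces to standard two-sample U-statistic limit theory applied to the dominant projections.
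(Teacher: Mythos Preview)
Your Part~(1) plan is close to the paper's: isolate the linear projection $V_{\rm S} = (m_1/n_1)\sum_j h_{0,1}(X_j^{(1)}) + (m_0/(sn_1))\sum_i \delta_i\, h_{1,0}(X_i^{(0)})$, apply the CLT to each summand (Lyapunov for the $\delta$-weighted one via the fourth-moment hypothesis), and kill the remainder. The paper, however, does not build a full canonical Hoeffding decomposition $\sum_{a,b} T_{\rm S}^{(a,b)}$; it simply computes $\operatorname{Var}(n_1^{1/2}T_{\rm S})$, $\operatorname{Var}(n_1^{1/2}V_{\rm S})$, and $\operatorname{Cov}(n_1^{1/2}T_{\rm S}, n_1^{1/2}V_{\rm S})$ directly using Lemma~\ref{lemma:cor_hh_rs} and checks that all three tend to $m_1^2\xi_{0,1}+m_0^2\xi_{1,0}/s$, forcing $n_1^{1/2}(T_{\rm S}-V_{\rm S})\to 0$ in $L^2$. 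Your ``preliminary step'' of replacing $\binom{sn_1}{m_0}$ by the random $\binom{N}{m_0}$ is unnecessary in this route: the paper keeps the deterministic normalization throughout and just absorbs $\mathrm{E}\bigl(\prod_j \delta_{i_j^0}\prod_j \delta_{l_j^0}\bigr) = (sn_1/n_0)^{2m_0-r}$ into each covariance term, so the fluctuation of $N$ never has to be controlled separately.

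For Part~(2) you have a real gap. You assert that $T_{\rm S}^{(0,2)}$, $T_{\rm S}^{(1,1)}$, $T_{\rm S}^{(2,0)}$ have ``mutually independent'' limits ``by $L^2$-orthogonality.'' Orthogonality lets the variances add, but it does not yield independence of the limits: the same $X_j^{(1)}$'s enter both the $(0,2)$ and $(1,1)$ pieces, and the same $\delta_i X_i^{(0)}$'s enter both $(1,1)$ and $(2,0)$, so the three components are genuinely dependent. The paper explicitly acknowledges this difficulty --- it remarks that $h_{0,2}$, $h_{1,1}$, $h_{2,0}$ are dependent, that ``it is hard to find a martingale difference sequence,'' and then \emph{only computes the variance} of $n_1 T_{\rm S}$, leaving the limiting law unspecified beyond ``a nondegenerate distribution with zero mean and variance \ldots.'' So to match the paper you need only the variance computation plus $n_1(T_{\rm S}-V_{\rm S})\to 0$ in $L^2$; if you want to go further and actually identify the joint limit, you will need a joint martingale or characteristic-function argument, not orthogonality alone.
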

{\begin{table}[htb]
\begin{center}
	\caption{Comparison of Computational Complexities for Different Test Statistics between RIT and BIT.}
	\label{tab:computation}
	\renewcommand\arraystretch{1.5}
		\begin{tabular}{cc|cc}
			\hline \hline
			  RIT & Computational  & BIT
			&Computational\\
           Notation&Complexity&Notation&Complexity\\
			\hline
			$T_r$& $O\left( n\right)$ & $T_{r,\rm{S}}$& $O\left(n_1s \right)$ \\
			$T_{\tau}$& $O\left( n\log n\right) $& $T_{\tau,\rm {S}}$& $O\left\{ n_1s \log \left( n_1s  \right) \right\} $\\
           $T_{\rm{dcov}}$& $O\left( p n^2 \right)$& $T_{\rm{dcov,S}}$& $O\left( p  n_1^2 s^2 \right)$\\
            $T_{\rm{IPcov}}$& $O\left( p n^2\right)$ & $T_{\rm{IPcov,S}}$& $O\left( p n_1^2 s^2 \right)$ \\
			\hline
		\end{tabular}
\end{center}
\end{table}
}

\noindent
See Appendix C for the detailed proof. From Theorem \ref{thm:sample_U}, we find that the convergence rates of RIT and BIT are the same. The properties of the BIT defined in formula (\ref{eq:sample_U_form}) depend only on $n_1$ and $s$.
The larger the sampling ratio, the smaller the variance of the asymptotic distribution. As $s$ tends to infinity, the conclusion of Theorem \ref{thm:sample_U} aligns with that of Theorem \ref{thm:gen_U}. However, as $s$ increases, the computational complexity grows proportionally to $s$, while the reduction in variance becomes less significant. Therefore, we should consider the balance between reducing variance and increasing computational complexity. Table \ref{tab:computation} compares the computational complexities of the rescaled test statistics and their sampled counterparts.



While the subsampling method reduces computational complexity, unlike Theorem \ref{thm:gen_U}, Theorem \ref{thm:sample_U} does not provide the asymptotic distribution of BIT \( T_{\rm S} \) under \( \xi_{1,0} = \xi_{0,1} = 0 \). Thus, computationally intensive simulations are still required to approximate the null distribution.
In a specific high-dimensional case, the asymptotic distribution could be explicitly expressed in the following theorem.

\begin{theorem} (The asymptotic properties of high-dimensional BIT)
    \label{thm:sample_div_p}
    Under the null hypothesis and $n_1/n\rightarrow 0$, if $\xi_{1,0} = \xi_{0,1} = 0$ but $\xi_{0,2} > 0$, and $h_{0,2}(\cdot,\cdot)$ depends on $n_1$ satisfying \eqref{eq:condition_clt}, as $n_1 \to \infty$. With $s \to \infty$, the BIT $n_1 T_{\rm S} / \xi_{0,2}^{1/2}$ is asymptotically normally distributed with zero mean and variance $m_1^2 (m_1 -1)^2 / 2$.
\end{theorem}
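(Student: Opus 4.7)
\textbf{Proof plan for Theorem \ref{thm:sample_div_p}.} The strategy is to reduce the study of $T_{\rm S}$ in the high-dimensional regime with $s \to \infty$ to the degenerate $(0,2)$-projection component, whose limiting distribution was already identified in the proof of Theorem \ref{thm:T_div_p}. By Theorem \ref{thm:sample_U}(2), the remaining components associated with the subsampled controls contribute variances of order $1/s$ and $1/s^2$, both of which vanish asymptotically. The conclusion then follows from the martingale CLT for the $(0,2)$-component combined with Slutsky's theorem.

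First, I would apply a generalized Hoeffding-type decomposition to $T_{\rm S}$, expanding the statistic as a sum of canonical projections indexed by $(a,b)$ with $0 \leqslant a \leqslant m_0$ and $0 \leqslant b \leqslant m_1$. Because $\prod_j \delta_{i_j^0}$ is independent of the $X$-variables and $\mathrm{E}[\prod_j \delta_{i_j^0}] = (sn_1/n_0)^{m_0}$, each $(a,b)$-component of $T_{\rm S}$ can be rewritten as a generalized U-statistic over the realized subsample of controls and the cases, up to a factor $1 + o_p(1)$ coming from the ratio $\binom{N_s}{m_0}/\binom{sn_1}{m_0}$ with $N_s = \sum_i \delta_i$. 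Under the null and the degeneracy conditions $\xi_{0,1} = \xi_{1,0} = 0$, the first-order projections $(0,1)$ and $(1,0)$ vanish identically, so the expansion collapses onto the $(0,2)$-, $(1,1)$-, and $(2,0)$-components plus strictly higher-order terms. Crucially, the $(0,2)$-component does not involve $\delta$ at all, since $h_{0,2}$ integrates out all control observations; it equals $\binom{m_1}{2}\binom{n_1}{2}^{-1} \sum_{1 \leqslant i < j \leqslant n_1} h_{0,2}( X_i^{(1)}, X_j^{(1)} )$, which is precisely the degenerate U-statistic driving Theorem \ref{thm:T_div_p}.

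Second, I would establish asymptotic normality of the leading component and negligibility of the remainder. Under condition \eqref{eq:condition_clt}, the martingale array $Z_i = \sum_{j < i} h_{0,2}(X_i^{(1)}, X_j^{(1)})$ satisfies both the Lyapunov and conditional-variance conditions of the martingale CLT, so $n_1$ times the $(0,2)$-component, normalized by $\xi_{0,2}^{1/2}$, converges in distribution to $N(0, m_1^2 (m_1-1)^2 / 2)$. For the residual terms, Theorem \ref{thm:sample_U}(2) yields $\mathrm{Var}( n_1 T_{\rm S} ) = m_1^2 (m_1-1)^2 \xi_{0,2}/2 + m_1^2 m_0^2 \xi_{1,1}/s + m_0^2 (m_0-1)^2 \xi_{2,0}/(2 s^2)$, so only the $(0,2)$ contribution survives as $s \to \infty$. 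A Chebyshev-type bound then shows that the $(1,1)$-, $(2,0)$-, and higher-order components each contribute $o_p(1)$ to $n_1 T_{\rm S}/\xi_{0,2}^{1/2}$, and Slutsky's theorem delivers the stated weak convergence.

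The main obstacle will be handling the two layers of randomness simultaneously, namely the data $X$ and the Bernoulli indicators $\delta_i$, while keeping track of the fact that the denominator $\binom{sn_1}{m_0}$ is deterministic rather than matching the random number of sampled controls $N_s$. This is resolved using the concentration $N_s/(sn_1) \xrightarrow{p} 1$ (since $\mathrm{Var}(N_s) = n_0 (sn_1/n_0)(1 - sn_1/n_0) \leqslant sn_1$), so the correction factor $\binom{N_s}{m_0}/\binom{sn_1}{m_0}$ is $1 + o_p(1)$ and the subsampled averages behave as exact generalized U-statistics on the subsample asymptotically. A secondary subtlety is ensuring that condition \eqref{eq:condition_clt}, which is stated for the full-sample projection kernel $h_{0,2}$, still applies in the subsampled setting; this is automatic because $h_{0,2}$ projects over case indices only and is unaffected by the subsampling of controls, so no new condition on the kernel is needed.
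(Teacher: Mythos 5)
Your proposal is correct and follows essentially the same route as the paper: the paper likewise approximates $n_1 T_{\rm S}$ in $L^2$ by the case-only degenerate U-statistic $V = m_1(m_1-1)n_1^{-1}(n_1-1)^{-1}\sum_{i<j}h_{0,2}(X_i^{(1)},X_j^{(1)})$, whose asymptotic normality under \eqref{eq:condition_clt} was already established via the martingale CLT in the proof of Theorem \ref{thm:T_div_p}, and then shows $\operatorname{Var}\{n_1(T_{\rm S}-V)\}\to 0$ using the variance expansion from Theorem \ref{thm:sample_U}(2), in which the $\xi_{1,1}/s$ and $\xi_{2,0}/s^2$ terms vanish as $s\to\infty$. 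Your Hoeffding-decomposition-plus-Chebyshev phrasing (and the $\binom{N_s}{m_0}/\binom{sn_1}{m_0}=1+o_p(1)$ correction, which the paper sidesteps by computing moments with the $\delta$'s left in place) is a reorganization of the same variance--covariance computation rather than a genuinely different argument.
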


\noindent
The proof of this theorem is given in Appendix C. The conclusions of Theorem \ref{thm:sample_div_p} are similar to those of Theorem \ref{thm:T_div_p}. Given \( n_1 / n_0 \to 0 \), we can select \( s \) such that \( s n_1 = o(n_0) \), ensuring reduced computational complexity while keeping all rare events \( Y = 1 \) in the sampled data. However, in high-dimensional settings, \( s \) must diverge rather than remain fixed, although it may diverge at a much slower rate. Additionally, we must verify the conditions in \eqref{eq:condition_clt} for \( h_{0,2} \) to apply the asymptotic distribution in Theorem \ref{thm:sample_div_p}. Using both simulated and real data, we demonstrate in the numerical studies that when $s \geqslant 5$, the performance closely approximates that of the full sample.

\textbf{Remark: (Discussion on selection of $s$)} We provide the verification about BIT power in Appendix C.6. For the first-order BIT, we consider the choice of \( s \) from three perspectives:
(1) (Asymptotic Variance): The difference between the asymptotic variances of \( T \) and \( T_{\rm S} \) is \( O(n_1^{-1} s^{-1}) \). To control this difference to be no more than \( \epsilon \), the suggested setting is \( s \geqslant (n_1 \epsilon)^{-1} \). A larger \( \epsilon \) reduces complexity, while a smaller \( \epsilon \) brings the power of \( T_{\rm S} \) closer to that of \( T \).
(2) (Lower Bound of Power): If the power of \( T_{\rm S} \) needs to be no less than \( \beta \), from Appendix C.6, we calculate that \( s \geqslant m_0^2 \xi_{1,0} / \left[ n_1 \mu_0^2 / \{\Phi^{-1}(1-\alpha/2) - \Phi^{-1}(1- \beta)\}^2 - m_1^2 \xi_{0,1} \right] \), where \( \mathrm{E}h(\cdot) = \mu_0 > 0 \) and \( s \) is approximately inversely proportional to \( n_1 \).
(3) (Power Difference): If the power difference between \( T \) and \( T_{\rm S} \) should be no larger than \( \epsilon \), then it could be calculated that \( s \geqslant n_1^{1/2} m_0^2 \mu_0 \xi_{1,0} \phi\left\{ \Phi^{-1}(1-\alpha/2) - \mu_0 (n_1/m_1^2 \xi_{0,1})^{1/2} \right\} / 2 \epsilon m_1^3 \xi_{0,1}^{3/2} \). When \( n_1 \) is sufficiently large, \( \phi\left\{ \Phi^{-1}(1-\alpha/2) - \mu_0 (n_1/m_1^2 \xi_{0,1})^{1/2} \right\} \approx \exp\left( - \mu_0^2 n_1 / 2 m_1^2 \xi_{0,1} \right) \). This implies that \( s \) is approximately proportional to \( n_1^{1/2} \exp(-n_1) \). Thus, a smaller \( n_1 \) requires a larger \( s \), and vice versa. For the second-order BIT, it could be verified that in the case of high-dimensionality, the statistic \( T_{\rm S} \) can achieve the same power as \( T_{\infty} \), see Appendix C.6 for details. Similarly, we observe that the power of \(T_{\infty}\) is only related to $n_1$ as well. The analysis of the power above explains the phenomenon in Figure \ref{fig:rare_event}. Additionally, we demonstrated in the simulation section that the loss in empirical power at different \(s\) is relatively small.

{\csubsection{Local Power Analysis}

In this subsection, we compare the local power of RIT and BIT, where local power is crucial for independence testing \citep{shi2022power}.
It examines the capability of the test statistic to reliably detect a non-independence relationship, where the signal violating the null hypothesis diminishes as the sample size increases. We consider the classical and well-used mixture  alternatives  \citep{farlie1960performance,farlie1961asymptotic,dhar2016study}. Specifically, the conditional distribution function of $X \mid Y = 0$ is $F(x)$, while the conditional distribution function of $X \mid Y = 1$ is $(1-\Delta)F(x) + \Delta G(x)$, where $0 < \Delta \leqslant 1$, and $G(x)$ is a fixed distribution function that is absolutely continuous and satisfies $G(x) \neq F(x)$.
This implies that a new sample $X_i$ with $Y_i = 1$ is drawn from $F(x)$ with probability $1-\Delta$, and from $G(x)$ with probability $\Delta$. Since the kernel function $h(X_1^{(0)}, \cdots, X_{m_0}^{(0)}; X_1^{(1)}, \cdots, X_{m_1}^{(1)})$ has zero mean only when $X_i^{(1)}$ and $X_j^{(0)}$ are from the same distribution, $h(\cdot)$ no longer has zero mean if there exists $k \in \{1, \cdots, m_1\}$ such that $X_k^{(1)}$ is drawn from $G(x)$. Denote $\mathrm{E} h(X_1^{(0)}, \cdots, X_{m_0}^{(0)}; X_1^{(1)}, \cdots, X_{m_1}^{(1)} \mid X_i^{(1)}\sim G(x) \text{ for } 1\leqslant i \leqslant q \text{ and } X_j^{(1)} \sim F(x) \text{ for } (q+1) \leqslant j \leqslant m_{1} ) \overset{\triangle}{=} \mu_{G,q}$. We assume that $\mu_{G,1}\neq 0$ and $\mu_{G,q}< \infty$ exists for $2\leqslant q \leqslant m_1-1$.

First, for the first order  RIT $T$ and BIT $T_{\rm S}$, we examine the asymptotic power under the following sequence of alternatives: $\mbox{H}_{1,n_1}(\Delta_{0})$: $\Delta = \Delta_{n_1}$ with $\Delta_{n_1} = n_1^{-1/2} \Delta_0$, where $\Delta_0 > 0$ is a constant.
Then applying the same techniques as in Theorem \ref{thm:gen_U}, we obtain \(n_1^{1/2} T  \xrightarrow{d} N(m_1 \Delta_0 \mu_{G,1}, m_1^2 \xi_{0,1})\) and \(n_1^{1/2} T_{\rm S} \xrightarrow{d} N(m_1 \Delta_0 \mu_{G,1}, m_1^2 \xi_{0,1} + m_0^2 \xi_{1,0} / s)\). We now state the following corollary.
 \begin{corollary} (The power of $T$ and $T_{\rm S}$)
 	\label{thm: power_local}
 	Assume the local alternative hypothesis $\mbox{H}_{1,n_1}(\Delta_{0})$ is true with $n_1/n_0\rightarrow 0$. Then for any $0 < \beta < 1-\alpha $, (1) there exists $C(\beta,\alpha,\mu_{G,1},\xi_{0,1}) > 0$ such that as long as $\Delta_0 > C(\beta,\alpha,\mu_{G,1},\xi_{0,1})$, we have,
 	$$ \lim_{n_1 \to \infty} \mathrm{P}\left\lbrace \phi_T(T) = 1 \mid \mbox{H}_{1,n_1}(\Delta_{0})  \right\rbrace \geqslant 1-\beta , $$
 	where $\phi_T(T)$ is the test function for $T$;
 	
 	(2) there exists $C(\beta,\alpha,\mu_{G,1},\xi_{0,1}+ m_0^2 \xi_{1,0}/s m_1^2) > 0$ such that as long as $\Delta_0 > C(\beta,\alpha,\mu_{G,1},\xi_{0,1}+ m_0^2 \xi_{1,0}/s m_1^2)$,
 	$$ \lim_{n_1 \to \infty} \mathrm{P}\left\lbrace \phi_{T_{\rm S}}(T_{\rm S}) = 1 \mid \mbox{H}_{1,n_1}(\Delta_{0})  \right\rbrace \geqslant 1-\beta.$$
 \end{corollary}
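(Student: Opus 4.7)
The plan is to convert the two shifted-normal limits stated immediately above Corollary 1 into explicit lower bounds on the rejection probability, from which the existence of the constants $C(\cdot)$ follows by inverting the standard normal CDF. Because the null asymptotic distributions from Theorem \ref{thm:gen_U} and Theorem \ref{thm:sample_U} are already available, the two-sided level-$\alpha$ tests can be calibrated explicitly, and the power calculation under $\mbox{H}_{1,n_1}(\Delta_0)$ reduces to a computation involving a single non-centrality parameter.

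For part (1), I would define the test as $\phi_T(T) = \one\{ n_1^{1/2} |T| / (m_1 \xi_{0,1}^{1/2}) > z_{1-\alpha/2} \}$, where $z_{1-\alpha/2} = \Phi^{-1}(1-\alpha/2)$. Under $\mbox{H}_{1,n_1}(\Delta_0)$, the stated limit $n_1^{1/2} T \xrightarrow{d} N(m_1 \Delta_0 \mu_{G,1}, m_1^2 \xi_{0,1})$ combined with Slutsky's theorem yields
\begin{equation*}
\lim_{n_1 \to \infty} \mathrm{P}\{\phi_T(T) = 1 \mid \mbox{H}_{1,n_1}(\Delta_0)\} = 1 - \Phi(z_{1-\alpha/2} - \lambda) + \Phi(-z_{1-\alpha/2} - \lambda),
\end{equation*}
with $\lambda = \Delta_0 \mu_{G,1}/\xi_{0,1}^{1/2}$. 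The right-hand side is continuous, equal to $\alpha$ at $\lambda = 0$, and strictly increasing to $1$ as $\lambda \to \infty$. Thus, for any $\beta \in (0, 1-\alpha)$ there is a unique $\lambda^{*}(\beta, \alpha) > 0$ at which this expression equals $1-\beta$, and setting $C(\beta, \alpha, \mu_{G,1}, \xi_{0,1}) = \lambda^{*}(\beta,\alpha)\, \xi_{0,1}^{1/2} / \mu_{G,1}$ delivers the claim whenever $\Delta_0 > C$.

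For part (2), the argument is structurally identical: I would calibrate the BIT test from Theorem \ref{thm:sample_U} as $\phi_{T_{\rm S}}(T_{\rm S}) = \one\{ n_1^{1/2} |T_{\rm S}| / (m_1^2 \xi_{0,1} + m_0^2 \xi_{1,0}/s)^{1/2} > z_{1-\alpha/2} \}$, use the shifted limit $n_1^{1/2} T_{\rm S} \xrightarrow{d} N(m_1 \Delta_0 \mu_{G,1}, m_1^2 \xi_{0,1} + m_0^2 \xi_{1,0}/s)$, and obtain the same expression as above with $\lambda$ replaced by $\lambda_{\rm S} = \Delta_0 \mu_{G,1}/(\xi_{0,1} + m_0^2 \xi_{1,0}/(s m_1^2))^{1/2}$. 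Inverting as before gives the desired $C(\beta,\alpha,\mu_{G,1},\xi_{0,1}+ m_0^2 \xi_{1,0}/(s m_1^2))$.

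The main obstacle does not lie in the corollary itself but upstream, in verifying the two shifted-normal limits stated without proof just above it. That step requires a first-order expansion of $\mathrm{E}h(\cdot)$ under the mixture alternative: because each $X_i^{(1)}$ is drawn from $G$ with probability $\Delta_{n_1} = \Delta_0 n_1^{-1/2}$, only the single-$G$-draw term contributes $m_1 \Delta_{n_1} \mu_{G,1}$ at order $n_1^{-1/2}$, while every $q$-draw term with $q \geq 2$ contributes $O(\Delta_{n_1}^q) = o(n_1^{-1/2})$ under the assumption $\mu_{G,q} < \infty$. Once this expansion is established, the projection decomposition from Theorem \ref{thm:gen_U} and Theorem \ref{thm:sample_U} transfers verbatim (the variance is unaffected by the $o(1)$ contamination), and the inversion step above closes the proof.
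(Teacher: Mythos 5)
Your proposal is correct and follows essentially the same route as the paper's Appendix C.7: both rest on the shifted-normal limits $n_1^{1/2}T \xrightarrow{d} N(m_1\Delta_0\mu_{G,1}, m_1^2\xi_{0,1})$ (and the BIT analogue), obtained from the expansion $\mathrm{E}h(\cdot) = m_1\Delta_{n_1}\mu_{G,1} + O(\Delta_{n_1}^2)$ with unchanged leading-order variance, followed by calibrating the two-sided level-$\alpha$ rejection region under the null and translating the mean shift into a lower bound on the limiting rejection probability. The only cosmetic difference is that the paper discards one tail to write the constant explicitly as $C = \{\Phi^{-1}(1-\alpha/2)-\Phi^{-1}(1-\beta)\}\xi_{0,1}^{1/2}/|\mu_{G,1}|$, whereas you define $C$ implicitly by inverting the exact two-sided power function (and should use $|\mu_{G,1}|$ there to cover a negative shift).
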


\noindent
The detailed verification of Corollary \ref{thm: power_local} is provided in Appendix C.7. We conclude that even with a diminishing signal as $n_1$ increases, satisfactory power can still be achieved. On the other hand, if we consider the classical sequence of alternatives with $\Delta_{n} = n^{-1/2} \Delta_0$, we find that for any fixed constant $\Delta_0 > 0$, $\lim_{n \to \infty} \mathrm{P}\left\lbrace \phi_{T}(T) = 1 \mid \mbox{H}_{1,n}(\Delta_{0})  \right\rbrace = \alpha$. In rare event scenario, the effective sample size is primarily determined by $n_1$. Consequently, we can detect signals that decay at the rate of  $n_1^{-1/2}$. However, when the signal decays at the rate of $n^{-1/2}$, the growth rate of the effective sample size cannot keep pace with the rate of signal decay, resulting in the failure of the test statistic.

For the second order statistics, we focus solely on the case of high-dimensionality, as it is not easy to obtain the explicit asymptotic distribution for fixed dimension. We examine the asymptotic power of $T_{\infty}$ under the following sequence of alternatives: $\mbox{H}^{\infty}_{1,n_1}(\Delta_{0})$: $\Delta = \Delta_{n_1}$ with $\Delta_{n_1} = n_1^{-1} \Delta_0$.
To ensure that $F(x)$ and $G(x)$ remain distinguishable when $p\rightarrow \infty$,
we assume $\liminf_{n_1\to \infty,p\to \infty} |\mu_{G,1}| \geqslant c_{G} >0$.
 By applying the same techniques as in Corollary \ref{thm: power_local}, we can obtain a similar result: assuming the local alternative hypothesis $\mbox{H}^{\infty}_{1,n_1}$ holds for \( p \to \infty \), and \( h_{0,2}(\cdot, \cdot) \) satisfies condition \eqref{eq:condition_clt}, we have $\lim_{n_1 \to \infty} \mathrm{P}\lbrace \phi_{T_{\infty}}(T_{\infty}) = 1 \mid \mbox{H}_{1,n_1}(\Delta_{0}) \rbrace \geqslant 1-\beta$.
 With $s \rightarrow \infty$, it can be verified that $T_{\rm S}$ will achieve the same local power as $T_{\infty}$.
}

\csubsection{Boosted Independence Test with Multiple Rare Event Classes}

In the aforementioned analysis, we only considered $Y$ as a binary variable. However, in real applications, there are many situations where $Y$ is multi-class. Among the classes of $Y$, one class typically serves as the majority, while the other classes represent rare events.
For example, in the analysis of drug treatment effects, the vast majority of people do not experience adverse outcomes. Those who do experience adverse outcomes constitute a small minority and can be further categorized into different types. Similarly, in the analysis of anomalies in financial transactions, most transactions are normal, forming the majority class. In contrast, a small number of anomalous transactions, which can be classified into various types, represent the rare events.

Here, we define $Y_i \in \{0,1,\cdots,K\}$ as the class label for the $i$-th observation. In this way, all $X$ observations can be classified into $K+1$ distinct sets, defined as $\mathcal{D}^k=\{X_1^{(k)}, X_2^{(k)}, \cdots, X_{n_k}^{(k)}\}$, where $n_k$ is the number of observations in each set for $0 \leqslant k \leqslant K$. Thus, we consider the test statistics as,
\begin{equation*}
 T_{K}= \prod_{j=0}^K\binom{n_j}{m_j}^{-1}  \sum_c h\left(X_{i^0_{1}}^{(0)}, \cdots,X_{i^0_{m_0}}^{(0)} ; \cdots ;X_{i^K_{1}}^{(K)}, \cdots,X_{i^K_{m_K}}^{(K)}\right) ,
\end{equation*}
where $h(\cdot)$ is a kernel function assumed to be symmetric within each of its $K+1$ blocks. The set $\{i^k_{1}, \ldots, i^k_{m_k}\}$ denotes $m_k$ distinct elements from the set $\{1,2, \ldots, n_k\}$ for $0 \leqslant k \leqslant K$, and $\sum_c$ denotes summation over all possible combinations. This statistic is referred to as Multi-RIT. To further establish the theoretical properties of $T_{K}$, we define the projection of the kernel function as
\begin{equation*}
	\begin{aligned}
	&h_{\nu_0,\nu_1,\cdots,\nu_K} \left( X_{i^0_1}^{(0)} , \cdots,X_{i^0_{\nu_0}}^{(0)};\cdots,X_{i^K_{1}}^{(K)} , \cdots,X_{i^K_{\nu_K}}^{(K)}\right)  \\
	= & \mathrm{E}\left\lbrace  h\left(X_{i_1^0}^{(0)}, \cdots,X_{i_{m_0}^0}^{(0)} ;\cdots;X_{i_1^K}^{(K)}, \cdots,X_{i^K_{m_K}}^{(K)}\right) \mid X_{i^0_1}^{(0)} , \cdots,X_{i^0_{\nu_0}}^{(0)};\cdots; X_{i^K_{1}}^{(K)} , \cdots,X_{i^K_{\nu_K}}^{(K)} \right\rbrace,
\end{aligned}
\end{equation*}
for $0 \leqslant \nu_k < m_k$ with $0 \leqslant k \leqslant K$. We then investigate the theoretical properties of this statistic in the scenario of multi-class rare events. For simplicity, we assume $n_k = o(n_0)$ for $1 \leqslant k \leqslant K$. Further, we define the following notations for multi-class rare events, for $1 \leqslant k \leqslant K$ and $1 \leqslant k_1 < k_2 \leqslant K$,
\begin{equation*}
	\begin{aligned}
	\phi_{1,k} \left( X_1^{(k)}\right) &= \underbrace{h_{0,\cdots,0,1,0,\cdots,0}}_{\text{only } \nu_k \text{ is 1}} \left( X_1^{(k)}\right), \end{aligned}
\end{equation*}
\begin{equation*}
	\begin{aligned}
	\phi_{2,k} \left( X_1^{(k)} ; X_2^{(k)}\right) &= \underbrace{h_{0,\cdots,0,2,0,\cdots,0}}_{\text{only } \nu_k \text{ is 2}} \left( X_1^{(k)} ; X_2^{(k)}\right),\\
	\phi_{2,k_1,k_2} \left( X_1^{(k_1)} ; X_1^{(k_2)}\right) &= \underbrace{h_{0,\cdots,0,1,0,\cdots,0,1,0,\cdots,0}}_{\text{only } \nu_{k_1}\text{ and } \nu_{k_2} \text{ are 1}} \left( X_1^{(k_1)} ; X_1^{(k_2)}\right).
\end{aligned}
\end{equation*}
Then define $\zeta_{1,k} = \operatorname{Var}\{ \phi_{1,k}( X_1^{(k)}) \} $, $\zeta_{2,k_1,k_2} = \operatorname{Var}\{ \phi_{2,k_1,k_2} ( X_1^{(k_1)} ; X_1^{(k_2)}) \}$, and
	$\zeta_{2,k} = \operatorname{Var}\{ \phi_{2,k} ( X_1^{(k)} ; X_2^{(k)}) \}$, respectively. We have the following theorem.
\begin{theorem} (The asymptotic properties of $T_K$)
    \label{thm:multiple_RIT}
    Under the assumption of the null hypothesis and the condition of rare events \( n_k = o(n_0) \), assume that \( {n_k}/{n_1} \to r_k \) for \( 2 \leqslant k \leqslant K \), where \( r_k \)s are non-zero constants.
    (1) If there exists $1 \leqslant k \leqslant K$ such that $\zeta_{1,k} > 0$, then the Multi-RIT satisfies
    $$ {n_1}^{1/2} T_{K} \xrightarrow{d} N\left( 0, \sum_{k=1}^{K} {m_k}^2 \zeta_{1,k}/r_k\right) .$$

    \noindent
    (2) When $\zeta_{1,k} = 0$ for all $0 \leqslant k \leqslant K$, and there exists some $\zeta_{2,k_1,k_2} > 0$ or $\zeta_{2,k} > 0$ for $1 \leqslant k_1 < k_2 \leqslant K$ and $1 \leqslant k \leqslant K$, $n_1 T_{K}$ will converge in distribution to a nondegenerate distribution with zero mean and variance
    $ \sum_{k= 1}^K m_{k}^2 (m_{k} - 1)^2 \zeta_{2,k}/2 r_{k}^2 + \sum_{1 \leqslant k_1 < k_2 \leqslant K} m_{k_1}^2 m_{k_2}^2 \zeta_{2,k_1,k_2}/r_{k_1} r_{k_2}. $
\end{theorem}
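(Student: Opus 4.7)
The plan is to prove Theorem \ref{thm:multiple_RIT} by mirroring the two-sample strategy used for Theorem \ref{thm:gen_U}, lifted to a multi-sample Hoeffding decomposition. First I would write
\[
T_K \;=\; \sum_{\boldsymbol\nu}\, \prod_{k=0}^K \binom{m_k}{\nu_k}\, T_K^{(\boldsymbol\nu)},
\]
where $\boldsymbol\nu = (\nu_0,\ldots,\nu_K)$ ranges over $0 \leqslant \nu_k \leqslant m_k$, and $T_K^{(\boldsymbol\nu)}$ is a completely degenerate $(K+1)$-sample U-statistic whose kernel is the canonical projection associated with $\boldsymbol\nu$, obtained from $h_{\nu_0,\ldots,\nu_K}$ by inclusion-exclusion. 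The multi-sample analogue of Hoeffding orthogonality gives $\operatorname{Var}\bigl(T_K^{(\boldsymbol\nu)}\bigr) = \prod_k \binom{n_k}{\nu_k}^{-1}$ times the variance of the corresponding canonical kernel, and distinct $\boldsymbol\nu$'s produce mutually uncorrelated summands.

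For part (1), I would multiply through by $n_1^{1/2}$ and identify the surviving pieces from $n_k/n_1 \to r_k$ for $k \geqslant 1$ (with $r_1 = 1$) and $n_1/n_0 \to 0$. Only the pieces with $\sum_k \nu_k = 1$ carry nonvanishing limiting variance; the piece with $\nu_0 = 1$ drops out because its variance carries the factor $n_1/n_0$, while each piece with a single $\nu_k = 1$ for $k \geqslant 1$ contributes asymptotic variance $m_k^2 \zeta_{1,k}/r_k$. The leading term is thus $\sum_{k=1}^K m_k n_k^{-1} \sum_{i=1}^{n_k} \phi_{1,k}(X_i^{(k)})$, a sum across mutually independent samples of i.i.d.\ averages. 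Applying the classical Lindeberg CLT within each sample and combining by independence delivers the stated Gaussian limit; all pieces with $\sum_k \nu_k \geqslant 2$ have variance $O(n_1^{-1})$ and are therefore $o_p(n_1^{-1/2})$ by Chebyshev.

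For part (2), the assumption $\zeta_{1,k} = 0$ for all $k$ kills every first-order canonical projection. After multiplying by $n_1$, the scaling $n_1^2 \prod_k n_k^{-\nu_k}$ shows that the surviving pieces are exactly those with $\nu_0 = 0$ and $\sum_{k \geqslant 1}\nu_k = 2$: any piece with $\nu_0 \geqslant 1$ vanishes thanks to the extra $n_1/n_0$ factor, and pieces with $\sum_{k \geqslant 1}\nu_k \geqslant 3$ are $O_p(n_1^{-1/2})$. The surviving pieces split into two families: (a) within-sample degenerate U-statistics with $\nu_k = 2$ for a single $k \geqslant 1$, which by Lemma \ref{lemma_tradition_U}(2) converge to $\tfrac{1}{2} m_k (m_k - 1)$ times a weighted $\chi_1^2 - 1$ series whose coefficients are the eigenvalues of the operator associated with $\phi_{2,k}$, contributing limiting variance $m_k^2 (m_k - 1)^2 \zeta_{2,k}/(2 r_k^2)$; and (b) cross-sample bilinear pieces $\frac{1}{n_{k_1} n_{k_2}} \sum_{i,j} \phi_{2,k_1,k_2}(X_i^{(k_1)}, X_j^{(k_2)})$ for distinct $k_1, k_2 \in \{1,\ldots,K\}$, which are degenerate two-sample U-statistics handled by the same spectral/martingale argument used for Theorem \ref{thm:gen_U}(2), contributing limiting variance $m_{k_1}^2 m_{k_2}^2 \zeta_{2,k_1,k_2}/(r_{k_1} r_{k_2})$. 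Summing these uncorrelated contributions gives the stated total variance.

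The hard part will be the joint (as opposed to merely marginal) distributional convergence of the surviving second-order pieces in part (2). Hoeffding orthogonality already gives zero covariance, but to upgrade this to joint convergence I would build a pooled martingale using the filtration that reveals the sample-$k$ observations sequentially for $k = 1, 2, \ldots, K$, and verify the Lindeberg condition through the same fourth-moment control implicit in the proof of Theorem \ref{thm:gen_U}(2). The bookkeeping that every $\boldsymbol\nu$ with $\nu_0 \geqslant 1$ or $\sum_{k \geqslant 1}\nu_k \geqslant 3$ is negligible is tedious but mechanical; the crucial inputs are $n_k = o(n_0)$ and $n_k/n_1 \to r_k > 0$ for $k \geqslant 2$, which keep all surviving variance contributions on the common scale $n_1^{-1/2}$ for part (1) and $n_1^{-1}$ for part (2).
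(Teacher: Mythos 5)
Your proposal is correct and follows essentially the same route as the paper: approximate $T_K$ by the leading projection $V_K=\sum_{k=1}^K m_k n_k^{-1}\sum_{i=1}^{n_k}\phi_{1,k}(X_i^{(k)})$ (resp.\ the second-order projections in part (2)), compute the variance by counting the overlapping index configurations so that all sample-$0$ contributions are killed by $n_k=o(n_0)$, and combine independent within-sample CLTs. The only substantive difference is that for part (2) you go further than the paper, sketching the explicit limit and joint convergence, whereas the paper's proof stops at the variance calculation—which is all that the (deliberately weaker) statement of part (2) actually claims.
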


\noindent
The proof of Theorem \ref{thm:multiple_RIT} is given in Appendix C.  We conclude that when sample sizes across rare classes are comparable, the asymptotic distribution of \( T_{K} \) depends on \( n_1 \), regardless of the sample size in the majority class. For binary classification (\( K = 1 \)), Theorem \ref{thm:multiple_RIT} simplifies to Theorem \ref{thm:gen_U}.
In practical applications, subsampling techniques can also be used to obtain sample statistics with the same convergence rate in cases with multiple rare classes. We refer to this method as Multi-BIT. Given the complexity of rare events with varying class sizes, for simplicity, we consider the case where all sample sizes are comparable, with $n_1 / n_k \to 1$ for any $2 \leqslant k \leqslant K$. In this scenario, the Multi-BIT can be defined as follows,
\begin{equation*}
	T_{\rm{S},K}= \binom{s n_1}{m_0}^{-1} \prod_{k=1}^K\binom{n_k}{m_k}^{-1} \sum_c { \left(\prod_{j=1}^{m_0}\delta_{i^0_j}\right)}h\left(X_{i^0_1}^{(0)}, \cdots,X_{i^0_{m_0}}^{(0)} ;\cdots;X_{i^K_1}^{(1)}, \cdots,X_{i^K_{m_K}}^{(1)}\right).
\end{equation*}

\noindent
In this case, we have the following corollary.
\begin{corollary} (The asymptotic properties of $T_{\rm{S},K}$)
    \label{thm:sample_multiple_RIT}
    Under the null hypothesis and assuming the same conditions as in Theorem \ref{thm:multiple_RIT},
    (1) If there exists $0 \leqslant k \leqslant K$ such that $\zeta_{1,k} > 0$, then $T_{\rm{S},K}$ satisfies ${n_1}^{1/2} T_{\rm{S},K} \xrightarrow{d} N( 0, {m_0}^2 \zeta_{1,0}/s + \sum_{k=1}^{K} {m_k}^2 \zeta_{1,k} )$.
    \noindent
    (2) When $\zeta_{1,k} = 0$ for all $0 \leqslant k \leqslant K$, and there exists some $\zeta_{2,k_1,k_2} > 0$ or $\zeta_{2,k} > 0$ for $0 \leqslant k_1 < k_2 \leqslant K$ and $0 \leqslant k \leqslant K$, $n_1 T_{\rm{S},K}$ will converge in distribution to a nondegenerate distribution with zero mean and variance
    $
    \sum_{k=1}^K m_{k}^2 (m_k - 1)^2 \zeta_{2,k}/2 + \sum_{1 \leqslant k_1 < k_2 \leqslant K} m_{k_1}^2 m_{k_2}^2 \zeta_{2,k_1,k_2} + m_0^2(m_0-1)^2 \zeta_{2,0}/ 2 s^2 + \sum_{k=1}^K m_0^2 m_k^2 \zeta_{2,0,k}/s .
    $
\end{corollary}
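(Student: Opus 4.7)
The plan is to adapt the Hoeffding-type decomposition used in the proofs of Theorems~\ref{thm:sample_U} and~\ref{thm:multiple_RIT} to the multi-sample, subsampled setting. First, I would rewrite $T_{{\rm S},K}$ as a sum over all $(K+1)$-tuples of index combinations weighted by the product of inclusion indicators $\prod_{j=1}^{m_0} \delta_{i_j^0}$, and re-express the centered kernel via its projections $h_{\nu_0,\nu_1,\ldots,\nu_K}$. Since $\delta_i \sim \operatorname{Bernoulli}(sn_1/n_0)$ are i.i.d.\ and independent of $\{X^{(k)}\}$, the randomness introduced by $\{\delta_i\}$ can be treated in exactly the same way as in the binary case of Theorem~\ref{thm:sample_U}: one conditions on the $\delta_i$'s, decomposes the resulting conditional generalized $U$-statistic, then integrates out.

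For part~(1), I would perform a first-order projection (Hoeffding) decomposition over each of the $K+1$ groups. The linear term contributed by group $k\geqslant 1$ takes the form $(m_k/n_k)\sum_{i=1}^{n_k}\phi_{1,k}(X_i^{(k)})$ as in Theorem~\ref{thm:multiple_RIT}, contributing variance $m_k^2\zeta_{1,k}/n_k$. The term contributed by group~$0$ is modified by the sampling into $(m_0/(sn_1))\sum_{i=1}^{n_0}\delta_i\,\phi_{1,0}(X_i^{(0)})$; taking expectation first in $X^{(0)}$ and then in $\delta$, and using $\mathrm{E}(\delta_i)=sn_1/n_0$, its variance is $m_0^2\zeta_{1,0}/(sn_1)$ to leading order (the fluctuation of $\sum_i\delta_i$ around its mean $sn_1$ feeds only into higher-order terms, which I would bound by a direct second-moment calculation as in Theorem~\ref{thm:sample_U}). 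Multiplying by $n_1$ and using $n_k/n_1\to 1$ for $2\leqslant k\leqslant K$ yields the asserted variance $m_0^2\zeta_{1,0}/s+\sum_{k=1}^K m_k^2\zeta_{1,k}$. Asymptotic normality follows from a multivariate CLT applied to the $(K+1)$ independent group-wise sums together with independence of the $\delta_i$, and the higher-order Hoeffding remainders are $o_p(n_1^{-1/2})$ under the moment assumption $\mathrm{E}\{h_{1,0}^4(X_1^{(0)})\}<\infty$, exactly as in part~(1) of Theorem~\ref{thm:sample_U}.

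For part~(2), when all $\zeta_{1,k}$ vanish the linear term drops out and the leading contribution comes from the bilinear projections $\phi_{2,k}$ (within a single group) and $\phi_{2,k_1,k_2}$ (between two groups). Mirroring the degenerate case of Theorem~\ref{thm:multiple_RIT}, each within-group $k\geqslant 1$ term has variance $m_k^2(m_k-1)^2\zeta_{2,k}/(2n_k^2)$, and each between-group $(k_1,k_2)$ term with $1\leqslant k_1<k_2\leqslant K$ has variance $m_{k_1}^2 m_{k_2}^2\zeta_{2,k_1,k_2}/(n_{k_1}n_{k_2})$. The subsampling modifies the group-$0$ contributions exactly as in Theorem~\ref{thm:sample_U}: the within-group-$0$ variance is rescaled by $1/s^2$, giving $m_0^2(m_0-1)^2\zeta_{2,0}/(2s^2n_1^2)$, while the between-group $(0,k)$ variance is rescaled by $1/s$, giving $m_0^2 m_k^2\zeta_{2,0,k}/(sn_1 n_k)$. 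Summing and multiplying by $n_1^2$ yields the stated variance. The nondegenerate limit then follows by the multi-sample degenerate $U$-statistic CLT of \citet{neuhaus1977functional}, applied jointly to the second-order projection terms after conditioning on $\{\delta_i\}$, together with Slutsky-type arguments to absorb the $\delta$-induced randomness.

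The main obstacle will be the careful bookkeeping of the many cross-group projection terms and verifying that the Bernoulli sampling indicators integrate cleanly into the variance computations. In particular, one must show that the discrepancy between the nominal denominator $sn_1$ and the actual number of sampled controls $\sum_{i=1}^{n_0}\delta_i$ contributes only $o_p(1)$ corrections in both parts, and that no cross-covariance between the $\delta$-fluctuations and the kernel projections survives at the leading order under the rare-event regime $n_k=o(n_0)$ for $1\leqslant k\leqslant K$.
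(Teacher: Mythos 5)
Your proposal is correct and follows essentially the route the paper intends: the paper omits this proof, stating only that it combines the arguments of Theorem~\ref{thm:sample_U} (Bernoulli indicators rescaling the group-$0$ projection variances by $1/s$ and $1/s^2$) with those of Theorem~\ref{thm:multiple_RIT} (group-wise Hoeffding projections $\phi_{1,k}$, $\phi_{2,k}$, $\phi_{2,k_1,k_2}$), and your variance bookkeeping reproduces the stated limits exactly under $n_k/n_1\to 1$. The only point where you go beyond the paper is invoking \citet{neuhaus1977functional} for the degenerate limit in part~(2); the paper's own proof of the analogous Theorem~\ref{thm:sample_U}(2) stops at the variance computation because the dependent projections resist a clean martingale construction, so your citation is a reasonable (arguably more complete) way to justify the claimed distributional convergence.
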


\noindent
The proof of this corollary is similar to those of Theorem \ref{thm:sample_U} and Theorem \ref{thm:multiple_RIT}, and thus omitted. It is noteworthy that the last two terms, $m_0^2(m_0-1)^2 \zeta_{2,0}/ 2s^2$ and $\sum_{k=1}^K m_0^2 m_k^2 \zeta_{2,0,k}/s$, are induced by subsampling. Either $\zeta_{2,0} \neq 0$ or $\zeta_{2,0,k} \neq 0$ will result in a non-zero increase in the variance of the statistic. However, if $s$ is sufficiently large, this increase may be negligible compared to the leading terms.

 Similarly, when there is only one rarest class (without loss of generality, the first class $k=1$) containing the smallest sample size (i.e., \(n_1/n_k \rightarrow 0\) for \(k=0, 2, 3, \ldots, K\)), the convergence rate is primarily determined by the sample size of this rarest class $n_1$. The sample-based version can be similarly analyzed and is omitted here for simplicity. The proof of this corollary is similar to that of Theorem \ref{thm:multiple_RIT} and thus omitted.

\begin{corollary} (The asymptotic properties of $T_K$ with one rarest class)
    \label{thm:special case multiple_RIT}
    Under the null hypothesis and the condition of rare events \( n_1 = o(n_k) \) for \( k=0,2,3,\cdots, K \), we have:
    (1) If $\zeta_{1,1} > 0$, then the Multi-RIT satisfies
    $\sqrt{n_1}  T_{K} \xrightarrow{d} N\left( 0, {m_1}^2 \zeta_{1,1}\right)$;
    \noindent
    (2) When $\zeta_{1,k} = 0$ for all $0 \leqslant k \leqslant K$, and there exists some $\zeta_{2,1,k} > 0$ for $k = 0,2,3,\cdots,K$ or $\zeta_{2,1} > 0$, $n_1 T_{K}$ will converge in distribution to a nondegenerate distribution with zero mean and variance
    $m_{k}^2 (m_{1} - 1)^2 \zeta_{2,1}/2 + \sum_{k \neq 1} m_{1}^2 m_{k}^2 \zeta_{2,1,k}.$
\end{corollary}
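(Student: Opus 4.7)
The plan is to adapt the Hoeffding decomposition argument already used for Theorem \ref{thm:multiple_RIT} to the asymmetric regime in which a single class (class 1) has the strictly smallest sample size, so that $n_1 = o(n_k)$ for every $k\neq 1$. Writing the generalised U-statistic as
$$T_K = \sum_{(\nu_0,\ldots,\nu_K)} \prod_{k=0}^K \binom{m_k}{\nu_k} H^{(\nu_0,\ldots,\nu_K)},$$
where each $H^{(\nu_0,\ldots,\nu_K)}$ is a canonical (completely degenerate) generalised U-statistic obtained from $h$ by M\"obius inversion on partial expectations, the distinct components are orthogonal and the variance of the $(\nu_0,\ldots,\nu_K)$ component is of order $\prod_k n_k^{-\nu_k}$ times the variance of its canonical kernel. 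Under $n_1=o(n_k)$ the factor $n_1^{-1}$ strictly dominates every $n_k^{-1}$, so the order of each component is governed by the weight the multi-index places on the $\nu_1$ coordinate.

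For part (1) with $\zeta_{1,1}>0$, the $(0,1,0,\ldots,0)$ component contributes variance $m_1^2\zeta_{1,1}/n_1$, every other first-order component contributes $O(n_k^{-1})=o(n_1^{-1})$, and higher-order components are of even smaller order. I would therefore linearise as
$$T_K = \frac{m_1}{n_1}\sum_{i=1}^{n_1}\phi_{1,1}\!\left(X_i^{(1)}\right)+o_p\!\left(n_1^{-1/2}\right),$$
control the remainder by a Chebyshev bound on the orthogonal higher-order components exactly as in the proof of Theorem \ref{thm:multiple_RIT}, and apply the i.i.d.\ Lindeberg--L\'evy CLT to the leading sum to obtain $\sqrt{n_1}\,T_K\xrightarrow{d}N(0,m_1^2\zeta_{1,1})$.

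For part (2) every first-order projection vanishes, so the leading contribution comes from the second-order components. The purely degenerate $(0,2,0,\ldots,0)$ component carries variance of order $\zeta_{2,1}/n_1^2$, and the mixed components with $\nu_1=1$ and some $\nu_k=1$ contribute at rate $\zeta_{2,1,k}/(n_1 n_k)$. After multiplying by $n_1$ the surviving components are assembled through the spectral (Karhunen--Lo\`eve) representation of the canonical kernel of $\phi_{2,1}$, the same device used for the single-class degenerate limit in Lemma \ref{lemma_tradition_U}(2) and in the proof of Theorem \ref{thm:multiple_RIT}. Orthogonality of the H-components then yields a zero-mean nondegenerate limit whose variance is the sum of squared coefficients of the contributing terms, reproducing the stated formula.

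The main obstacle will be the combinatorial bookkeeping: tracking the $\prod_k(m_k+1)$ multi-index components, verifying their orthogonality under mixed sample-size scaling, and showing that every component with $\nu_1=0$ is asymptotically negligible after renormalising by the appropriate power of $n_1$. The analytic core --- tightness, identification of the limit via its characteristic function, and the spectral representation in the degenerate case --- is already developed for Theorem \ref{thm:multiple_RIT}; the new ingredient here is simply the asymmetric localisation of the dominant variance to the $\nu_1$-coordinate forced by $n_1=o(n_k)$ for $k\neq 1$.
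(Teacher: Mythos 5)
Your overall strategy (project onto the class-$1$ observations, show every other Hoeffding component is negligible under $n_1=o(n_k)$, then apply a CLT or a degenerate spectral limit) is the same route the paper takes: the paper omits this proof and points to Theorem \ref{thm:multiple_RIT}, whose argument is exactly a variance count via Lemma \ref{lemma:cor_hh_rs} followed by an $L^2$ approximation by the first- or second-order projection. Your part (1) is correct and matches that template.

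Part (2), however, contains a genuine inconsistency that you assert away rather than resolve. By your own rate computation, the mixed component with $\nu_1=1$ and $\nu_k=1$ contributes variance of order $m_1^2 m_k^2\,\zeta_{2,1,k}/(n_1 n_k)$ to $\operatorname{Var}(T_K)$. After rescaling by $n_1$, its contribution to $\operatorname{Var}(n_1 T_K)$ is $n_1^2\cdot m_1^2 m_k^2\,\zeta_{2,1,k}/(n_1 n_k)=m_1^2 m_k^2\,\zeta_{2,1,k}\,(n_1/n_k)$, which tends to $0$ precisely because $n_1=o(n_k)$ for every $k\neq 1$. So these components do \emph{not} survive; carried out honestly, your decomposition yields a limiting variance of $m_1^2(m_1-1)^2\zeta_{2,1}/2$ alone, not the stated expression containing $\sum_{k\neq 1} m_1^2 m_k^2\,\zeta_{2,1,k}$. (Contrast this with Theorem \ref{thm:multiple_RIT}, where $n_{k_1}n_{k_2}\asymp n_1^2$ and the cross terms legitimately persist.) A correct write-up must either drop the cross terms from the limit or flag the discrepancy with the stated formula. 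Moreover, in the case $\zeta_{2,1}=0$ but some $\zeta_{2,1,k}>0$ — which the corollary's hypothesis explicitly allows — the dominant variance is of order $(n_1 n_k)^{-1}=o(n_1^{-2})$, so $n_1 T_K\xrightarrow{p}0$ and the claimed nondegenerate limit under the $n_1$-normalization fails; the correct normalization there would be $(n_1 n_k)^{1/2}$. Your proposal needs to separate these regimes rather than claim a single limit "reproducing the stated formula."
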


\csection{EMPIRICAL ANALYSIS}
\csubsection{Simulation Settings}

In this section, we conduct simulations to compare the performance of different RITs in rare event scenarios with the corresponding classical test statistics. We consider four rescaled independence test statistics, constructed based on Pearson correlation ($T_{r}$), Kendall's $\tau$ rank correlation (Kendall $T_{\tau}$), distance correlation (DC $T_{\rm dcov}$), and improved projection correlation (IPC $T_{\rm IPcov}$).
Three types of test statistics are considered: (1) classical test statistics (denoted as Original); (2) rescaled test statistics (RIT); and (3) boosted test statistics based on subsampling (BIT). We design different studies to generate observations for first-order RITs ($T_{r}$ and $T_{\tau}$) and second-order RITs ($T_{\rm dcov}$ and $T_{\rm IPcov}$), as outlined below.

1. First-Order RIT. We fix $p=1$ and consider two examples.
Example 1: For cases with $Y_i=1$, the covariates are generated as $X_i \overset{\text{iid}}{\sim} N(0,1)$; for controls with $Y_i=0$, they are generated as $X_i \overset{\text{iid}}{\sim} N(0.3,1)$.
Example 2: Assume $X_i \overset{\text{iid}}{\sim} N(0,1)$ and $Y_i \overset{\text{iid}}{\sim} \operatorname{Bern}(p_i)$, where $p_i = \operatorname{logit}(\beta_0 + \beta_1 X_i)$, with $\beta_1 = 0.3$ and $\beta_0 = \log(n_1/n_0)$, ensuring $\mathrm{E} Y_i \approx n_1/n$. Randomly change some class labels to ensure the number of cases equals \(n_1\).

2. Second-Order RIT. We focus on large-scale datasets and consider the high-dimensional case with $p=50$ in the following two examples.
Example 1: For \(Y_i = 1\), generate \(X_i \in \mathbb{R}^p\) from a multivariate normal distribution with mean \(\mathbf{0}_p \in \mathbb{R}^p\) (all entries zero) and covariance matrix \(\Sigma = (0.5^{|j_1 - j_2|})_{p \times p}\), where $1 \leq j_1, j_2 \leq p$. For \(Y_i = 0\), generate \(X_i\) from a multivariate normal distribution with mean \(0.4 \times (\mathbf{1}_{p_1}^\top, \mathbf{0}_{p-p_1}^\top)^\top \in \mathbb{R}^p\) and the same covariance matrix \(\Sigma\), where $p_1 = 10$ and $\mathbf{1}_{p_1} \in \mathbb{R}^{p_1}$ is a vector of ones.
Example 2: Let \(X_i \in \mathbb{R}^p\) follow a multivariate normal distribution with mean \(\mathbf{0}_p\) and covariance matrix \(\Sigma\). Assume $Y_i \overset{\text{iid}}{\sim} \operatorname{Bern}(p_i)$, where $p_i = \operatorname{logit}(\beta_0 + \sum_{i=1}^{p} \beta_i X_i)$. Here, $\beta_i = 0.15$ for $1 \leq i \leq 10$ and $\beta_i = 0$ otherwise, with \(\beta_0 = \log(n_1/n_0)\).

We conduct two studies to demonstrate the breakdown of classical tests in rare event scenarios and evaluate the performance of the rescaled independence test.

Study 1. In this study, we illustrate the empirical size and power of the classical test statistics in rare events. We consider two scenarios to measure the effect of varying $n$ and $n_1$, respectively. For Pearson and Kendall's $\tau$ correlations, we first fix \(n_1 = 50\) and vary the total sample size \(n = 100, 1,000, 5,000, 10,000, 20,000\). Second, we fix \(n = 20,000\) and vary the number of positive cases \(n_1 = 50, 100, 500, 1,000, 2,000\). For the distance correlation and improved projection correlation, we fix \(n_1 = 50\) and vary the total sample size \(n = 200, 500, 1,000, 2,000, 5,000\). Then, we fix \(n = 5,000\) and vary the number of positive cases \(n_1 = 50, 100, 200, 500, 1,000\).

Study 2. In this study, we compare the empirical size and power of the RIT and BIT.
For the First-Order RIT, we fix $n = 100,000$ and consider different numbers of cases as $n_1 = 50, 100, 200, 500$ with $n_1 s = 500, 1,000, 2,000, 4,000$.
For the Second-Order RIT, we fix $n = 10,000$ and consider different numbers of cases as $n_1 = 30, 50, 70, 100$ with $n_1 s = 1,000, 1,500, 2,000, 2,500$.

\csubsection{Evaluation and Simulation Result}

For reliable evaluation of different test statistics across methods, we repeat each setting $M = 1000$ times. The performance is measured using the empirical rejection probability (ERP), calculated as $\operatorname{ERP} = M^{-1} \sum_m I(T_{\rm{stat}}^{(m)} \in \mathcal{C}_{\rm{stat}})$, where $T_{\rm{stat}}^{(m)}$ denotes the test statistic in the $m$-th replication, and $\mathcal{C}_{\rm{stat}}$ is the corresponding rejection region. Depending on whether $X$ and $Y$ are independent, the ERP reflects either empirical size or power. All tests are conducted at a significance level of $\alpha = 0.05$. The simulation results are summarized in Tables \ref{tab:Classical_Statistics}--\ref{tab:computation_pearson_and_kendall} and Table \ref{tab:computation_dcor_and_ipcor} in Appendix D.

Study 1. We first evaluate the performance of classical test statistics, as reported in Table \ref{tab:Classical_Statistics}. The results are consistent across different tests. First, the empirical sizes of Pearson correlation and Kendall's $\tau$ remain stable around 0.05, while those of distance correlation and improved projection correlation tend to be higher. Second, when $n_1$ is fixed, increasing the overall sample size does not improve power: Pearson and Kendall's $\tau$ maintain power between 0.5 and 0.6, while distance and projection correlations stabilize between 0.8 and 0.9. Third, increasing $n_1$ improves power, and when $n_1$ reaches 500, the power of all test statistics reaches 1. These results highlight that, in rare event scenarios, test effectiveness is determined by $n_1$ rather than the total sample size.

Study 2. We now examine the performance of the rescaled independence test statistics and the boosted sampled versions. Table \ref{tab:computation_pearson_and_kendall} presents the empirical size and power of rescaled Pearson \(T_r\), rescaled Kendall's \(T_\tau\), and their boosted counterparts based on the subsample across different sample sizes \(n_1 s\).
First, empirical sizes are all around 0.05, consistent with Theorems 1 and 2. Second, the power of $T_r$ and $T_{\tau}$ increases to 1 as $n_1$ grows, confirming the effectiveness of these statistics. Third, when \(n_1 s = 2,000\), the empirical power of $T_{r,\rm{S}}$ and $T_{\tau,\rm{S}}$ closely matches that of $T_r$ and $T_{\tau}$, while using only 2\% of the full sample size, significantly reducing computational costs. Similar findings apply to the second-order RITs in Appendix D Table \ref{tab:computation_dcor_and_ipcor}.

{		\begin{table}[htb]
			\begin{center}
				\caption{Comparison of Empirical Size and Power of Classical Statistics.}
				\label{tab:Classical_Statistics}
				\renewcommand\arraystretch{1.5} 
				\begin{tabular}{l|cccc|cccc}
					\hline \hline
					&\multicolumn{4}{c|}{Empirical Size}&\multicolumn{4}{c}{Empirical Power}\\
					\hline
					&Eg.1&Eg.2&Eg.1&\multicolumn{1}{c|}{Eg.2}&Eg.1&Eg.2&Eg.1&Eg.2\\
					\hline
					&\multicolumn{2}{c}{Pearson $\hat{r}^2$}&\multicolumn{2}{c|}{Kendall $\hat\tau$}&\multicolumn{2}{c}{Pearson $\hat{r}^2$}&\multicolumn{2}{c}{Kendall $\hat\tau$}\\
					\hline
					$n$&\multicolumn{8}{c}{Case 1: Fixed $n_1 = 50$}\\
					100   & 0.045 & 0.042 & 0.064 & 0.050 & 0.325 & 0.280 & 0.308 & 0.275 \\
					1000  & 0.044 & 0.056 & 0.061 & 0.049 & 0.543 & 0.498 & 0.538 & 0.495 \\
					5000  & 0.060 & 0.047 & 0.048 & 0.048 & 0.555 & 0.532 & 0.545 & 0.514 \\
					10000 & 0.049 & 0.038 & 0.047 & 0.039 & 0.566 & 0.532 & 0.552 & 0.521 \\
					20000 & 0.051 & 0.057 & 0.048 & 0.049 & 0.564 & 0.533 & 0.552 & 0.523 \\
					\hline
					$n_1$&\multicolumn{8}{c}{Case 2: Fixed $n=20,000$}\\
					50   & 0.054 & 0.043 & 0.054 & 0.045 & 0.562 & 0.534 & 0.549 & 0.523 \\
					100  & 0.054 & 0.053 & 0.047 & 0.054 & 0.845 & 0.834 & 0.840 & 0.825 \\
					500  & 0.049 & 0.043 & 0.063 & 0.051 & 1.000 & 1.000 & 1.000 & 1.000\\
					1000 & 0.043 & 0.047 & 0.051 & 0.044 & 1.000 & 1.000 & 1.000 & 1.000 \\
					2000 & 0.047 & 0.048 & 0.051 & 0.054 & 1.000 & 1.000 & 1.000 & 1.000 \\
					\hline
					&\multicolumn{2}{c}{DC $\widehat{\rm{dcorr}}$}&\multicolumn{2}{c|}{IPC $\widehat{\rm{IPcorr}}$}&\multicolumn{2}{c}{DC $\widehat{\rm{dcorr}}$}&\multicolumn{2}{c}{IPC $\widehat{\rm{IPcorr}}$}\\
					\hline
					$n$&\multicolumn{8}{c}{Case 1: Fixed $n_1 = 50$}\\
					200  & 0.071 & 0.058 & 0.067    & 0.077    & 0.684 & 0.641 & 0.724    & 0.623    \\
					500  & 0.065 & 0.068 & 0.058    & 0.050   & 0.774 & 0.732 & 0.786    & 0.752    \\
					1000 & 0.073 & 0.070 & 0.056    & 0.058    & 0.864 & 0.824 & 0.808    & 0.780    \\
					2000 & 0.075 & 0.062 & 0.058    & 0.044    & 0.872 & 0.830 & 0.826    & 0.798    \\
					5000 & 0.062 & 0.067 & 0.056    & 0.052    & 0.876 & 0.836 & 0.832    & 0.808    \\
					\hline
					$n_1$&\multicolumn{8}{c}{Case 2: Fixed $n=5,000$}\\
					50   & 0.072 & 0.074 & 0.068    & 0.058 & 0.884 & 0.836 & 0.832    & 0.808 \\
					100  & 0.080 & 0.066 & 0.054    & 0.072 & 0.996 & 0.984 & 0.978    & 0.964 \\
					200  & 0.074 & 0.044 & 0.070    & 0.058 & 1.000 & 1.000 & 1.000    & 1.000 \\
					500  & 0.082 & 0.054 & 0.074    & 0.038 & 1.000 & 1.000 & 1.000    & 1.000 \\
					1000 & 0.060 & 0.048 & 0.052    & 0.054 & 1.000 & 1.000 & 1.000    & 1.000 \\
					\hline
				\end{tabular}
			\end{center}
		\end{table}
}

{
		\begin{table}[htb]
		\begin{center}
			\caption{Comparison of Empirical Size and Power of Rescaled Pearson $T_r$ and  Rescaled Kendall's $T_\tau$ and the corresponding BITs with Different $n_1 s$.}
			\label{tab:computation_pearson_and_kendall}
			\renewcommand\arraystretch{1.5} 
				\begin{tabular}{l|ccccc|ccccc}
					\hline \hline
					&\multicolumn{5}{c|}{Empirical Size}&\multicolumn{5}{c}{Empirical Power}\\
					\hline
					&\multicolumn{1}{c|}{}&\multicolumn{4}{c|}{$n_1  s$}	&\multicolumn{1}{c|}{}&\multicolumn{4}{c}{$n_1  s$}\\
					\hline
					&\multicolumn{1}{c|}{RIT}&500&1000&2000&4000&\multicolumn{1}{c|}{RIT}&500&1000&2000&4000\\
					\hline
					$n_1$&\multicolumn{10}{c}{Rescaled Pearson $T_r$ Eg.1}\\
					50 & 0.055 & 0.046 & 0.052 & 0.046 & 0.047 & 0.569 & 0.500 & 0.533 & 0.541 & 0.556 \\
					100 & 0.049 & 0.040 & 0.053 & 0.044 & 0.056 & 0.845 & 0.773 & 0.817 & 0.837 & 0.843 \\
					200 & 0.056 & 0.051 & 0.067 & 0.050 & 0.051 & 0.987 & 0.955 & 0.964 & 0.981 & 0.986 \\
					500 & 0.055 & 0.048 & 0.049 & 0.057 & 0.051 & 1.000 & 0.997 & 1.000 & 1.000 & 1.000 \\
					\hline
					$n_1$&\multicolumn{10}{c}{Rescaled Kendall's $T_\tau$ Eg.1}\\
					50 & 0.048 & 0.048 & 0.052 & 0.044 & 0.047 & 0.532 & 0.504 & 0.511 & 0.519 & 0.532 \\
					100 & 0.051 & 0.047 & 0.047 & 0.053 & 0.053 & 0.831 & 0.770 & 0.791 & 0.815 & 0.814 \\
					200 & 0.051 & 0.052 & 0.049 & 0.053 & 0.048 & 0.987 & 0.929 & 0.967 & 0.980 & 0.983 \\
					500 & 0.050 & 0.049 & 0.055 & 0.053 & 0.046 & 1.000 & 0.996 & 0.999 & 1.000 & 1.000 \\
					\hline
					$n_1$&\multicolumn{10}{c}{Rescaled Pearson $T_r$ Eg.2}\\
					50 & 0.049 & 0.041 & 0.056 & 0.053 & 0.062 & 0.535 & 0.475 & 0.479 & 0.523 & 0.531 \\
					100 & 0.056 & 0.047 & 0.040 & 0.046 & 0.054 & 0.834 & 0.775 & 0.782 & 0.812 & 0.832 \\
					200 & 0.052 & 0.047 & 0.043 & 0.044 & 0.058 & 0.991 & 0.939 & 0.975 & 0.980 & 0.985 \\
					500 & 0.055 & 0.062 & 0.047 & 0.056 & 0.051 & 1.000 & 0.999 & 1.000 & 1.000 & 1.000 \\
					\hline
					$n_1$&\multicolumn{10}{c}{Rescaled Kendall's $T_\tau$ Eg.2}\\
					50 & 0.051 & 0.058 & 0.050 & 0.052 & 0.068 & 0.516 & 0.480 & 0.489 & 0.513 & 0.515 \\
					100 & 0.050 & 0.044 & 0.048 & 0.054 & 0.034 & 0.811 & 0.740 & 0.775 & 0.799 & 0.809 \\
					200 & 0.056 & 0.042 & 0.058 & 0.042 & 0.054 & 0.984 & 0.928 & 0.961 & 0.972 & 0.980 \\
					500 & 0.047 & 0.038 & 0.044 & 0.052 & 0.052 & 1.000 & 0.995 & 1.000 & 1.000 & 1.000 \\
					\hline
				\end{tabular}
		\end{center}
	\end{table}
}

\csubsection{Real Data Analysis}

In practical data analysis, we applied the RIT and BIT to two datasets to demonstrate its effectiveness in rare event scenarios: (1) a gene dataset on cancer-related transcriptional activity, and (2) a transaction dataset for detecting anomalous merchants. The test statistics, significance levels, and computation times were recorded. All analyses were conducted in R on a PC with a 2.40 GHz Intel Core i5-9300H processor and 16 GB of RAM. Continuous features were standardized to have a mean of zero and a standard deviation of one.

\textbf{Gene p53 Dataset.} This dataset relates to the p53 gene \citep{danziger2009predicting} (\url{https://archive.ics.uci.edu/ml/datasets/p53+Mutants}), which regulates cancer-suppressing proteins. The response variable $Y$ indicates p53 transcriptional activity, where $Y=1$ corresponds to active transcription (case) and $Y=0$ to inactive transcription (control). The dataset includes 5048 features: 4826 based on 2D electrostatic/surface properties and the remainder are 3D distance-based features. After removing missing data, the dataset contains 16,449 controls and 143 cases, resulting in significant imbalance.
We test the independence between the 5408-dimensional features and mutant p53 transcriptional activity using rescaled distance covariance ($T_{\rm dcov}$) and rescaled projection covariance ($T_{\rm IPcov}$). For  BIT, the sampling ratio is set to $s = (5, 20, 50)$, corresponding to control group sizes of \(n_1s = (715, 2860, 7150)\). Table \ref{tab:p53} shows that both 2D and 3D features are correlated with transcriptional activity. With $s=5$, we obtain results similar to those from the full sample, while the computational time is reduced to approximately 0.2\%.

{\begin{table}[htb]
			\begin{center}
				\caption{Test Statistics, p-Values, and Computational Time in the p53 Mutants Dataset independence test}
				\label{tab:p53}
				\renewcommand\arraystretch{1}
				\begin{tabular}{l|l|cccc}
					\hline \hline
					Feature Group&&RIT&$s = 5$&$s = 20$&$s = 50$\\
					\hline
					\multirow{6}{*}{2D}&$T_{\rm dcov ,S}$& 18.76 & 19.24 &  18.48 &  18.69\\
					&(p-value)& ($<$0.001)&($<$0.001)& ($<$0.001)& ($<$0.001)\\
					&time(sec)& 7906.5 &17.1  &238.6  &1472.0\\
					\cline{2-6}
					&$T_{\rm IPcov ,S}$ &0.30 & 0.31 & 0.30 & 0.30\\
					&(p-value)& ($<$0.001)& ($<$0.001)& ($<$0.001)& ($<$0.001)\\
					&time(sec) & 69424.5 & 171.7& 2214.4  & 13215.8\\
					\hline
					\multirow{6}{*}{3D}&$T_{\rm dcov ,S}$& 2.15 & 2.19 & 2.14 &  2.15\\
					&(p-value)& ($<$0.001)& ($<$0.001)& ($<$0.001)& ($<$0.001)\\
					&time(sec) & 587.4 & 1.0 & 17.4 &  111.1\\
					\cline{2-6}
					&$T_{\rm IPcov ,S}$ & 0.16 & 0.16 & 0.16 &  0.16\\
					&(p-value)& ($<$0.001)& ($<$0.001)& ($<$0.001)& ($<$0.001)\\
					&time(sec) & 4469.6 & 9.94 & 134.7 & 826.8\\
					\hline\hline
				\end{tabular}
				\renewcommand\arraystretch{1.5}
			\end{center}
		\end{table}
}

\textbf{Credit Dataset.} The \textit{Give Me Some Credit} dataset is a publicly available dataset on Kaggle (\url{https://www.kaggle.com/c/GiveMeSomeCredit/data}) that provides credit records along with individual features. Clients with serious delinquencies of 90 days or more in the past two years are defined as high-risk clients (cases with $Y=1$), while others are categorized as low-risk or normal clients (controls with $Y=0$). The dataset includes 10 relevant features, grouped into three categories: repayment ability, credit history, and property status. See Table \ref{tab:credit_intro} for detailed descriptions.
After removing missing data, the dataset contains 7,927 cases and 106,469 controls. The goal is to determine whether these feature groups are related to the clients' risk status. We apply $T_{\rm dcov}$ and $T_{\rm IPcov}$ to test the relationship between feature groups and client risk. As shown in Table \ref{tab:credit}, all feature groups are significantly correlated with default risk, with credit history being the most influential, followed by repayment ability. The BIT with $s=3$ achieves similar results to the full RIT but with only 7\% of the computational time, demonstrating its efficiency as established in Theorem \ref{thm:sample_U}.

{\begin{table}[htb]
			\begin{center}
				\caption{Test Statistics, p-Values, and Computational Time in the Credit Dataset independence test}
				\label{tab:credit}
				\renewcommand\arraystretch{1}
				\begin{tabular}{l|l|ccc}
					\hline \hline
					Feature Group &&RIT&$s = 3$&$s = 5$\\
					\hline
					\multirow{6}{*}{\parbox{2.5cm}{Repayment Ability}} & $T_{\rm dcov ,S}$ & 1.05 & 1.04 & 1.05 \\
					&(p-value) & ($<$0.001) & ($<$0.001) & ($<$0.001) \\
					&time(sec) & 161.9 & 12.6 & 27.4 \\
					\cline{2-5}
					& $T_{\rm IPcov ,S}$ & 0.62 & 0.62 & 0.62\\
					&(p-value) & ($<$0.001) & ($<$0.001) & ($<$0.001)\\
					&time(sec) & 14941.0 & 1123.5 & 2492.4\\
					\hline
					\multirow{6}{*}{\parbox{2.5cm}{Credit History}} & $T_{\rm dcov ,S}$ & 2.26 & 2.25 & 2.27 \\
					&(p-value) & ($<$0.001) & ($<$0.001) & ($<$0.001) \\
					&time(sec) & 139.9 & 10.5 & 23.4\\
					\cline{2-5}
					& $T_{\rm IPcov ,S}$ & 1.03 & 1.04 & 1.03\\
					&(p-value) & ($<$0.001) & ($<$0.001) & ($<$0.001)\\
					&time(sec) & 9013.0 & 807.7 & 1705.2\\
					\hline
					\multirow{6}{*}{\parbox{2.5cm}{Property Status}} & $T_{\rm dcov ,S}$ & 0.17 & 0.18 & 0.17 \\
					&(p-value) & ($<$0.001) & ($<$0.001) & ($<$0.001) \\
					&time(sec) & 141.6 & 10.6 & 23.3 \\
					\cline{2-5}
					& $T_{\rm IPcov ,S}$ &  0.10 & 0.10 & 0.10\\
					&(p-value) & ($<$0.001) & ($<$0.001) & ($<$0.001)\\
					&time(sec) & 14330.5 & 1131.2 & 2546.2\\
					\hline \hline
				\end{tabular}
				\renewcommand\arraystretch{1.5}
			\end{center}
		\end{table}
}

%
%

\csection{CONCLUDING REMARKS}

In this paper, we introduced the rescaled independence test framework for rare event scenarios and rigorously established its theoretical properties. Since the convergence rate depends primarily on the number of rare events,  we further theoretically explore the boosted independent test based on subsampling, which significantly reduces computational complexity while preserving the similar asymptotic properties. We investigate the theoretical results for both fixed and  high-dimensional settings and compare the local power of RIT and BIT. Then, we generalized the framework to accommodate multi-class responses. This extension broadens the applicability of RIT and BIT framework to a wider range of real-world scenarios.

We discuss three directions for future research in rare event analysis. First, exploring the theoretical properties of alternative sampling methods to enhance the efficiency and accuracy of rare event analysis is a valuable area of study. Second, developing independence tests for non-independent observations in network data is an intriguing research direction. Third, for large-scale datasets distributed across different organizations, investigating efficient test algorithms in federated learning with privacy preservation presents a promising and important topic.

\bibliographystyle{asa}
\bibliography{Mybib}

\newpage

\begin{center}
	{\bf\Large { Supplementary Material of ``Do more observations bring more information in rare events?''}}\\
	
\end{center}

\scsection{Appendix A. Useful Lemmas}
\renewcommand{\theequation}{A.\arabic{equation}}
\setcounter{equation}{0}

In this part, we establish useful tools for analyzing the covariance of $T$. It can be calculated that
\begin{equation}
	\begin{aligned}
		&\operatorname{Var}\left( T \right) = \binom{n_0}{m_0}^{-2} \binom{n_1}{m_1}^{-2}  \sum_{i,j,k,l} \mathrm{E} \left\lbrace  h\left(X_{i_1}^{(0)}, \cdots,X_{i_{m_0}}^{(0)} ;X_{j_1}^{(1)}, \cdots,X_{j_{m_1}}^{(1)}\right) \right. \\
		&~~~~~~~~~~~~~~~~~~~~~~~~~~~~~~~~~~~~~~~~~~~ \left.  h\left(X_{k_1}^{(0)}, \cdots,X_{k_{m_0}}^{(0)} ;X_{l_1}^{(1)}, \cdots,X_{l_{m_1}}^{(1)}\right) \right\rbrace \label{eq:Var_T} .
	\end{aligned}
\end{equation}
In order to analyze the covariance of $T$ in (\ref{eq:Var_T}), we need to consider the covariance of $h\left(X_{1}^{(0)}, \cdots,X_{m_0}^{(0)} ;X_{1}^{(1)}, \cdots,X_{m_1}^{(1)}\right)$ . By the definition of projection for the kernel function in \eqref{eq:projection_h_ab}, we have
\begin{equation}
	\begin{aligned}
		&\mathrm{E}\left\lbrace h_{a,b}\left( X_{1}^{(0)}, \cdots,X_{a}^{(0)} ;X_{1}^{(1)}, \cdots,X_{b}^{(1)}\right)  \right\rbrace \\
		= & \mathrm{E}\left[ \mathrm{E}\left\lbrace h\left(X_{1}^{(0)}, \cdots,X_{m_0}^{(0)} ;X_{1}^{(1)}, \cdots,X_{m_1}^{(1)}\right)  \mid X_{1}^{(0)}, \cdots,X_{a}^{(0)} ;X_{1}^{(1)}, \cdots,X_{b}^{(1)} \right\rbrace \right]\\
		= & \mathrm{E}\left\lbrace h\left(X_{1}^{(0)}, \cdots,X_{m_0}^{(0)} ;X_{1}^{(1)}, \cdots,X_{m_1}^{(1)}\right) \right\rbrace = 0 \label{eq:Ehab} .
	\end{aligned}
\end{equation}
Specifically, in the analysis of (\ref{eq:Var_T}), we need to consider the covariance of $h(\cdot)$, the covariance of projection $h_{a,b}(\cdot)$,  and the covariance between $h(\cdot)$ and $h_{a,b}(\cdot)$. To facilitate the proof of the asymptotic theoretical properties of $T$, we establish the following lemmas.

\begin{lemma}
	\label{lemma:cor_hh_rs}
	Define $\xi_{r,s}=\operatorname{Var} \{ h_{r,s} ( X_{1}^{(0)} , \cdots,X_{r}^{(0)};X_{1}^{(1)} , \cdots,X_{s}^{(1)} ) \}$. Then we have the following conclusions.
	\begin{enumerate}
		\item[(1)] {Covariance of $h(\cdot)$.} When  $\left| \left\lbrace i_1, \cdots, i_{m_0}\right\rbrace \cap \left\lbrace k_1, \cdots, k_{m_0}\right\rbrace \right| = r$  and \\ $\left| \left\lbrace j_1, \cdots, j_{m_1}\}\cap\{l_1, \cdots, l_{m_1}\right\rbrace \right| =s$, the covariance of $h(\cdot)$ could be calculated as
		\begin{equation}
			\label{eq:cor_hh_rs}
			\mathrm{E}\left\lbrace h\left(X_{i_1}^{(0)}, \cdots,X_{i_{m_0}}^{(0)} ;X_{j_1}^{(1)}, \cdots,X_{j_{m_1}}^{(1)}\right) h\left(X_{k_1}^{(0)},\cdots,X_{k_{m_0}}^{(0)} ;X_{l_1}^{(1)}, \cdots,X_{l_{m_1}}^{(1)}\right) \right\rbrace =\xi_{r,s}.
		\end{equation}
		\item[(2)] {Covariance of $h_{a,b}(\cdot)$.} When  $\big|\{i_1, \cdots, i_{a}\}\cap\{k_1, \cdots, k_{c}\}\big|=r$  and $\big|\{j_1, \cdots, j_{b}\}\cap\{l_1, \cdots, l_{d}\}\big|=s$, the covariance between $h_{a,b}(\cdot)$ and $h_{c,d}(\cdot)$ could be calculated as
		\begin{equation}
			\label{eq:cor_hab_hcd}
			\mathrm{E}\left\lbrace h_{a,b}\left( X_{1}^{(0)}, \cdots, X_{i_a}^{(0)};X_{1}^{(1)}, \cdots,X_{j_b}^{(1)}\right)  h_{c,d}\left(X_{1}^{(0)} ,\cdots,X_{k_{c}}^{(0)} ;X_{1}^{(1)}, \cdots,,X_{l_{d}}^{(1)}\right) \right\rbrace = \xi_{r,s}.
		\end{equation}
		
		\item[(3)] {Covariance between $h(\cdot)$ and $h_{a,b}(\cdot)$.}
		When  $\big|\{i_1, \cdots, i_{a}\}\cap\{k_1, \cdots, k_{m_0}\}\big|=r$  and $\big|\{j_1, \cdots, j_{b}\}\cap\{l_1, \cdots, l_{m_1}\}\big|=s$, the covariance between $h_{a,b}(\cdot)$ and $h(\cdot)$ could be calculated as
		\begin{equation}
			\label{eq:cor_hab_h}
			\mathrm{E}\left\lbrace h_{a,b}\left( X_{1}^{(0)} , \cdots,X_{i_a}^{(0)};X_{1}^{(1)} ,\cdots,X_{j_b}^{(1)}\right)  h\left(X_{1}^{(0)},\cdots,X_{k_{m_0}}^{(0)} ;X_{1}^{(1)} ,\cdots ,X_{l_{m_1}}^{(1)}\right) \right\rbrace = \xi_{r,s}.
		\end{equation}
		
	\end{enumerate}
	
\end{lemma}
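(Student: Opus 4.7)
\textbf{Proof proposal for Lemma \ref{lemma:cor_hh_rs}.}

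My plan is to handle all three parts by the same unified device: relabel the indices so that the shared observations come first, condition on those shared observations, and then use (i) the tower property of conditional expectation, (ii) the conditional independence of the non-shared observations across the two factors, and (iii) the symmetry of the kernel $h$ within each block. Let $S$ denote the set of shared observations, of size $r$ in the $0$-block and $s$ in the $1$-block; by the i.i.d. assumption on $\mathcal{D}^0$ and $\mathcal{D}^1$, the non-shared variables entering the left factor are jointly independent of those entering the right factor, given $S$.

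For part (1), write $A = h(X_{i_1}^{(0)},\ldots;X_{j_1}^{(1)},\ldots)$ and $B = h(X_{k_1}^{(0)},\ldots;X_{l_1}^{(1)},\ldots)$. Because $h$ is symmetric within each block, permuting indices does not change the function value, so I can relabel and assume the shared indices are $X_1^{(0)},\ldots,X_r^{(0)};X_1^{(1)},\ldots,X_s^{(1)}$. Conditioning on $S$, the definition of $h_{r,s}$ gives $\mathrm{E}(A\mid S)=\mathrm{E}(B\mid S)=h_{r,s}(X_1^{(0)},\ldots,X_r^{(0)};X_1^{(1)},\ldots,X_s^{(1)})$, and conditional independence yields $\mathrm{E}(AB\mid S)=h_{r,s}^2(S)$. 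Combining with the identity $\mathrm{E}\{h_{r,s}(\cdot)\}=0$ from \eqref{eq:Ehab} gives $\mathrm{E}(AB)=\operatorname{Var}\{h_{r,s}(\cdot)\}=\xi_{r,s}$.

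For parts (2) and (3), the same relabeling and conditioning argument applies; the only extra ingredient is the nested-projection identity $\mathrm{E}(h_{a,b}\mid S)=h_{r,s}(S)$ whenever $S$ corresponds to a subset (of sizes $r\le a$, $s\le b$) of the variables on which $h_{a,b}$ is defined. This follows immediately from the tower property, since $h_{a,b}$ itself is already a conditional expectation of $h$, and conditioning further on $S$ reduces the conditioning $\sigma$-algebra. Once this identity is in hand, conditional independence of the non-shared variables across the two factors again gives $\mathrm{E}(h_{a,b}\cdot h_{c,d}\mid S)=h_{r,s}^2(S)$ and $\mathrm{E}(h_{a,b}\cdot h\mid S)=h_{r,s}^2(S)$, and taking unconditional expectations produces $\xi_{r,s}$ in each case.

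The only real care is bookkeeping: verifying that the relabeling permitted by symmetry, the choice of $S$ as the intersection, and the disjointness of the remaining indices across the two factors are all mutually consistent, and confirming that $r\le\min(a,c,m_0)$ and $s\le\min(b,d,m_1)$ so that the nested-projection identity applies. I do not expect any technical difficulty beyond this combinatorial bookkeeping; the argument is essentially a two-block adaptation of the classical Hoeffding projection machinery, and no additional moment or integrability assumption is needed because $\xi_{r,s}$ is finite whenever $\xi_{m_0,m_1}=\operatorname{Var}\{h(\cdot)\}$ is finite (by the $L^2$-contraction property of conditional expectation).
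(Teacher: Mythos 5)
Your proposal is correct and takes essentially the same route as the paper: relabel by within-block symmetry, condition on the shared observations, and combine conditional independence of the non-shared variables with the tower property and the zero-mean identity $\mathrm{E}\{h_{r,s}(\cdot)\}=0$. The only cosmetic difference is that for parts (2)--(3) the paper conditions on the full argument sets of the projections and reduces to part (1), whereas you condition directly on the intersection via the nested-projection identity $\mathrm{E}(h_{a,b}\mid S)=h_{r,s}(S)$; the underlying mechanism is identical.
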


\begin{proof}[Proof of lemma \ref{lemma:cor_hh_rs}:]
	To prove the conclusion in \eqref{eq:cor_hh_rs}, we define the observation set $\mathcal{X}_{i,1,m}^{(q)}=\big\{X_{i_1}^{(q)}, \cdots,X_{i_{m}}^{(q)}\big\}$ for $q=0,1$, where the subscript $i$ ranges from $i_1$ to $i_{m}$. When $\big|\{i_1, \cdots, i_{m_0}\}\cap\{k_1, \cdots, k_{m_0}\}\big|=r$  and $\big|\{j_1, \cdots, j_{m_1}\}\cap\{l_1, \cdots, l_{m_1}\}\big|=s$, we denote $\mathcal{X}_{i,1,{r}}^{(0)}=\mathcal{X}_{k,1,{r}}^{(0)}$, $\mathcal{X}_{j,1,{s}}^{(1)}=\mathcal{X}_{l,1,{s}}^{(1)}$ as the shared elements in the respective groups. Then we have
	\begin{equation*}
		\begin{aligned}
			& \mathrm{E}\left\lbrace h\left( \mathcal{X}_{i,1,r}^{(0)},\mathcal{X}_{i,{r+1},{m_0}}^{(0)} ; \mathcal{X}_{j,1,s}^{(1)} , \mathcal{X}_{j,s+1,m_1}^{(1)}\right)  h\left( \mathcal{X}_{k,1,r}^{(0)},\mathcal{X}_{k,r+1,m_0}^{(0)} ; \mathcal{X}_{l,1,s}^{(1)} , \mathcal{X}_{l,s+1,m_1}^{(1)}\right)\right\rbrace \\
			= & \mathrm{E} \left[ \mathrm{E}\left\lbrace h\left( \mathcal{X}_{i,1,r}^{(0)},\mathcal{X}_{i,r+1,m_0}^{(0)} ; \mathcal{X}_{j,1,s}^{(1)} , \mathcal{X}_{j,s+1,m_1}^{(1)}\right) \right.\right.  \\
			& \qquad \qquad\qquad  \qquad \qquad \left.  \left.  h\left( \mathcal{X}_{k,1,r}^{(0)},\mathcal{X}_{k,r+1,m_0}^{(0)} ; \mathcal{X}_{l,1,s}^{(1)} , \mathcal{X}_{l,s+1,m_1}^{(1)}\right) \mid \mathcal{X}_{i,1,r}^{(0)}, \mathcal{X}_{j,1,s}^{(1)} \right\rbrace \right]
		\end{aligned}
	\end{equation*}
	\begin{equation*}
		\begin{aligned}
			= & \mathrm{E} \left[ \mathrm{E}\left\lbrace h\left( \mathcal{X}_{i,1,r}^{(0)},\mathcal{X}_{i,r+1,m_0}^{(0)} ; \mathcal{X}_{j,1,s}^{(1)} , \mathcal{X}_{j,s+1,m_1}^{(1)}\right) \mid \mathcal{X}_{i,1,r}^{(0)}, \mathcal{X}_{j,1,s}^{(1)} \right\rbrace \right.  \\
			& \qquad \qquad \qquad \qquad \quad \left.  \mathrm{E} \left\lbrace   h\left( \mathcal{X}_{k,1,r}^{(0)},\mathcal{X}_{k,r+1,m_0}^{(0)} ; \mathcal{X}_{l,1,s}^{(1)} , \mathcal{X}_{l,s+1,m_1}^{(1)}\right) \mid \mathcal{X}_{i,1,r}^{(0)}, \mathcal{X}_{j,1,s}^{(1)} \right\rbrace \right]  \\
			= & \mathrm{E}\left\lbrace h_{r,s}^{2}\left(\mathcal{X}_{i,1,r}^{(0)}, \mathcal{X}_{j,1,s}^{(1)} \right) \right\rbrace  = \xi_{r,s},
		\end{aligned}
	\end{equation*}
	where the second equality holds because of conditional independence, and the final equality follows from the definition of $h_{r,s}$.

	For the conclusion in \eqref{eq:cor_hab_hcd}, we consider $\big|\{i_1, \cdots, i_{a}\}\cap\{k_1, \cdots, k_{c}\}\big|=r$  and $\big|\{j_1, \cdots, j_{b}\}\cap\{l_1, \cdots, l_{d}\}\big|=s$ (without loss of generality, we denote $\mathcal{X}_{i,1,r}^{(0)}=\mathcal{X}_{k,1,r}^{(0)}$, $\mathcal{X}_{j,1,s}^{(1)}=\mathcal{X}_{l,1,s}^{(1)}$ as the shared elements).
	\begin{equation*}
		\begin{aligned}
			& \mathrm{E}\left\lbrace h_{a,b}\left( \mathcal{X}_{i,1,r}^{(0)},\mathcal{X}_{i,r+1,a}^{(0)} ; \mathcal{X}_{j,1,s}^{(1)} , \mathcal{X}_{j,s+1,b}^{(1)}\right)  h_{c,d}\left( \mathcal{X}_{k,1,r}^{(0)},\mathcal{X}_{k,r+1,c}^{(0)} ; \mathcal{X}_{l,1,s}^{(1)} , \mathcal{X}_{l,s+1,d}^{(1)}\right)\right\rbrace \\
			= & \mathrm{E} \left[ \mathrm{E}\left\lbrace h\left( \mathcal{X}_{i,1,r}^{(0)},\mathcal{X}_{i,r+1,m_0}^{(0)} ; \mathcal{X}_{j,1,s}^{(1)} , \mathcal{X}_{j,s+1,m_1}^{(1)}\right) \mid \mathcal{X}_{i,1,r}^{(0)},\mathcal{X}_{i,r+1,a}^{(0)} , \mathcal{X}_{j,1,s}^{(1)} , \mathcal{X}_{j,s+1,b}^{(1)} \right\rbrace \right.  \\
			& \qquad \qquad \left.  \mathrm{E} \left\lbrace   h\left( \mathcal{X}_{k,1,r}^{(0)},\mathcal{X}_{k,r+1,m_0}^{(0)} ; \mathcal{X}_{l,1,s}^{(1)} , \mathcal{X}_{l,s+1,m_1}^{(1)}\right) \mid \mathcal{X}_{k,1,r}^{(0)},\mathcal{X}_{k,r+1,c}^{(0)} , \mathcal{X}_{l,1,s}^{(1)} , \mathcal{X}_{l,s+1,d}^{(1)} \right\rbrace \right]  \\
			= & \mathrm{E} \left[ \mathrm{E}\left\lbrace h\left( \mathcal{X}_{i,1,r}^{(0)},\mathcal{X}_{i,r+1,m_0}^{(0)} ; \mathcal{X}_{j,1,s}^{(1)} , \mathcal{X}_{j,s+1,m_1}^{(1)}\right)  h\left( \mathcal{X}_{k,1,r}^{(0)},\mathcal{X}_{k,r+1,m_0}^{(0)} ;\mathcal{X}_{l,1,s}^{(1)} , \mathcal{X}_{l,s+1,m_1}^{(1)}\right) \right. \right. \\
			& \qquad \qquad \qquad \qquad \left. \left. \mid \mathcal{X}_{i,1,r}^{(0)},\mathcal{X}_{i,r+1,a}^{(0)}\mathcal{X}_{k,r+1,c}^{(0)},\mathcal{X}_{j,1,s}^{(1)} , \mathcal{X}_{j,s+1,b}^{(1)}, \mathcal{X}_{l,s+1,d}^{(1)} \right\rbrace \right]= \xi_{r,s},
		\end{aligned}
	\end{equation*}
	where the first equality follows from the definition of $h_{a,b}$, the final equality is derived from (\ref{eq:cor_hh_rs}).
	
	For the conclusion in \eqref{eq:cor_hab_h}, we consider $\big|\{i_1, \cdots, i_{a}\}\cap\{k_1, \cdots, k_{m_0}\}\big|=r$  and $\big|\{j_1, \cdots, j_{b}\}\cap\{l_1, \cdots, l_{m_1}\}\big|=s$ (without loss of generality, we denote $\mathcal{X}_{i,1,{r}}^{(0)}=\mathcal{X}_{k,1,{r}}^{(0)}$, $\mathcal{X}_{j,1,{s}}^{(1)}=\mathcal{X}_{l,1,{s}}^{(1)}$ as the shared elements).
	\begin{equation*}
		\begin{aligned}
			& \mathrm{E}\left\lbrace h_{a,b}\left( \mathcal{X}_{i,1,r}^{(0)},\mathcal{X}_{i,r+1,a}^{(0)} ; \mathcal{X}_{j,1,s}^{(1)} , \mathcal{X}_{j,s+1,b}^{(1)}\right) h\left( \mathcal{X}_{k,1,r}^{(0)},\mathcal{X}_{k,r+1,m_0}^{(0)} ; \mathcal{X}_{l,1,s}^{(1)} , \mathcal{X}_{l,s+1,m_1}^{(1)}\right)\right\rbrace \\
			= & \mathrm{E} \left[ \mathrm{E}\left\lbrace h_{a,b}\left( \mathcal{X}_{i,1,r}^{(0)},\mathcal{X}_{i,r+1,a}^{(0)} ; \mathcal{X}_{j,1,s}^{(1)} , \mathcal{X}_{j,s+1,b}^{(1)}\right)  h\left( \mathcal{X}_{k,1,r}^{(0)},\mathcal{X}_{k,r+1,m_0}^{(0)} ; \mathcal{X}_{l,1,s}^{(1)} , \mathcal{X}_{l,s+1,m_1}^{(1)}\right) \right.\right.\\
			& \qquad \qquad \qquad \qquad \qquad \qquad  \left. \left.  \mid \mathcal{X}_{i,1,r}^{(0)},\mathcal{X}_{i,r+1,a}^{(0)},\mathcal{X}_{j,1,s}^{(1)} , \mathcal{X}_{j,s+1,b}^{(1)} \right\rbrace \right]\\
		\end{aligned}
	\end{equation*}
	\begin{equation*}
		\begin{aligned}
			= & \mathrm{E} \left[ h_{a,b}\left( \mathcal{X}_{i,1,r}^{(0)},\mathcal{X}_{i,r+1,a}^{(0)} ; \mathcal{X}_{j,1,s}^{(1)} , \mathcal{X}_{j,s+1,b}^{(1)}\right)  \right.  \\
			& \qquad \qquad \left.  \mathrm{E} \left\lbrace   h\left( \mathcal{X}_{k,1,r}^{(0)},\mathcal{X}_{k,r+1,m_0}^{(0)} ; \mathcal{X}_{l,1,s}^{(1)} , \mathcal{X}_{l,s+1,m_1}^{(1)}\right) \mid \mathcal{X}_{i,1,r}^{(0)},\mathcal{X}_{i,r+1,a}^{(0)},\mathcal{X}_{j,1,s}^{(1)} , \mathcal{X}_{j,s+1,b}^{(1)} \right\rbrace \right]  \\
			= & \mathrm{E} \left\lbrace h_{a,b}\left( \mathcal{X}_{i,1,r}^{(0)},\mathcal{X}_{i,r+1,a}^{(0)} ; \mathcal{X}_{j,1,s}^{(1)} , \mathcal{X}_{j,s+1,b}^{(1)}\right)  h_{r,s} \left( \mathcal{X}_{i,1,r}^{(0)}; \mathcal{X}_{j,1,s}^{(1)} \right)\right\rbrace   =\xi_{r,s} ,
		\end{aligned}
	\end{equation*}
	where the second equality follows from conditional independence, and the last equality is derived from the formula (\ref{eq:cor_hab_hcd}).	\end{proof}

Next, to analyze the asymptotic properties of $h_{0,2}(\cdot)$, we provide a lemma that can be found in Chapter 5 of \cite{serfling2009approximation} (pp. 194).
\begin{lemma}
	\label{lemma:second_order_U_asym}
	\citep{serfling2009approximation} For the zero-mean symmetric kernel function $h_{0,2}(X_1^{(1)}, X_2^{(1)})$, we have
	$$S_{n_1} = \frac{1}{n_1} \sum_{i \neq j} h_{0,2}\left( X_i^{(1)}, X_j^{(1)}\right)  \xrightarrow{d} \sum_{j=1}^{\infty} \lambda_j \left( \chi_{1j}^2 - 1\right) ,$$
	where $\chi_{1j}^2$s are independent $\chi_1^2$ variables and $\lambda_j$ is eigenvalues of the linear operator $A$ with $A g(x) = \int_{-\infty}^{\infty} h_{0,2} (x,y) g(y) dF(y)$, which maps a function $g(x)$ to another $A g(x)$.
\end{lemma}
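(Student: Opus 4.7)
The plan is to establish the limit via the classical spectral (Karhunen--Lo\`eve) decomposition of the degenerate symmetric kernel. First, since $\mathrm{E}\{h_{0,2}^2(X_1^{(1)},X_2^{(1)})\} < \infty$, the integral operator $A$ defined by $Ag(x) = \int h_{0,2}(x,y)g(y)\,dF(y)$ is a compact, self-adjoint Hilbert--Schmidt operator on $L^2(dF)$. The Hilbert--Schmidt theorem yields a countable system of real eigenvalues $\{\lambda_j\}$ with $\sum_j \lambda_j^2 < \infty$ and orthonormal eigenfunctions $\{\phi_j\}$. Because the kernel is degenerate in the sense used here (in the paper's context, $\xi_{0,1} = 0$ forces $\mathrm{E}\{h_{0,2}(X_1^{(1)},x)\} = 0$ for $F$-a.e. $x$), each eigenfunction satisfies $\mathrm{E}\phi_j(X_1^{(1)}) = 0$, and by orthonormality $\mathrm{E}\{\phi_j(X_1^{(1)})\phi_k(X_1^{(1)})\} = \mathbf{1}\{j=k\}$. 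Mercer-type considerations then give the $L^2$ expansion $h_{0,2}(x,y) = \sum_{j \geqslant 1} \lambda_j \phi_j(x)\phi_j(y)$.

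Second, I would substitute the truncation $h_{0,2}^{(K)}(x,y) = \sum_{j=1}^{K} \lambda_j \phi_j(x)\phi_j(y)$ into $S_{n_1}$ and rearrange the sum to obtain
\begin{equation*}
S_{n_1}^{(K)} = \sum_{j=1}^{K} \lambda_j \left\{ \left( \frac{1}{n_1^{1/2}} \sum_{i=1}^{n_1} \phi_j(X_i^{(1)}) \right)^{\!2} - \frac{1}{n_1}\sum_{i=1}^{n_1} \phi_j^2(X_i^{(1)}) \right\}.
\end{equation*}
The vector $\bigl(n_1^{-1/2}\sum_i \phi_1(X_i^{(1)}),\ldots,n_1^{-1/2}\sum_i \phi_K(X_i^{(1)})\bigr)$ converges jointly to a standard $K$-dimensional Gaussian by the multivariate CLT (the $\phi_j$ being mean-zero and orthonormal), while $n_1^{-1}\sum_i \phi_j^2(X_i^{(1)}) \xrightarrow{p} 1$ by the law of large numbers. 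Thus the continuous mapping theorem gives $S_{n_1}^{(K)} \xrightarrow{d} \sum_{j=1}^{K} \lambda_j (\chi_{1j}^2 - 1)$ with independent $\chi_{1j}^2$.

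The main obstacle is the passage from the truncated to the full statistic. I would control the remainder by a direct second-moment computation in the spirit of Lemma~\ref{lemma:cor_hh_rs}: using orthonormality of the $\phi_j$ and the fact that $S_{n_1} - S_{n_1}^{(K)}$ is itself a degenerate second-order $U$-type sum driven by the residual kernel $\sum_{j > K} \lambda_j \phi_j(x)\phi_j(y)$, one obtains
\begin{equation*}
\operatorname{Var}(S_{n_1} - S_{n_1}^{(K)}) \leqslant 2 \sum_{j > K} \lambda_j^2,
\end{equation*}
uniformly in $n_1$, which vanishes as $K \to \infty$ because $\sum_j \lambda_j^2 < \infty$. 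The same bound applies to the tail $\sum_{j > K} \lambda_j(\chi_{1j}^2 - 1)$ of the candidate limit. Finally, I would invoke the standard approximation lemma for weak convergence (Billingsley, Theorem 3.2): a double limit argument letting first $n_1 \to \infty$ and then $K \to \infty$ yields $S_{n_1} \xrightarrow{d} \sum_{j \geqslant 1} \lambda_j(\chi_{1j}^2 - 1)$, completing the proof.
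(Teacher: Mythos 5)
Your argument is correct, and it is the classical spectral-decomposition proof of the limiting law of a degenerate second-order $U$-statistic: Hilbert--Schmidt expansion of the kernel, truncation to a finite sum of squared normalized partial sums, the multivariate CLT plus continuous mapping for the truncated statistic, a uniform-in-$n_1$ variance bound $2\sum_{j>K}\lambda_j^2$ on the remainder, and the approximation lemma to pass to the full series. The paper does not prove this lemma at all --- it cites it directly from Serfling (2009, Ch.~5) --- and your proof is essentially the argument given in that reference, so there is nothing to compare beyond noting that you have supplied the proof the paper delegates to the literature.
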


\scsection{Appendix B. Detailed Discussion for Examples of RIT}
\renewcommand{\theequation}{B.\arabic{equation}}
\setcounter{equation}{0}

1. Discussion on the Rescaled Pearson $T_r$. On the one hand, it can be calculated that the Pearson correlation coefficient $\hat{r}^2$ is \begin{equation}
	\begin{aligned}
		\label{eq:pearson_def}
		\hat{r}^2
		&= \sqrt{\frac{n_1}{n}} \frac{\frac{1}{n_1} \sum_{i=1}^{n_1} X_i^{(1)}  - \frac{1}{n}\sum_{i=1}^{n} X_i }{\sqrt{\frac{1}{n} \sum_{i=1}^{n} (X_i)^2 - (\frac{1}{n} \sum_{i=1}^{n} X_i)^2} \sqrt{1 - \frac{n_1}{n}}}.
	\end{aligned}
\end{equation}
On the other hand, if a kernel function \( h(X^{(1)},X^{(0)}) = X^{(1)} - X^{(0)} \) is adopted, we can calculate that the RIT is
\begin{equation}
	\begin{aligned}
		\label{eq:Boosted Pearson using kernel}
		T_{r}  = \frac{1}{n_1} \sum_{i=1}^{n_1} X_i^{(1)}  - \frac{1}{n_0}\sum_{i=1}^{n_0} X_i^{(0)}.
	\end{aligned}
\end{equation}
Then it can be verified that the difference between the numerator of \eqref{eq:pearson_def} rescaled by a factor of $\left( n/n_1\right)^{1/2}$ and \eqref{eq:Boosted Pearson using kernel} is $(\overline{X}^{(0)} - \overline{X}^{(1)})  n_1/n $, which converges to 0 as $n_1/n \rightarrow 0$, where $\overline{X}^{(0)}$ and $\overline{X}^{(1)}$ denote the average of $X^{(0)}_i$ and $X^{(1)}_i$ respectively.

2. Discussion on the Rescaled Kendall's $T_{\tau}$. The Kendall's $\tau$ can be calculated as
\begin{equation}
	\begin{aligned}
		\label{eq:tau_def}
		\tau(X,Y)
		& = \frac{2}{n^2} \sum_{i = 1}^{n_1} \sum_{j = 1}^{n_0} \operatorname{sgn}\left( X_i^{(1)} - X_j^{(0)} \right).
	\end{aligned}
\end{equation}
If a kernel function \( h(X^{(1)},X^{(0)}) = \operatorname{sgn}(X^{(1)} - X^{(0)}) \) is adopted, the rescaled Kendall's $T_{\tau}$ is
\begin{equation}
	\begin{aligned}
		\label{eq:Boosted tau using kernel}
		T_{\tau} = \frac{1}{n_0 n_1} \sum_{i = 1}^{n_1} \sum_{j = 1}^{n_0} \operatorname{sgn}\left( X_i^{(1)} - X_j^{(0)} \right).
	\end{aligned}
\end{equation}
Then it can be verified that \eqref{eq:tau_def}, when rescaled by a factor of \(n^2(2 n_1 n_0)^{-1} \), is equal to \eqref{eq:Boosted tau using kernel}. Therefore, in Section 2.2, \eqref{eq:tau_with_U_form} is referred to as the rescaled Kendall's \( \tau \).

Then, we discuss the equivalence of Kendall's $\tau$ and Spearman's $\rho$ in rare event scenario. Specifically, it can be verified Spearman's $\rho$ correlation between $X$ and $Y$ is
\begin{equation}
	\label{eq:rho_def}
	\begin{aligned}	
		\widehat{\rho}(X,Y) & = n^{-3}\sum_{(i,j,k)}^n\mathrm{sgn}\left( X_{i}-X_{j}\right) \mathrm{sgn}\left( Y_{i}-Y_{k}\right)\\
		& = \frac{1}{n^3} \left\lbrace  n_1\sum_{i=1}^{n_0} \sum_{j=1}^{n_0} \mathrm{sgn}\left( X_{i}^{(0)} - X_{j}^{(0)}\right) + n_0 \sum_{i=1}^{n_1} \sum_{j=1}^{n_1} \mathrm{sgn}\left( X_{i}^{(1)} - X_{j}^{(1)}\right) \right. \\
		& \qquad \qquad \qquad \qquad \qquad \qquad+ \left. n \sum_{i=1}^{n_0} \sum_{j=1}^{n_1} \mathrm{sgn}\left( X_{i}^{(0)} - X_{j}^{(1)}\right)\right\rbrace .
	\end{aligned}
\end{equation}
The first two terms in formula (\ref{eq:rho_def}) are clearly zero, so only the last term plays a role in the test for independence. Thus, it is identical to the Kendall's $\tau$.

3. Discussion on the Rescaled Distance Covariance $T_{\rm dcov}$. The classical distance covariance can be defined as
\begin{equation*}
	\begin{aligned}	
		& \widehat{\operatorname{dcov}}^2(X , Y) = \widehat{S}_1(X,Y) + \widehat{S}_2(X,Y) - 2\widehat{S}_3(X,Y),\\
		& \widehat{S}_{1}(X , Y) = \frac1{n^2} \sum_{i=1}^n\sum_{j=1}^n\left\| X_{i}-X_{j}\right\|_2 \left\| Y_i-Y_j\right\|_2, \\
		& \widehat{S}_{2}(X , Y) = \frac1{n^2}\sum_{i=1}^n\sum_{j=1}^n\left\| X_{i}-X_{j}\right\|_2  \frac1{n^2}\sum_{i=1}^n\sum_{j=1}^n\left\| Y_i - Y_j\right\|_2, \\
		& \widehat{S}_{3}(X , Y) =  \frac1{n^3} \sum_{i=1}^n \sum_{j=1}^n \sum_{l=1}^n \left\| X_{i}-X_{l}\right\|_2 \left\| Y_j-{Y}_l\right\|_2.
	\end{aligned}
\end{equation*}

It can be further verified that in rare event data,
\begin{equation}
	\begin{aligned}
		\label{eq:dc_def}
		\hspace{-17em}
		&\widehat{S}_{1}(X , Y) = \frac2{n^2} \sum_{i=1}^{n_1} \sum_{j=1}^{n_0} \left\| X_{i}^{(1)}-X_{j}^{(0)}\right\|_2, \\
		&\widehat{S}_{2}(X , Y) = \frac{2 n_0 n_1} {n^4} \left( \sum_{i=1}^{n_0} \sum_{j=1}^{n_0} \left\| X_{i}^{(0)}-X_{j}^{(0)}\right\|_2  + 2 \sum_{i=1}^{n_0} \sum_{j=1}^{n_1} \left\| X_{i}^{(0)}-X_{j}^{(1)}\right\|_2 \right.  \\
		& \qquad \qquad \qquad    \qquad \qquad \qquad  \qquad \qquad \qquad  \qquad + \left.  \sum_{i=1}^{n_1} \sum_{j=1}^{n_1} \left\| X_{i}^{(1)}-X_{j}^{(1)}\right\|_2  \right),
	\end{aligned}
\end{equation}
\begin{equation*}
	\begin{aligned}
		&\widehat{S}_{3}(X , Y) =  \frac{n_0}{n^3} \sum_{i=1}^{n_1} \sum_{j=1}^{n_1}\left\| X_{i}^{(1)}-X_{j}^{(1)}\right\|_2  + \frac{1}{n^2}\sum_{i=1}^{n_0} \sum_{j=1}^{n_1}\left\| X_{i}^{(0)}-X_{j}^{(1)}\right\|_2  \\
		& \qquad \qquad \qquad \qquad   \qquad \qquad  \qquad \qquad \qquad  \qquad + \frac{n_1}{n^3}  \sum_{i=1}^{n_0} \sum_{j=1}^{n_0}\left\| X_{i}^{(0)}-X_{j}^{(0)}\right\|_2,
	\end{aligned}
\end{equation*}
\begin{equation*}
	\begin{aligned}	
		&\widehat{\operatorname{dcov}}^2(X , Y) = \frac{n_0^2 n_1^2}{n^4} \left(\frac{4}{n_0 n_1} \sum_{i=1}^{n_1} \sum_{j=1}^{n_0}\left\| X_{i}^{(1)}-X_{j}^{(0)}\right\|_2  - \frac{2}{n_0^2} \sum_{i=1}^{n_0} \sum_{j=1}^{n_0}\left\| X_{i}^{(0)}-X_{j}^{(0)}\right\|_2  \right. \\
		& \qquad \qquad \qquad  \qquad \qquad \qquad  \qquad \qquad \qquad  \qquad \left. - \frac{2}{n_1^2} \sum_{i=1}^{n_1} \sum_{j=1}^{n_1}\left\| X_{i}^{(1)}-X_{j}^{(1)}\right\|_2\right).
	\end{aligned}
\end{equation*}

When we take the kernel function as
\begin{equation*}
	\begin{aligned}	
		h_{\operatorname{dcov}}(X_1^{(0)},X_2^{(0)}, X_1^{(1)},X_2^{(1)}) &= \left\|X_1^{(1)} - X_1^{(0)}\right\|_2 + \left\|X_1^{(1)} - X_2^{(0)}\right\|_2 + \left\|X_2^{(1)} - X_1^{(0)}\right\|_2
	\end{aligned}
\end{equation*}
\begin{equation*}
	\begin{aligned}
		&+ \left\|X_2^{(1)} - X_2^{(0)}\right\|_2 - 2 \left\|X_1^{(0)} - X_2^{(0)}\right\|_2 - 2 \left\|X_1^{(1)} - X_2^{(1)}\right\|_2,
	\end{aligned}
\end{equation*}
the rescaled distance covariance $T_{\rm dcov}$ can be expressed as
\begin{equation}
	\begin{aligned}	
		\label{eq:dc with kernel}
		T_{\rm dcov} &= \frac{1}{n_1(n_1-1)n_0(n_0-1)}\left\lbrace  4(n_0-1)(n_1-1)\sum_{i=1}^{n_1} \sum_{j=1}^{n_0}\left\| X_{i}^{(1)}-X_{j}^{(0)}\right\|_2 \right. \\
		& \qquad \qquad \left. - 2 n_1(n_1-1) \sum_{i \neq j}^{n_1} \left\|X_i^{(0)} - X_j^{(0)} \right\|_2 - 2 n_0 (n_0-1) \sum_{i \neq j}^{n_0} \left\|X_i^{(1)} - X_j^{(1)} \right\|_2 \right\rbrace \\
		& = \frac{4}{n_0 n_1} \sum_{i=1}^{n_1} \sum_{j=1}^{n_0}\left\| X_{i}^{(1)}-X_{j}^{(0)}\right\|_2  - \frac{2}{n_0 (n_0-1)} \sum_{i=1}^{n_0} \sum_{j=1}^{n_0}\left\| X_{i}^{(0)}-X_{j}^{(0)}\right\|_2
	\end{aligned}
\end{equation}
\begin{equation*}
	\begin{aligned}	
		& \qquad \qquad  \qquad \qquad  \qquad \qquad \qquad  \qquad  - \frac{2}{n_1 (n_1-1)} \sum_{i=1}^{n_1} \sum_{j=1}^{n_1}\left\| X_{i}^{(1)}-X_{j}^{(1)}\right\|_2.
	\end{aligned}
\end{equation*}
It can be noted that the difference between \eqref{eq:dc_def} multiplied by a rescale factor of $n^4/n_0^2 n_1^2$, and \eqref{eq:dc with kernel} lies only in
\begin{equation*}
	\frac{2}{n_0^2 (n_0-1)} \sum_{i=1}^{n_0} \sum_{j=1}^{n_0}\left\| X_{i}^{(0)}-X_{j}^{(0)}\right\|_2 + 	\frac{2}{n_1^2 (n_1-1)} \sum_{i=1}^{n_1} \sum_{j=1}^{n_1}\left\| X_{i}^{(1)}-X_{j}^{(1)}\right\|_2,
\end{equation*}
which is a term that can be neglected as $n_1\rightarrow\infty$.

For the improved projection covariance, its computation is almost identical to that of distance covariance. Therefore, we omit the detailed proof.

\scsection{Appendix C. Proof of Theorems}
\renewcommand{\theequation}{C.\arabic{equation}}
\setcounter{equation}{0}

\scsubsection{Appendix C.1. Proof of Theorem \ref{thm:gen_U}}

We begin by proving the first part of the theoretical result in Theorem \ref{thm:gen_U}. We first calculate the leading term of the variance of $T$. Second, we establish the asymptotic normality of $T$.

Step 1. (Variance of $T$)
For the variance of $T$, we have
\begin{equation}
	\label{eq:varT}
	\begin{aligned}
		\operatorname{Var}\left( T\right) & = \binom{n_0}{m_0}^{-2} \binom{n_1}{m_1}^{-2} \sum_c \mathrm{E} \left\lbrace h \left(X_{i^0_{1}}^{(0)}, \cdots,X_{i^0_{m_0}}^{(0)} ;X_{i^1_{1}}^{(1)}, \cdots,X_{i^1_{m_1}}^{(1)}\right) \right. \\
		& \qquad  \qquad  \qquad   \qquad  \qquad  \qquad  \left. h\left(X_{j^0_{1}}^{(0)}, \cdots,X_{j^0_{m_0}}^{(0)} ;X_{j^1_{1}}^{(1)}, \cdots,X_{j^1_{m_1}}^{(1)}\right)\right\rbrace ,
	\end{aligned}
\end{equation}
where the sum is taken over all index combinations.
By \eqref{eq:cor_hh_rs} in Appendix A, $$\mathrm{E} \left\lbrace h \left(X_{i^0_{1}}^{(0)}, \cdots,X_{i^0_{m_0}}^{(0)} ;X_{i^1_{1}}^{(1)}, \cdots,X_{i^1_{m_1}}^{(1)}\right) h\left(X_{j^0_{1}}^{(0)}, \cdots,X_{j^0_{m_0}}^{(0)} ;X_{j^1_{1}}^{(1)}, \cdots,X_{j^1_{m_1}}^{(1)}\right)\right\rbrace = \xi_{r,s}$$ holds when $\left| \left\lbrace i^0_{1}, \cdots, i^0_{m_0}\right\rbrace \cap \left\lbrace j^0_1, \cdots, j^0_{m_0}\right\rbrace \right| = r$ and $\left| \left\lbrace i^1_1, \cdots, i^1_{m_1}\right\rbrace \cap \left\lbrace j^1_1, \cdots, j^1_{m_1}\right\rbrace \right| = s$. Therefore, the most frequently occurring non-zero terms in the sum are $\xi_{1,0}$ and $\xi_{0,1}$, with the respective counts being $\binom{n_0}{1}\binom{n_0-1}{m_0-1}\binom{n_0-m_0}{m_0-1} \binom{n_1}{m_1}\binom{n_1-m_1}{m_1}$ and $\binom{n_1}{1}\binom{n_1-1}{m_1-1}\binom{n_1-m_1}{m_1-1}$ $ \binom{n_0}{m_0}\binom{n_0-m_0}{m_0}$. Thus,
\begin{equation}
	\label{eq:var_T}
	\begin{aligned}
		\operatorname{Var}\left( T\right) &=\binom{n_0}{m_0}^{-2} \binom{n_1}{m_1}^{-2} \left\lbrace  \binom{n_0}{1}\binom{n_0-1}{m_0 -1}\binom{n_0-m_0}{m_0 -1} \binom{n_1}{m_1}\binom{n_1-m_1}{m_1 } \xi_{1,0} + \right. \\
		&\qquad \qquad \qquad \left. \binom{n_1}{1}\binom{n_1-1}{m_1 -1}\binom{n_1-m_1}{m_1 -1} \binom{n_0}{m_0}\binom{n_0-m_0}{m_0 } \xi_{0,1} \right\rbrace + O\left( \frac{1}{n_0 n_1}\right) \\
		& = \frac{m_0^2}{n_0} \xi_{1,0} + \frac{m_1^2}{n_1} \xi_{0,1} + O\left({n_0^{-1} n_1^{-1}}\right) = \frac{m_1^2}{n_1}\xi_{0,1} + O\left( n_0^{-1}\right),
	\end{aligned}
\end{equation}
where the last equality follows from the fact that $n_1/n_0\rightarrow 0$.

Step 2. (Asymptotic normality of $T$) To apply the Central Limit Theorem, we define $V = m_1 n_1^{-1} \sum_{i=1}^{n_1} h_{0,1}( X^{(1)}_i)$.
By the independence of $X^{(1)}_i$, the Central Limit Theorem (CLT) states that ${n_1}^{1/2} V \xrightarrow{d} N(0,m_1^2 \xi_{0,1})$. Therefore, it suffices to prove that as $n_1 \rightarrow \infty$, ${n_1}^{1/2}  (T-V) \xrightarrow{p} 0$. Since $\mathrm{E}\left\lbrace {{n_1}^{1/2} (T - V)}\right\rbrace  = 0$, it is sufficient to demonstrate that as $n_1 \rightarrow \infty$, $\mathrm{Var} \left\lbrace {n_1}^{1/2} (T - V) \right\rbrace  \rightarrow 0$. From \eqref{eq:var_T}, we have $\mathrm{Var} \left( {n_1}^{1/2} T\right) \rightarrow m_1^2 \xi_{0,1}$. By the definition of $\xi_{0,1}$, we have $\mathrm{Var}\left({n_1}^{1/2} V \right) = m_1^2 \xi_{0,1}$. It can be verified that $$
\begin{aligned}
	&\mathrm{Cov}\left( {n_1}^{1/2} T,{n_1}^{1/2} V\right)  = m_1 n_1 \binom{n_1}{m_1}^{-1} \binom{n_0}{m_0}^{-1} \mathrm{E} \left\lbrace h_{0,1}\left(  X_1^{(1)}\right)  T \right\rbrace \\
	=& m_1 n_1 \binom{n_1}{m_1}^{-1} \binom{n_1 - 1}{m_1 - 1} \mathrm{E} \left\lbrace  h_{0,1}\left(  X_1^{(1)}\right)   h\left(  X_1^{(0)},\cdots X_{m_0}^{(0)},X_1^{(1)} , \cdots ,X_{m_1}^{(1)}\right)  \right\rbrace \\
	=& m_1^2 \xi_{0,1},
\end{aligned}
$$
where the first equality follows from the symmetry of $T$ and $V$, and the last equality from formula \eqref{eq:cor_hab_h} in Appendix A. Then we have $\operatorname{Var}\{ {n_1}^{1/2}(T-V)\} = \operatorname{Var}\left( {n_1}^{1/2} T\right) +\operatorname{Var}\left( {n_1}^{1/2} V\right) -2\operatorname{Cov}\left( {n_1}^{1/2}  T, {n_1}^{1/2}  V\right)  \rightarrow 0$ and ${n_1}^{1/2} T \xrightarrow{d} N\left( 0,m_1^2 \xi_{0,1}\right) $.

Then we prove the second part of the theoretical result in Theorem \ref{thm:gen_U}. This approach is similar to the proof of the first part in Theorem \ref{thm:gen_U}, where we first calculate the variance of $T$, and then construct a statistic with a known asymptotic distribution to approximate $T$.

Step 1. (Variance of $T$) The variance of \( T \) is given by formula \eqref{eq:varT}. When $\xi_{0,1} = 0$, the most frequently occurring non-zero terms in the sum are $\xi_{1,0}$ and $\xi_{0,2}$ (since $n_1/n_0\rightarrow 0$), with the respective counts being $\binom{n_0}{1}\binom{n_0-1}{m_0 -1}\binom{n_0-m_0}{m_0 -1} \binom{n_1}{m_1}\binom{n_1-m_1}{m_1}$ and $\binom{n_0}{m_0}\binom{n_0-m_0}{m_0}\binom{n_1}{2} \binom{n_1-2}{m_0-2}\binom{n_1-m_0}{m_0-2}$. Thus,	
\begin{equation*}
	\begin{aligned}
		\operatorname{Var}\left( T\right) &=\binom{n_0}{m_0}^{-2} \binom{n_1}{m_1}^{-2} \left\lbrace  \binom{n_0}{1}\binom{n_0-1}{m_0 -1}\binom{n_0-m_0}{m_0 -1} \binom{n_1}{m_1}\binom{n_1-m_1}{m_1 } \xi_{1,0} + \right. \\
		&\qquad \qquad \qquad \left. \binom{n_0}{m_0}\binom{n_0-m_0}{m_0}\binom{n_1}{2} \binom{n_1-2}{m_0-2}\binom{n_1-m_0}{m_0-2} \xi_{0,2} \right\rbrace + O\left( \frac{1}{n_0 n_1}\right) \\
		& = \frac{m_0^2}{n_0} \xi_{1,0} + \frac{m_1^2 (m_1-1)^2}{2 n_1^2} \xi_{0,2} + O\left({n_0^{-1} n_1^{-1}}\right) = \frac{m_1^2 (m_1-1)^2}{2 n_1^2}\xi_{0,2} + o\left( n_1^{-2}\right) ,
	\end{aligned}
\end{equation*}
where the final equality can be derived from either $\xi_{1,0} = 0$ or ${n_1^2}/{n_0} \rightarrow 0$. Notably, in the final equality, when the condition $\xi_{1,0} = 0$ is satisfied, the last equality holds immediately; when the condition $n_1^{-2}n_0 \rightarrow 0$ holds, the first term will be dominated by the second one.

Step 2. (Asymptotic normality of $T$) Denote $V = m_1 (m_1-1) n_1^{-1} (n_1-1)^{-1} $ $ \sum_{j > i} h_{0,2}( X^{(1)}_i,X^{(1)}_j)$. In this context, since \(h_{0,2}(\cdot,\cdot)\) is a symmetric kernel function with zero mean, we apply Lemma \ref{lemma:second_order_U_asym} in Appendix A, leading to an asymptotic distribution $n_1 V \xrightarrow{d} m_1(m_1-1) \sum_{j=1}^\infty\lambda_j(\chi_{1j}^2-1)/2$. It suffices to prove that $n_1 (T - V) \xrightarrow{p} 0$. Since $\mathrm{E} \left\lbrace  n_1 (T - V) \right\rbrace  = 0$, it is sufficient to demonstrate that $\mathrm{Var} \left\lbrace  n_1 (T - V) \right\rbrace \rightarrow 0$. It is known that $\mathrm{Var} \left( n_1 T\right) \rightarrow m_1^2 (m_1-1)^2 \xi_{0,2} /2$. On the other hand, we have
$\mathrm{Var} \left(  n_1 V \right) = m_1^2 (m_1-1)^2 \left(n_1 -1 \right)^{-2}  \sum_{j > i} \sum_{l>k} \mathrm{E} \{ h_{0,2}( X^{(1)}_i,X^{(1)}_j)  h_{0,2}( X^{(1)}_k,X^{(1)}_l) \} = m_1^2 (m_1-1)^2 \left(n_1 -1 \right)^{-2}  \binom{n_1}{2}  \xi_{0,2} \rightarrow m_1^2 (m_1-1)^2 \xi_{0,2}/2$, where the second equality follows from formula \eqref{eq:cor_hab_hcd}. Finally,  for the covariance between $n_1 T$ and $n_1 V$,
\begin{equation}
	\label{eq:varify_cov}
	\begin{aligned}
		&\mathrm{Cov}(n_1 T,n_1 V) = n_1^2 \frac{m_1 (m_1-1)}{2} \mathrm{E} \left\lbrace T h_{0,2}\left( X_1^{(1)},X_2^{(1)}\right)  \right\rbrace \\
		= & n_1^2 \frac{m_1 (m_1-1)}{2} \binom{n_0}{m_0}^{-1} \binom{n_1}{m_1}^{-1} \\
		& \qquad  \sum \mathrm{E} \left\lbrace   h_{0,2}\left( X_1^{(1)},X_2^{(1)}\right)  h\left( X^{(0)}_{i_1},\cdots,X^{(0)}_{i_{m_0}} ; X^{(1)}_{i_1},\cdots,X^{(1)}_{i_{m_1}}\right) \right\rbrace\\
		= & n_1^2 \frac{m_1 (m_1-1)}{2} \binom{n_1}{m_1}^{-1} \binom{n_1-2}{m_1-2} \xi_{0,2} \rightarrow  m_1^2 (m_1-1)^2  \xi_{0,2} / 2.
	\end{aligned}
\end{equation}

The terms in the summation can be obtained directly using formula (\ref{eq:cor_hab_h}), leading to
$\operatorname{Var}\left\lbrace n_1 (T-V) \right\rbrace
=\operatorname{Var}\left( n_1 T\right) +\operatorname{Var}\left( n_1 V\right) -2\operatorname{Cov}\left( n_1  T, n_1  V\right)  \rightarrow 0$,
which means $n_1 T \xrightarrow{d} m_1(m_1-1) \sum_{j=1}^\infty\lambda_j(\chi_{1j}^2-1) /2$.

\scsubsection{Appendix C.2. Proof of Theorem \ref{thm:T_div_p}}

The proof of the second part in Theorem \ref{thm:gen_U} demonstrates that \(n_1  (T- V) \xrightarrow{p}0\). This means \(T\) could be approximated by \(V = m_1 (m_1 - 1) n_1^{-1} (n_1 - 1)^{-1} \sum_{i < j} h_{0,2}( X^{(1)}_i,X^{(1)}_j)\).  Therefore, we only need to prove the asymptotic normality of \(V\). This statement also appears in \cite{zheng1996consistent}, and we have adapted it to align with our notation. Without loss of generality, we can consider $m_1 = 2$. To apply the martingale central limit theorem, we construct a martingale difference array $Z_{i}=\sum_{j=1}^{i-1}h_{0,2}(X_i^{(1)},X_j^{(1)})$, and verify that it satisfies the following two conditions required by the martingale central limit theorem:
\begin{align}
	&s^{-4} \sum_{i=2}^{n_1} \mathrm{E}\left( Z_{i}^4\right)  \to 0, \label{eq:cond_1}\\
	&s^{-2} \sum_{i=2}^{n_1} \mathrm{E}\left( Z_{i}^2 \mid X_1^{(1)}, \ldots, X_{i-1}^{(1)}\right)  \xrightarrow{p} 1 , \label{eq:cond_2}
\end{align}
where $\widetilde{V} = \binom{n_1}{2}V = \sum_{i=2}^{n_1}Z_{i}$ and $s^2=\mathrm{E} (\widetilde{V}^2 )$.

Step 1. (Constructing the Martingale Difference Array)
We construct an array $Z_i=\sum_{j=1}^{i-1}h_{0,2}(X_i^{(1)},X_j^{(1)})$ for $2 \leqslant i \leqslant n_1$. Note that in the case of $\xi_{0,1} = 0$, we have $\mathrm{E}\{ h_{0,2} (X_i^{(1)},X_j^{(1)}) \mid X_j^{(1)} \} = h_{0,1}(X_j^{(1)}) = 0$, which implies that $\mathrm{E}(Z_{i} \mid X_1^{(1)},\cdots,X_{i-1}^{(1)}) = 0$. Further, we define $\mathcal{F}_i$ as the $\sigma$-field generated by $\{ X_1^{(1)},\cdots,X_{i-1}^{(1)}\}$.
Consequently, the array $\{(Z_i,\mathcal{F}_i),2 \leqslant i \leqslant n_1\}$ forms a martingale difference array, and $\{(U_i = \sum_{j=2}^{i} Z_{j},\mathcal{F}_i), 2\leqslant i \leqslant n_1\}$ is a martingale. We further define $\widetilde{V} = \binom{n_1}{2}V = \sum_{i=2}^{n_1}Z_{i}$ and $s^2=\mathrm{E}( \widetilde{V}^2) $, thus $U_{n_1} = \widetilde{V}$. From formula \eqref{eq:cor_hab_hcd} and $\mathrm{E}(Z_i Z_j) = 0$, we find $\mathrm{E}( Z_{i}^2 ) = (i-1) \mathrm{E}\{ h_{0,2}^2( X^{(1)}_1,X^{(1)}_2) \}$ and $s^2= n_1 (n_1-1) \mathrm{E}\{ h_{0,2}^2( X^{(1)}_1,X^{(1)}_2) \} / 2$.


Step 2. (Check Liapounov Condition in \eqref{eq:cond_1})
The fourth-order moment of $Z_i$ can be calculated as
$$
\begin{aligned} &\mathrm{E}\left( Z_{i}^4\right) =\sum_{j=1}^{i-1}\mathrm{E}\left\lbrace h_{0,2}^4\left(X_i^{(1)},X_j^{(1)}\right)\right\rbrace +3\sum_{1\leqslant j,k\leqslant i-1;j\neq k}\mathrm{E}\left\lbrace h_{0,2}^2\left(X_i^{(1)},X_j^{(1)}\right)h_{0,2}^2\left(X_i^{(1)},X_k^{(1)}\right) \right\rbrace \\
	&~~=(i-1)\mathrm{E}\left\lbrace h_{0,2}^4\left(X_1^{(1)},X_2^{(1)}\right)\right\rbrace + 3(i-1)(i-2)\mathrm{E}\left\lbrace h_{0,2}^2\left(X_1^{(1)},X_2^{(1)}\right) h_{0,2}^2\left(X_1^{(1)},X_3^{(1)}\right)\right\rbrace,
\end{aligned}$$
Then we can verify \eqref{eq:cond_1},
$$\begin{aligned}
	\sum_{i=2}^{n_1} \mathrm{E}\left(Z_{i}^4\right)& \lesssim {n_1}^2 \mathrm{E}\left\lbrace h_{0,2}^4\left(X_1^{(1)},X_2^{(1)}\right)\right\rbrace  + {n_1}^3 \mathrm{E}\left\lbrace h_{0,2}^2\left(X_1^{(1)},X_2^{(1)}\right)h_{0,2}^2\left(X_1^{(1)},X_3^{(1)}\right)\right\rbrace \\
	& \leqslant {n_1}^3 \mathrm{E}\left\lbrace h_{0,2}^4\left(X_1^{(1)},X_2^{(1)}\right)\right\rbrace ,
\end{aligned}$$
where $a \lesssim b$ denotes that there exists a constant $C$ such that $a \leqslant Cb$. The last inequality follows from the fact that $\mathrm{E}\{h_{0,2}^2(X_1^{(1)},X_2^{(1)})h_{0,2}^2(X_1^{(1)},X_3^{(1)})\} \leqslant \mathrm{E}\{h_{0,2}^4(X_1^{(1)},X_2^{(1)})\}$. Therefore, condition \eqref{eq:condition_clt} indicates that $s^{-4} \sum_{i=2}^{n_1} \mathrm{E}(Z_{i}^4) \to 0$, which confirms that the Lyapunov condition is satisfied.

Step 3. (Check Condition in \eqref{eq:cond_2})
Define $v_{i}^2 = \mathrm{E}(Z_{i}^{2} \mid \mathcal{F}_{i-1})$. To demonstrate $s^{-2} \sum_{i=2}^{n_1}v_i^2 \rightarrow_p 1$, we will show
\begin{align}
	&\mathrm{E}\left( \sum_{i=2}^{n_1}v_i^2\right) = s^2, \label{eq:cond_E}\\
	&s^{-4} \mathrm{E}\left( \sum_{i=2}^{n_1}v_i^2 - s^2\right)^2 \to 0. \label{eq:cond_Var}
\end{align}

\noindent
Condition \eqref{eq:cond_E} can be verified by $\mathrm{E}\left( \sum_{i=2}^{n_1}v_i^2\right)=\mathrm{E}\left\{ \sum_{i=2}^{n_1}\mathrm{E}(Z_i^2 \mid \mF_{i-1})\right\}=\sum_{i=2}^{n_1}\mathrm{E}(Z_i^2)=\mathrm{E}\{(\sum_{i=2}^{n_1}Z_i)^2\} = s^2$ due to $\mathrm{E}\left( Z_i Z_j\right) = 0$ for $i \neq j$.
To check condition \eqref{eq:cond_Var}, recall that $G(x,y) = \mathrm{E}\left\lbrace h_{0,2}(X^{(1)},x)h_{0,2}(X^{(1)},y)\right\rbrace$. Then, we have $G(X_j^{(1)},X_j^{(1)}) = \mathrm{E}\{ h^2_{0,2}(X_i^{(1)},X_j^{(1)}) \mid X_j^{(1)} \}$, and
$$\begin{aligned}
	v_{i}^2 & = \mathrm{E}\left( Z_{i}^{2}\mid X_{1}^{(1)},\cdots ,X_{i-1}^{(1)}\right)\\
	& = \sum_{j=1}^{i-1}\sum_{k=1}^{i-1} \mathrm{E}\left\lbrace h_{0,2}\left(X_{i}^{(1)},X_{j}^{(1)} \right) h_{0,2}\left(X_{i}^{(1)},X_{k}^{(1)} \right) \mid X_{1}^{(1)},\cdots ,X_{i-1}^{(1)} \right\rbrace \\
	& = 2\sum_{1 \leqslant j<k \leqslant i-1} G\left( X_j^{(1)},X_k^{(1)}\right)  + \sum_{j=1}^{i-1} G\left( X_j^{(1)},X_j^{(1)}\right) .
\end{aligned}$$

Next, we will consider the covariance between $v_{i_1}^2$ and $v_{i_2}^2$. If $ j_1=k_1= j_2=k_2$, we have $\mathrm{E}\{ G( X_{j_1}^{(1)},X_{k_1}^{(1)})  G( X_{j_2}^{(1)},X_{k_2}^{(1)}) \} = \mathrm{E} \{G^2( X_1^{(1)},X_1^{(1)}) \}$. If $j_1=j_2 < k_1=k_2 $, we have $\mathrm{E} \{ G( X_{j_1}^{(1)},X_{k_1}^{(1)})  G( X_{j_2}^{(1)},X_{k_2}^{(1)}) \} = \mathrm{E} \{G^2( X_1^{(1)},X_2^{(1)}) \}$ and if $j_1=k_1 < j_2=k_2 $, we have $\mathrm{E}\{ G( X_{j_1}^{(1)},X_{k_1}^{(1)})  G( X_{j_2}^{(1)},X_{k_2}^{(1)}) \} = [ \mathrm{E} \{G(X_1^{(1)},X_1^{(1)})\}] ^2$. Otherwise, $\mathrm{E}\{ G( X_{j_1}^{(1)},X_{k_1}^{(1)})  G( X_{j_2}^{(1)},X_{k_2}^{(1)}) \} = 0$. So the covariance between $v_{i_1}^2$ and $v_{i_2}^2$ (assuming $i_1 < i_2$) will be
$$\begin{aligned}
	& \mathrm{E}\left( v_{i_1}^2 v_{i_2}^2\right) = \sum_{j=1}^{i_1-1}\sum_{k=1}^{i_2-1} \left[ \mathrm{E}\left\lbrace G\left( X_1^{(1)},X_1^{(1)}\right) \right\rbrace \right] ^2 - \sum_{j=1}^{i_1-1} \left[ \mathrm{E}\left\lbrace G\left( X_1^{(1)},X_1^{(1)}\right) \right\rbrace \right] ^2 \\
	& ~~~~~~\qquad + 4 \sum_{1\leqslant j<k\leqslant i_1-1} \mathrm{E}\left\lbrace G^2 \left( X_1^{(1)},X_2^{(1)}\right) \right\rbrace  + \sum_{j=1}^{i_1-1} \mathrm{E}\left\lbrace G^2\left( X_1^{(1)},X_1^{(1)}\right) \right\rbrace  \\
	&=(i_1-1)(i_2-2)\left[ \mathrm{E}\left\lbrace G\left( X_1^{(1)},X_1^{(1)}\right) \right\rbrace \right] ^2 + 2(i_1-1)(i_1-2)\mathrm{E}\left\lbrace G^2\left( X_1^{(1)},X_2^{(1)}\right) \right\rbrace \\
	&~~~~~\qquad + (i_1-1) \mathrm{E} \left\lbrace G^2\left( X_1^{(1)},X_1^{(1)}\right) \right\rbrace .
\end{aligned}$$
Then, we consider each component of
$$\mathrm{E}\left( \sum_{i=2}^{n_1}v_i^2 - s^2\right)^2 = \mathrm{E}\left\lbrace \left( \sum_{i=2}^{n_1} v_i^2\right) ^2\right\rbrace - s^4  =2 \sum_{2 \leqslant i<j\leqslant n_1} \mathrm{E}\left( v_i^2 v_j^2 \right) + \sum_{i=2}^{n_1} \mathrm{E}\left( v_i^4\right) - s^4,$$
with
$$\begin{aligned}
	\sum_{2 \leqslant i<j\leqslant n_1} \mathrm{E}\left( v_i^2 v_j^2 \right) &= \frac{1}{8} \left\lbrace  n_1^4 + O \left( n_1^3\right) \right\rbrace  \left[ \mathrm{E}\left\lbrace G\left( X_1^{(1)},X_1^{(1)}\right) \right\rbrace \right] ^2  \\
	& + \frac{1}{6} \left\lbrace n_1^4 + O \left( n_1^3\right) \right\rbrace \mathrm{E}\left\lbrace G^2\left( X_1^{(1)},X_2^{(1)}\right) \right\rbrace
	+ O \left( n_1^3\right) \mathrm{E} \left\lbrace G^2\left( X_1^{(1)},X_1^{(1)}\right)\right\rbrace, \\
	\sum_{i=2}^{n_1} \mathrm{E}\left( v_i^4\right) &= O\left( n_1^3\right)  \mathrm{E} \left\{G^2\left( X_1^{(1)},X_2^{(1)}\right) \right\} + O\left(n_1^2\right)  \mathrm{E} \left\{G^2\left( X_1^{(1)},X_1^{(1)}\right) \right\}, \\
	s^4 &= \frac{1}{4} \left\lbrace  n_1^4 + O \left( n_1^3\right) \right\rbrace  \left[ \mathrm{E}\left\lbrace G\left( X_1^{(1)},X_1^{(1)}\right)\right\rbrace \right] ^2,
\end{aligned}$$
where the last equality is derived from the fact $\mathrm{E}\{G(X_1^{(1)},X_1^{(1)})\} = \mathrm{E}\{h_{0,2}^2( X^{(1)}_1,X^{(1)}_2) \} $. This leads to
\begin{equation*} \begin{aligned}
		&\mathrm{E}\left( \sum_{i=2}^{n_1}v_i^2 - s^2\right)^2\\
		\lesssim & n_1^3 \left[ \mathrm{E}\left\lbrace G\left(X_1^{(1)},X_1^{(1)}\right)\right\rbrace \right] ^2 + n_1^4 \mathrm{E}\left\lbrace G^2\left(X_1^{(1)},X_2^{(1)}\right)\right\rbrace + n_1^3 \mathrm{E} \left\lbrace G^2\left(X_1^{(1)},X_1^{(1)}\right)\right\rbrace \\
		\leqslant   &  n_1^4 \mathrm{E}\left\lbrace G^2\left(X_1^{(1)},X_2^{(1)}\right)\right\rbrace + n_1^3 \mathrm{E} \left\lbrace G^2\left(X_1^{(1)},X_1^{(1)}\right)\right\rbrace\\
		\leqslant &n_1^4 \mathrm{E}\left\lbrace	G^2\left(X_1^{(1)},X_2^{(1)}\right)\right\rbrace + n_1^3 \mathrm{E} \left\lbrace h_{0,2}^4\left(X_1^{(1)},X_2^{(1)}\right)\right\rbrace   .
	\end{aligned}
\end{equation*}
Note that in the first inequality, the terms containing $n_1^4 [ \mathrm{E}\{G(X_1^{(1)},X_1^{(1)})\}] ^2$ are eliminated, leaving only the lower-order term of $n_1^3$. The second inequality follows from $[ \mathrm{E}\{G(X_1^{(1)},X_1^{(1)})\}]^2 \leqslant \mathrm{E} \{ G^2(X_1^{(1)},X_1^{(1)})\}$, and the last inequality is derived from\\ $\mathrm{E} \{ G^2(X_1^{(1)},X_1^{(1)}) \} = \mathrm{E} ( [\mathrm{E} \{ h^2_{0,2}(X_1^{(1)},X_2^{(1)}) \mid X_2^{(1)} \} ]^2 ) \leqslant \mathrm{E} [ \mathrm{E} \{ h^4_{0,2}(X_1^{(1)},X_2^{(1)}) \mid X_2^{(1)} \}] =  \mathrm{E} \{ h_{0,2}^4(X_1^{(1)},X_2^{(1)})\}$.

Therefore, condition \eqref{eq:condition_clt} indicates that \eqref{eq:cond_Var} is satisfied. Conditions \eqref{eq:cond_E} and \eqref{eq:cond_Var} together imply that \eqref{eq:cond_2} holds. We can then apply martingale central limit theorem to conclude that $n_1 V / \xi_{0,2}^{1/2} \xrightarrow{d} N(0,2)$. When $m_1 > 2$, it can be verified that its variance is ${m_1^2 (m_1-1)^2}/{2}$.

\scsubsection{Appendix C.3. Proof of theorem \ref{thm:sample_U}}

We first prove the first part of the theoretical result in  Theorem \ref{thm:sample_U}. The proof steps are similar to the proof of Theorem \ref{thm:gen_U} by calculating the variance of $T_S$ and establishing its asymptotic normality.


Step 1. (Variance of $T_{\rm S}$)
The variance of $T_{\rm S}$ has the following form:
\begin{equation*}
	\begin{aligned}
		\mathrm{Var}\left(T_{\rm S}\right)  &= \binom{s n_1}{m_0}^{-2} \binom{n_1}{m_1}^{-2} \sum_c \mathrm{E}\left\lbrace   \left(\prod_{j=1}^{m_0}\delta_{i^0_j}\right) h\left(X_{i^0_1}^{(0)}, \cdots,X_{i^0_{m_0}}^{(0)} ;X_{i^1_{1}}^{(1)}, \cdots,X_{i^1_{m_1}}^{(1)}\right)\right.
	\end{aligned}
\end{equation*}
\begin{equation}
	\label{eq:varTS}
	\begin{aligned}
		\qquad \qquad \qquad \qquad \qquad  \left. \left(\prod_{j=1}^{m_0}\delta_{l^0_j}\right) h\left(X_{l^0_1}^{(0)}, \cdots,X_{l^0_{m_0}}^{(0)} ;X_{l^1_{1}}^{(1)}, \cdots,X_{l^1_{m_1}}^{(1)}\right) \right\rbrace .
	\end{aligned}
\end{equation}
where the sum is taken over all index combinations. The expectation \begin{equation*}
	\begin{aligned}
		&\mathrm{E}\left\lbrace   \left(\prod_{j=1}^{m_0}\delta_{i^0_j}\right) h\left(X_{i^0_1}^{(0)}, \cdots,X_{i^0_{m_0}}^{(0)} ;X_{i^1_{1}}^{(1)}, \cdots,X_{i^1_{m_1}}^{(1)}\right)\right. \\
		& \qquad \left. \left(\prod_{j=1}^{m_0}\delta_{l^0_j}\right) h\left(X_{l^0_1}^{(0)}, \cdots,X_{l^0_{m_0}}^{(0)} ;X_{l^1_{1}}^{(1)}, \cdots,X_{l^1_{m_1}}^{(1)}\right)\right\rbrace  = \left( \frac{sn_1}{n_0} \right)^{2m_0-r} \xi_{r,s}
	\end{aligned}
\end{equation*}
holds when $\left| \left\lbrace i^0_{1}, \cdots, i^0_{m_0}\right\rbrace \cap \left\lbrace l^0_1, \cdots, l^0_{m_0}\right\rbrace \right| = r$ and $\left| \left\lbrace i^1_1, \cdots, i^1_{m_1}\right\rbrace \cap \left\lbrace l^1_1, \cdots, l^1_{m_1}\right\rbrace \right| = s$. Here, the term$(sn_1/n_0 )^{2m_0-r}$ arises because the expectation involves $2m_0-r$ independent $\delta$'s, and $\delta$ is independent of the observations $X$. Consequently, we can treat the two parts separately when calculating the expectation. Therefore, the variance of $T_{\rm S}$ can be calculated by counting the cases mentioned above:
\begin{equation*}
	\begin{aligned}
		\operatorname{Var}\left( T_{\rm S} \right) &=\binom{s n_1}{m_0}^{-2} \binom{n_1}{m_1}^{-2} \\
		& \quad \left\lbrace  \binom{n_0}{1}\binom{n_0-1}{m_0 -1}\binom{n_0-m_0}{m_0 -1} \binom{n_1}{m_1}\binom{n_1-m_1}{m_1 } \left( \frac{s n_1}{n_0}\right) ^{2m_0-1}\xi_{1,0} + \right. \\
		& \quad \left. \binom{n_1}{1}\binom{n_1-1}{m_1 -1}\binom{n_1-m_1}{m_1 -1} \binom{n_0}{m_0}\binom{n_0-m_0}{m_0} \left( \frac{s n_1}{n_0}\right) ^{2m_0} \xi_{0,1} \right\rbrace + O\left( \frac{1}{n_1^2}\right) \\
		& = \frac{m_0^2}{s n_1} \xi_{1,0} + \frac{m_1^2}{n_1} \xi_{0,1} + O\left(n_1^{-2}\right).
	\end{aligned}
\end{equation*}

Step 2. (Asymptotic normality of $T_{\rm S}$) Next, we utilize the central limit theorem to demonstrate that $T_{\rm S}$ possesses asymptotic normality. Let $V_{\rm S} = m_1 n_1^{-1} \sum_{i=1}^{n_1} h_{0,1}( X_i^{(1)})  + m_0 (s n_1)^{-1} \sum_{i=1}^{n_0} \delta_i h_{1,0}( X_i^{(0)}) $. By the independence of $X_i^{(1)}$, $X_i^{(0)}$, and  $\delta_i$, along with the central limit theorem, we know that
\begin{equation*}
	\frac{m_1}{{n_1}^{1/2}}  \sum_{i=1}^{n_1} h_{0,1}\left( X_i^{(1)}\right)  \xrightarrow{d} N \left( 0,m_1^2 \xi_{0,1}\right)
\end{equation*}
\begin{equation}
	\label{eq:CLT_for_sample_countrol}
	\frac{m_0}{s {n_1}^{1/2}}  \sum_{i=1}^{n_0} \delta_i h_{1,0} \left(  X_i^{(0)}\right)   \xrightarrow{d} N\left( 0, \frac{m_0^2}{s}\xi_{1,0}\right),
\end{equation}
which implies ${n_1}^{1/2}V_{\rm S} \xrightarrow{d} N \left( 0, s^{-1} m_0^2 \xi_{1,0} + m_1^2\xi_{0,1}\right)$. Here, \eqref{eq:CLT_for_sample_countrol} follows from the \\
Lyapunov condition: $[\mathrm{Var}\{\sum_{i=1}^{n_0} \delta_{i} h_{1,0}(X_i^{(0)})\}]^{-2} \sum_{i=1}^{n_0} \mathrm{E}\{\delta_{i}^4 h_{1,0}^4(X_i^{(0)}) \} = (sn_1 \xi_{0,1})^{-2} $\\$ s n_1 \mathrm{E}\{h_{1,0}^4(X^{(0)}) \} \rightarrow 0$.  Therefore, it suffices to prove that as $n_1 \to \infty$, ${n_1}^{1/2} \left( T_{\rm S}-V_{\rm S}\right)  \xrightarrow{p} 0$. Since $\mathrm{E}\left\lbrace {n_1}^{1/2} \left( T_{\rm S}- V_{\rm S} \right) \right\rbrace  = 0$, it is sufficient to demonstrate that $\mathrm{Var}\left\lbrace  {n_1}^{1/2} \left( T_{\rm S}- V_{\rm S} \right) \right\rbrace  \to 0$. On the one hand, $\mathrm{Var} \left(  {n_1}^{1/2} T_{\rm S}\right)  \to s^{-1} m_0^2 \xi_{1,0} + m_1^2 \xi_{0,1}$. On the other hand, we have $
\mathrm{Var}\left( {n_1}^{1/2} V_{\rm S} \right)  = s^{-1} m_0^2 \xi_{1,0} + m_1^2 \xi_{0,1}$. Moreover, the covariance can be computed by
\begin{equation*}
	\begin{aligned}
		&\mathrm{Cov}\left( {n_1}^{1/2} T_{\rm S},{n_1}^{1/2} V_{\rm S}\right)   \\
		= & n_1 \left[ m_1 \mathrm{E} \left\lbrace  T_{\rm S}h_{0,1}\left(  X_1^{(1)}\right)  \right\rbrace   + m_0 \frac{n_0}{s n_1} \mathrm{E} \left\lbrace  T_{\rm S}\delta_1 h_{1,0}\left( X_1^{(0)}\right) \right\rbrace  \right]  \\
		= & m_1 n_1 \binom{s n_1}{m_0}^{-1} \binom{n_1}{m_1}^{-1} \binom{n_1-1}{m_1-1} \binom{n_0}{m_0} \\
		& \qquad \mathrm{E} \left\lbrace  h_{0,1}\left(  X_1^{(1)}\right)  \left(\prod_{i=1}^{m_0} \delta_{i} \right) h\left(  X_1^{(0)},\cdots,X_{m_0}^{(0)};X_1^{(1)},\cdots,X_{m_1}^{(1)}\right) \right\rbrace  \\
		& \quad + m_0 n_1 \frac{n_0}{sn_1} \binom{s n_1}{m_0}^{-1} \binom{n_1}{m_1}^{-1} \binom{n_0-1}{m_0-1} \binom{n_1}{m_1} \\
		& \qquad \mathrm{E} \left\lbrace \delta_{1} h_{0,1}\left(  X_1^{(1)}\right) \left(\prod_{i=1}^{m_0} \delta_{i} \right)  h\left(  X_1^{(0)},\cdots,X_{m_0}^{(0)};X_1^{(1)},\cdots,X_{m_1}^{(1)}\right) \right\rbrace  \\
		= & m_1^2 \xi_{0,1} + \frac{m_0^2}{s} \xi_{1,0}.
	\end{aligned}
\end{equation*}
Then we have $\operatorname{Var}\left\lbrace {n_1}^{1/2}(T_{\rm S}-V_{\rm S})\right\rbrace  \to 0 $, implying ${n_1}^{1/2} T_{\rm S} \xrightarrow{d} N( 0, s^{-1} m_0^2 \xi_{1,0} + m_1^2 \xi_{0,1}) $.

Next, we prove the second part in Theorem \ref{thm:sample_U}. We focus on the calculation of its variance. Specifically, the variance of $T_{\rm S}$ retains the form given in \eqref{eq:varTS}, and we only need to recalculate it under the condition that $\xi_{0,1} = \xi_{1,0} = 0$:
\begin{equation*}
	\begin{aligned}
		\operatorname{Var}\left( T_{\rm S}\right) = \binom{s n_1}{m_0}^{-2} \binom{n_1}{m_1}^{-2} \left\lbrace   \binom{n_1}{2}\binom{n_1-2}{m_1-2} \binom{n_1-m_1}{m_1-2} \binom{n_0}{m_0} \binom{n_0-m_0}{m_0} \left(\frac{s n_1}{n_0} \right)^{2m_0} \xi_{0,2} \right.
	\end{aligned}
\end{equation*}
\begin{equation}
	\label{eq:varTSresult}
	\begin{aligned}
		&+  \binom{n_1}{1}\binom{n_1-1}{m_1-1}\binom{n_1-m_1}{m_1-1}\binom{n_0}{1}\binom{n_0-1}{m_0-1}\binom{n_0-m_0}{m_0-1} \left(\frac{s n_1}{n_0} \right)^{2m_0-1} \xi_{1,1} \\
		&+ \left.  \binom{n_1}{m_1} \binom{n_1-m_1}{m_1} \binom{n_0}{2} \binom{n_0-2}{m_0-2} \binom{n_0-m_0}{m_0-2} \left( \frac{s n_1}{n_0} \right)^{2m_0-2} \xi_{2,0} \right\rbrace	+o\left(n_1^{-2}\right) \\
		&=  \frac{ m_1^2(m_1-1)^2}{2 n_1^2} \xi_{0,2} + \frac{m_1^2 m_0^2}{s n_1^2}\xi_{1,1} + \frac{m_0^2(m_0-1)^2}{2 s^2 n_1^2} \xi_{2,0} + o\left(n_1^{-2} \right).
	\end{aligned}
\end{equation}
It could be verified that $T_{\rm S}$ can be approximated by $V_{\rm S} = m_1(m_1-1) n_1^{-1} (n_1-1)^{-1} \sum_{i < j} h_{0,2}( X_i^{(1)},X_j^{(1)})  + m_0 m_1 (s n_1)^{-1}  n_1^{-1} \sum_{i,j} \delta_i h_{1,1}( X_i^{(0)}, X_j^{(1)}) + m_0 (m_0-1) (s n_1)^{-1}\\ (s n_1 - 1)^{-1} \sum_{i<j} \delta_i \delta_j h_{2,0}( X_i^{(0)}, X_j^{(0)})$. Notably, $h_{2,0}( X_i^{(0)}, X_j^{(1)})$, $h_{1,1}( X_i^{(0)}, X_j^{(1)})$ and $h_{0,2}( X_i^{(0)}, X_j^{(1)})$ are dependent. Thus, it is hard to find an martingale difference sequence to denote $V_{\rm S}$. Consequently, we only focus on the variance here.

\scsubsection{Appendix C.4. Proof of Theorem \ref{thm:sample_div_p}}

In the proof of Theorem \ref{thm:T_div_p}, we have demonstrated that the asymptotic distribution of $V = m_1(m_1-1)n_1^{-1}(n_1-1)^{-1}  \sum_{i < j} h_{0,2}\left( X^{(1)}_i,X^{(1)}_j\right)$ is a normal distribution with mean 0 and variance $m_1^2 (m_1-1)^2 \xi_{0,2} /2$. Next, we will prove that as $s \to \infty$, $T_{\rm S}$ can be well approximated by $V$, meaning that $n_1 \left( T_{\rm S} - V \right) \xrightarrow{p} 0 $. Since $\mathrm{E}\left\lbrace n_1  \left( T_{\rm S}- V \right) \right\rbrace  = 0$, it is sufficient to demonstrate that $\mathrm{Var}\left\lbrace  n_1  \left( T_{\rm S}- V \right) \right\rbrace  \to 0$. It is known that $\mathrm{Var} \left( n_1 T_{\rm S}\right)  \to m_1^2 (m_1-1)^2 \xi_{0,2} /2$ from formula \eqref{eq:varTSresult}, and we have $
\mathrm{Var}\left( n_1 V \right) \to m_1^2 (m_1-1)^2 \xi_{0,2} /2$. For the covariance, it can be computed as
\begin{equation*}
	\begin{aligned}
		&\mathrm{Cov}(n_1 T_{\rm S},n_1 V) = n_1^2 \frac{m_1 (m_1-1)}{2} \mathrm{E} \left\lbrace T_{\rm S} h_{0,2}\left( X_1^{(1)},X_2^{(1)}\right)  \right\rbrace \\
		= & n_1^2 \frac{m_1 (m_1-1)}{2} \binom{s n_1}{m_0}^{-1} \binom{n_1}{m_1}^{-1} \\
		& \qquad \qquad \sum_{c} \mathrm{E} \left\lbrace   h_{0,2}\left( X_1^{(1)},X_2^{(1)}\right) \left(\prod_{j=1}^{m_0}\delta_{i^0_j}\right)  h\left( X^{(0)}_{i_1},\cdots,X^{(0)}_{i_{m_0}} ; X^{(1)}_{i_1},\cdots,X^{(1)}_{i_{m_1}}\right) \right\rbrace\\
	\end{aligned}
\end{equation*}
\begin{equation*}
	\begin{aligned}
		= & n_1^2 \frac{m_1 (m_1-1)}{2} \binom{s n_1}{m_0}^{-1} \binom{n_1}{m_1}^{-1} \\
		& \sum_c \mathrm{E} \left\lbrace  h_{0,2}\left( X_1^{(1)},X_2^{(1)}\right)  \left(\prod_{j=1}^{m_0}\delta_{i^0_j}\right) h\left( X^{(0)}_{i_{1}},\cdots,X^{(0)}_{i_{m_0}} ; X^{(1)}_1,X^{(1)}_2,X^{(1)}_{i_{3}},\cdots,X^{(1)}_{i_{m_1}}\right) \right\rbrace\\
		= & n_1^2 \frac{m_1 (m_1-1)}{2} \binom{s n_1}{m_0}^{-1} \binom{n_1}{m_1}^{-1} \binom{n_0}{m_0} \binom{n_1-2}{m_1-2} \left( \frac{sn_1}{n_0}\right)^{m_0} \xi_{0,2} \rightarrow \frac{m_1^2 (m_1-1)^2}{2} \xi_{0,2} .
	\end{aligned}
\end{equation*}
So we can obtain
$\operatorname{Var}\left\lbrace n_1 (T_{\rm S}-V) \right\rbrace
\to 0$, meaning that $n_1 T_{\rm S} \xrightarrow{d} n_1 V$.

\scsubsection{Appendix C.5. Proof of Theorem \ref{thm:multiple_RIT}}

Since Theorem \ref{thm:multiple_RIT} generalizes binary $Y$ to multi-class $Y$, its proof is similar to that of Theorem \ref{thm:gen_U}. Since the proof of asymptotic normality is straightforward by consider $V_K = \sum_{k=1}^K m_k n_k^{-1} \sum_{i=1}^{n_k} \underbrace{h_{0,\cdots,0,1,0,\cdots,0}}_{\text{only } \nu_k \text{ is 1}} (X_i^{(k)})$, we focus only on the calculation of the variance of $T$ in both of the two cases. The variance of $T_{K}$ has the following form: 	
\begin{equation*}
	\begin{aligned}
		\mathrm{Var}\left( T_{K}\right)  &= \prod_{j=0}^K \binom{n_j}{m_j}^{-2}  \sum_c \mathrm{E} \left\lbrace h\left(X_{i^0_{1}}^{(0)}, \cdots,X_{i^0_{m_0}}^{(0)} ; \cdots ;X_{i^K_{1}}^{(K)}, \cdots,X_{i^K_{m_K}}^{(K)}\right)\right. \\
		& \qquad \qquad \qquad \qquad \qquad \left. h\left(X_{j^0_{1}}^{(0)}, \cdots,X_{j^0_{m_0}}^{(0)} ; \cdots ;X_{j^K_{1}}^{(K)}, \cdots,X_{j^K_{m_K}}^{(K)}\right)\right\rbrace.
	\end{aligned}
\end{equation*}

In the first part of Theorem \ref{thm:multiple_RIT}, we analyze the most frequently occurring components of $\mathrm{Var}\left( T_{K}\right)$ to derive the leading term of the variance:
\begin{equation*}
	\begin{aligned}
		\mathrm{Var}\left( T_{K}\right)  &= \left\lbrace \prod_{j=0}^K \binom{n_j}{m_j}^{-2}\right\rbrace \left[  \sum_{k=1}^K \binom{n_k}{1}\binom{n_k-1}{m_k -1}\binom{n_k-m_k}{m_k -1} \right.  \\
		& \qquad \qquad  \left. \prod_{l \neq k} \left\lbrace \binom{n_l}{m_l}\binom{n_l-m_l}{m_l}\right\rbrace  \zeta_{1,k}   \right] + o\left(n_1^{-1} \right) \rightarrow \sum_{k=1}^K \frac{m_k^2}{n_k} \zeta_{1,k}.
	\end{aligned}
\end{equation*}
Together with the assumption $n_k/n_1 \rightarrow r_k$, we can proof the first part of Theorem \ref{thm:multiple_RIT} by verifying $n_1^{1/2} \left(T_{K}-V_K \right) \xrightarrow{p} 0$. The specific details of the verifications are similar to the techniques presented in \eqref{eq:varify_cov}.

In the second part of Theorem \ref{thm:multiple_RIT}, we focus on reevaluating the most frequently occurring components of  $\mathrm{Var}\left( T_{K}\right)$ under the condition that $\zeta_{1,k} = 0$ for all $0 \leqslant k \leqslant K$:
\begin{equation*}
	\begin{aligned}
		\mathrm{Var}\left( T_{K}\right)  &= \left\lbrace \prod_{j=0}^K \binom{n_j}{m_j}^{-2}\right\rbrace \left[  \sum_{k=1}^K \binom{n_k}{2}\binom{n_k-2}{m_k -2}\binom{n_k-m_k}{m_k -2} \prod_{l \neq k} \left\lbrace \binom{n_l}{m_l}\binom{n_l-m_l}{m_l}\right\rbrace  \zeta_{2,k} \right.   \\
	\end{aligned}
\end{equation*}
\begin{equation*}
	\begin{aligned}
		& \qquad + \sum_{1\leqslant k_1 < k_2 \leqslant K} \binom{n_{k_1}}{1}\binom{n_{k_1}-1}{m_{k_1} -1}\binom{n_{k_1}-m_{k_1}}{m_{k_1} -1} \binom{n_{k_2}}{1}\binom{n_{k_2}-1}{m_{k_2} -1}\binom{n_{k_2}-m_l}{m_{k_2} -1} \\
		& \qquad \qquad \qquad \qquad \qquad  \left. \prod_{l \neq (k_1,k_2)} \left\lbrace \binom{n_l}{m_l}\binom{n_l-m_l}{m_l}\right\rbrace  \zeta_{2,k_1,k_2} \right] + o\left(n_1^{-2} \right) \\
		&\rightarrow \sum_{k=1}^K \frac{m_k^2 (m_k-1)^2}{2 n_k^2} \zeta_{2,k} + \sum_{1\leqslant k_1 < k_2 \leqslant K} \frac{m_{k_1}^2 m_{k_2}^2}{n_{k_1} n_{k_2}} \zeta_{2,k_1,k_2}.
	\end{aligned}
\end{equation*}
Together with assumption $n_k/n_1 \rightarrow r_k$, we can proof the second part of Theorem \ref{thm:multiple_RIT}.

%
%
%
%

\scsubsection{Appendix C.6. Discussion on Selecting $s$}

{
	We first provide the expressions for the power of RIT and BIT under the given alternative hypothesis $\mbox{H}_1$. For the first order RIT, assuming that $\mbox{H}_1$ holds and $\mathrm{E}T=\mu_0 \neq 0$, we have \(n_1^{1/2} (T - \mu_0) \xrightarrow{d} N(0, m_1^2 \xi_{0,1})\) and \(n_1^{1/2} (T_{\rm S} - \mu_0) \xrightarrow{d} N(0, m_1^2 \xi_{0,1} + m_0^2 \xi_{1,0} / s)\). The rejection region of $T$ with size $\alpha$ is given by $\{T < \Phi^{-1}(\alpha/2) (m_1^2 \xi_{0,1}/n_1)^{1/2} \} \cup \{T > \Phi^{-1}(1-\alpha/2) (m_1^2 \xi_{0,1}/n_1)^{1/2} \}$. And that for BIT $T_{\rm S}$ could be similarly defined.
	\begin{proposition} (The power of $T$ and $T_{\rm S}$)
		\label{thm: power}
		Assume the alternative hypothesis $\mbox{H}_1$ is true with $\mathrm{E}T=\mu_0 \neq 0$ and $n_1/n_0\rightarrow 0$. Then we have, (1) the power of RIT is
		$1+\Phi \{\Phi^{-1}(\alpha/2) - \mu_0 {( m_1^2 \xi_{0,1}/n_1 )^{-1/2}}\}  - \Phi\{\Phi^{-1}(1-\alpha/2) -\mu_0 {( m_1^2 \xi_{0,1}/n_1)^{-1/2}}\}$;
		(2) the power of BIT is $1+\Phi \{ \Phi^{-1}(\alpha/2) -  \mu_0 (m_1^2 \xi_{0,1}/n_1 + m_0^2 \xi_{1,0}/s n_1)^{-1/2} \} - \Phi \{\Phi^{-1}(1-\alpha/2) - \mu_0 (m_1^2 \xi_{0,1}/n_1 + m_0^2 \xi_{1,0}/s n_1)^{-1/2} \}$.
	\end{proposition}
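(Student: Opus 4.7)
The plan is to derive both power expressions as direct bookkeeping consequences of the asymptotic normality results stated immediately above the proposition, combined with the definitions of the rejection regions. Since the calculation for part (2) is identical to part (1) after replacing the variance $m_1^2 \xi_{0,1}$ with $m_1^2 \xi_{0,1} + m_0^2 \xi_{1,0}/s$, I would focus the exposition on part (1) and note that part (2) follows by the same argument using Theorem \ref{thm:sample_U}(1).

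First, I would record the null asymptotic distribution $n_1^{1/2} T \xrightarrow{d} N(0, m_1^2 \xi_{0,1})$ from Theorem \ref{thm:gen_U}(1), which justifies the stated rejection region $\mathcal{C}_T = \{T < \Phi^{-1}(\alpha/2)\sqrt{m_1^2 \xi_{0,1}/n_1}\} \cup \{T > \Phi^{-1}(1-\alpha/2)\sqrt{m_1^2 \xi_{0,1}/n_1}\}$ as yielding asymptotic size $\alpha$. Next, I would invoke the alternative-hypothesis asymptotic distribution $n_1^{1/2}(T - \mu_0) \xrightarrow{d} N(0, m_1^2 \xi_{0,1})$. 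Writing the power as $\mathrm{P}(T \in \mathcal{C}_T \mid \mbox{H}_1)$, standardizing $Z_n \defeq n_1^{1/2}(T-\mu_0)/\sqrt{m_1^2 \xi_{0,1}}$, and using the continuous mapping theorem along with continuity of $\Phi$, the two tail probabilities converge to $\Phi\{\Phi^{-1}(\alpha/2) - \mu_0 (m_1^2 \xi_{0,1}/n_1)^{-1/2}\}$ and $1 - \Phi\{\Phi^{-1}(1-\alpha/2) - \mu_0 (m_1^2 \xi_{0,1}/n_1)^{-1/2}\}$, respectively, after applying $-\Phi^{-1}(1-\alpha/2) = \Phi^{-1}(\alpha/2)$. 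Summing gives the claimed expression.

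The one step that requires a small justification rather than pure algebra is the asymptotic normality of $T$ under $\mbox{H}_1$ with the same limiting variance as under $\mbox{H}_0$. I would handle this by reusing the Hoeffding-type projection from the proof of Theorem \ref{thm:gen_U}(1) applied to the centered kernel $\tilde{h} = h - \mu_0$: since $\tilde{h}$ has mean zero and the variance of its one-observation projection on $X^{(1)}$ equals $\xi_{0,1}$ under the appropriate regularity, the same $V = m_1 n_1^{-1} \sum_{i=1}^{n_1} h_{0,1}(X_i^{(1)})$ approximation yields $n_1^{1/2}(T - \mu_0) \xrightarrow{d} N(0, m_1^2 \xi_{0,1})$. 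For part (2), the identical argument carries over with $T_{\rm S}$ using Theorem \ref{thm:sample_U}(1), the only change being the variance entering the rejection cutoff and the standardization.

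The main obstacle is not a conceptual one; it is simply ensuring the rejection region has the correct form and that the standardization signs are handled consistently. No novel probabilistic argument is required beyond what the earlier theorems already supply, so the proof should be short and entirely mechanical once the asymptotic distributions under $\mbox{H}_1$ are stated.
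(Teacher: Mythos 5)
Your proposal is correct and follows essentially the same route as the paper: both derive the power by standardizing $T$ (resp.\ $T_{\rm S}$) around $\mu_0$ under $\mbox{H}_1$ and evaluating the two tail probabilities of the size-$\alpha$ rejection region via $\Phi$. Your added remark justifying the alternative-hypothesis asymptotic normality through the centered kernel $h-\mu_0$ is a small but reasonable supplement to what the paper simply asserts in the text preceding the proposition.
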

	\begin{proof}[Proof of Proposition \ref{thm: power}]
		To obtain the power of \(T\) and \(T_{\rm S}\) under the alternative hypothesis, we only need to calculate their corresponding rejection regions and the probabilities of statistics falling into rejection regions. By Theorem \ref{thm:gen_U} and Theorem \ref{thm:sample_U}, the rejection region for \(T\) with size \(\alpha\) is given by \(\{T < \Phi^{-1}(\alpha/2) (m_1^2 \xi_{0,1}/n_1)^{1/2} \} \cup \{T > \Phi^{-1}(1-\alpha/2) (m_1^2 \xi_{0,1}/n_1)^{1/2} \}\), and the rejection region for \(T_{\rm S}\) is $\{T_{\rm S} < \Phi(\alpha/2) (m_1^2 \xi_{0,1}/n_1 + m_0^2 \xi_{1,0} / sn_1)^{1/2}\} \cup \{T_{\rm S} > \Phi(1-\alpha/2) (m_1^2 \xi_{0,1}/n_1 + m_0^2 \xi_{1,0} / sn_1)^{1/2}\}$.
		
		The probabilities of \(T\) falling into left rejection region can be written by $\mathrm{P} \{ T <  \Phi^{-1}(\alpha/2) (m_1^2 \xi_{0,1}/n_1)^{1/2} \} = \mathrm{P} \{ (T - \mu_0)  (m_1^2 \xi_{0,1}/n_1)^{-1/2} <  \Phi^{-1}(\alpha/2) - \mu_0 (m_1^2 \xi_{0,1}/n_1)^{-1/2} \} =  \mathrm{P} \{ Z <  \Phi^{-1}(\alpha/2) - \mu_0 (m_1^2 \xi_{0,1}/n_1)^{-1/2} \}  =  \Phi\{\Phi^{-1}(\alpha/2) - \mu_0 (m_1^2 \xi_{0,1}/n_1)^{-1/2} \} $, where $Z \sim N(0,1)$. Same technique can be applied to conclude $\mathrm{P} \{ T >  \Phi^{-1}(1-\alpha/2) (m_1^2 \xi_{0,1}/n_1)^{1/2} \} = 1 -  \Phi\{\Phi^{-1}(1-\alpha/2) - \mu_0 (m_1^2 \xi_{0,1}/n_1)^{-1/2} \}$.
		
		The probabilities of $T_{\rm S}$ falling into left rejection region can be written by $\mathrm{P} \{T_{\rm S} <  \Phi^{-1}(\alpha/2) (m_1^2 \xi_{0,1}/n_1 + m_0^2 \xi_{1,0} / sn_1)^{1/2} \} = \mathrm{P} \{ (T_{\rm S} - \mu_0)  (m_1^2 \xi_{0,1}/n_1 + m_0^2 \xi_{1,0} / sn_1)^{-1/2} <  \Phi^{-1}(\alpha/2) - \mu_0 (m_1^2 \xi_{0,1}/n_1 + m_0^2 \xi_{1,0} / sn_1)^{-1/2} \} =  \mathrm{P} \{ Z <  \Phi^{-1}(\alpha/2) - \mu_0 (m_1^2 \xi_{0,1}/n_1 + m_0^2 \xi_{1,0} / sn_1)^{-1/2} \}  =  \Phi\{\Phi^{-1}(\alpha/2) - \mu_0 (m_1^2 \xi_{0,1}/n_1 + m_0^2 \xi_{1,0} / sn_1)^{-1/2} \} $, where $Z \sim N(0,1)$. Same technique can be applied and $\mathrm{P} \{ T_{\rm S} >  \Phi^{-1}(1-\alpha/2) (m_1^2 \xi_{0,1}/n_1 + m_0^2 \xi_{1,0} / sn_1)^{1/2} \} = 1 -  \Phi\{\Phi^{-1}(1-\alpha/2) - \mu_0 (m_1^2 \xi_{0,1}/n_1 + m_0^2 \xi_{1,0} / sn_1)^{-1/2} \}$.
	\end{proof}
	
	\noindent
	It is evident from Proposition \ref{thm: power} that the power of \( T \) and \( T_s \) are both driven by \( n_1 \) rather than $n$. The power of \( T_s \), depending on \( \mu_0 \left( m_1^2 \xi_{0,1} / n_1 + m_0^2 \xi_{1,0} / s n_1 \right)^{-1/2} \), is slightly lower than that of \( T \), which depends on \(\mu_0 \left( m_1^2 \xi_{0,1}/n_1 \right)^{-1/2} \). Similarly, as concluded in Theorem \ref{thm:sample_U}, from the perspective of power analysis, the BIT significantly reduces computational complexity while maintaining the same convergence rate as RIT by excluding the majority of controls.
	
	For the second order RIT, we only consider the case of high-dimensionality, since we could not obtain the explicit asymptotic distribution for fixed dimension. For the high-dimensional hypothesis testing statistic \(T_{\infty}\), the dimension $p$ of $X$ will change, so we assume that $\mathrm{E}h(\cdot)$ has a limit and redefine $\mu_0$ as $\lim_{n_1 \to \infty} \mathrm{E}h(\cdot) = \mu_0 \neq 0$. It could be verified that the asymptotic distribution of $T_{\infty}$ is $n_1 (T_{\infty} - \mu_0) / \xi_{0,2}^{1/2} \xrightarrow{d} N(0, m_1^2 (m_1 - 1)^2)$ under the alternative hypothesis $\mbox{H}_1$.  We have the following conclusion. The verification is similar and thus omitted.
	\begin{proposition} (The power of $T_{\infty}$)
		\label{cor:power T_div_p}
		Assume the alternative hypothesis \( \mbox{H}_1 \) is true with \( \mathrm{E} T_{\infty} = \mu_0 \neq 0 \), \( n_1/n_0 \to 0 \), \( \xi_{0,1} = \xi_{1,0} = 0 \), \( p \to \infty \), and \( h_{0,2}(\cdot, \cdot) \) satisfying condition \eqref{eq:condition_clt}. Then the power of \( T_{\infty} \) is $\Phi \{\Phi^{-1}(\alpha/2) -  \mu_0 n_1 / m_1 (m_1 - 1) \xi_{0,2}^{1/2}\} + 1 - \Phi \{\Phi^{-1}(1-\alpha/2) - \mu_0 n_1 / m_1 (m_1 - 1) \xi_{0,2}^{1/2}\}$.
		As \( s \to \infty \), the statistic \( T_{\rm S} \) can achieve the same power as \( T_{\infty} \).
\end{proposition}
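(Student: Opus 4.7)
The plan is to mirror the argument used for Proposition \ref{thm: power} in the first-order case, with the high-dimensional second-order central limit theorem of Theorem \ref{thm:T_div_p} replacing the first-order CLT. First I would establish the companion of Theorem \ref{thm:T_div_p} under the alternative $\mbox{H}_1$. Setting $h^c = h - \mathrm{E} h$, the centered kernel still has vanishing first-order projections ($\xi_{0,1} = \xi_{1,0} = 0$ is imposed under $\mbox{H}_1$ by assumption), so the martingale decomposition $Z_i^c = \sum_{j<i} h^c_{0,2}(X_i^{(1)}, X_j^{(1)})$ of Appendix C.2 applies verbatim once condition \eqref{eq:condition_clt} is checked for $h^c_{0,2}$. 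The key observation is that $h^c_{0,2} = h_{0,2}$ as functions, since subtracting the constant $\mathrm{E} h$ from $h$ leaves the $(0,2)$-projection unchanged whenever $\xi_{0,1} = 0$; hence the condition transfers for free, and the output is $n_1(T_{\infty} - \mu_0)/\xi_{0,2}^{1/2} \xrightarrow{d} N(0, m_1^2(m_1-1)^2)$.

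Given this CLT under $\mbox{H}_1$, the power computation is a direct standardization, exactly as in Proposition \ref{thm: power}. The rejection region of size $\alpha$ read off from Theorem \ref{thm:T_div_p} becomes $\{n_1 T_{\infty}/\xi_{0,2}^{1/2} < \Phi^{-1}(\alpha/2)\, m_1(m_1-1)\} \cup \{n_1 T_{\infty}/\xi_{0,2}^{1/2} > \Phi^{-1}(1-\alpha/2)\, m_1(m_1-1)\}$; subtracting the mean shift $n_1\mu_0/\xi_{0,2}^{1/2}$ inside each event and applying $\Phi$ produces the displayed expression for the power. For the final assertion about $T_{\rm S}$, I would revisit the proof of Theorem \ref{thm:sample_div_p} with the centered kernel: the calculation there establishes $n_1(T_{\rm S} - V) \xrightarrow{p} 0$ as $s \to \infty$, where $V$ is the pure second-order projection, and an identical $L^2$ argument with $h^c$ yields $n_1(T_{\rm S} - \mu_0) - n_1(V - \mathrm{E} V) \xrightarrow{p} 0$ under $\mbox{H}_1$, so $T_{\rm S}$ inherits both the mean shift $\mu_0$ and the asymptotic variance of $T_{\infty}$, and therefore the same power.

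The hard part will be confirming that the moment condition \eqref{eq:condition_clt} is genuinely preserved under the alternative, because $\mu_0$, $\xi_{0,2}$, and $\mathrm{E}\{h_{0,2}^4\}$ all depend on the joint index $(n_1, p)$ and the martingale CLT must hold along this sequence. The clean route is the algebraic identity $h^c_{0,2} = h_{0,2}$ flagged above, which lets one quote condition \eqref{eq:condition_clt} directly from the null; everything beyond that is bookkeeping along the lines of Appendix C.2 and Appendix C.4. The only mild subtlety is that under $\mbox{H}_1$ the joint law of $(X^{(0)}, X^{(1)})$ differs from the null, so one must verify that the martingale array built from the i.i.d.\ case sample $\{X_i^{(1)}\}$ still has the required variance and Lyapunov behavior --- which it does, because the second-order projection depends only on the marginal distribution of $X^{(1)}$ under $\mbox{H}_1$.
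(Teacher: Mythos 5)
Your proposal follows the same route the paper intends (the paper omits this verification as ``similar'' to Proposition~\ref{thm: power}): derive the CLT for $T_{\infty}$ under $\mbox{H}_1$ by reducing to the degenerate second-order projection, then read the power off by standardizing the rejection region obtained from Theorem~\ref{thm:T_div_p}; the standardization step and the resulting expression are correct, as is the treatment of $T_{\rm S}$ via the argument of Theorem~\ref{thm:sample_div_p}.

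One algebraic claim is wrong, though it does not sink the argument: subtracting the constant $\mathrm{E}h=\mu_0$ from $h$ does \emph{not} leave the $(0,2)$-projection unchanged --- it gives $h^c_{0,2}=h_{0,2}-\mu_0$, not $h^c_{0,2}=h_{0,2}$. What is true, and what you actually need, is that under $\xi_{0,1}=0$ the first-order projection is identically $\mu_0$, so the degenerate part of $T_{\infty}-\mu_0$ is driven by the centered kernel $h_{0,2}-\mu_0$, whose conditional mean given one argument vanishes and whose variance is still $\xi_{0,2}$; the martingale construction of Appendix~C.2 then goes through verbatim. Consequently condition~\eqref{eq:condition_clt} does not transfer ``for free'' by an identity of kernels --- under $\mbox{H}_1$ it must be imposed on (or verified for) the centered projection $h_{0,2}-\mu_0$, since the denominator $\mathrm{E}\{h_{0,2}^2\}^2$ and the function $G$ both change by $\mu_0$-dependent terms. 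This is the natural reading of the proposition's hypothesis, so the fix is a one-line clarification rather than a new argument.
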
}

\scsubsection{Appendix C.7. Local Power Analysis}

{Local power analysis follows a similar approach to that described in Appendix C.6. For the first-order RIT, $\mathrm{E}h(\cdot)$ will vary with the local alternatives. We can verify that under $\mbox{H}_{1,n_1}(\Delta_{0})$: $\Delta_{n_1} = n_1^{-1/2} \Delta_0$, $\mathrm{E}h(\cdot) =\sum_{k=1}^{m_1} \binom{k}{m_1} {\Delta_{n_1}}^k  \mu_{G,k} = m_1 \Delta_0 \mu_{G,1}/n_1^{1/2} + O(n_1^{-1})$, which indicates $\lim_{n_1 \to \infty} n_1^{1/2} \mathrm{E}T = m_1 \Delta_0 \mu_{G,1}$. Thus, we obtain \(n_1^{1/2} T  \xrightarrow{d} N(m_1 \Delta_0 \mu_{G,1}, m_1^2 \xi_{0,1})\). Correspondingly, under the null hypothesis, we have $n_1^{1/2} T \xrightarrow{d} N(0, m_1^2 \xi_{0,1})$, where there exists a shift term between the two asymptotic distributions. The rejection region of $T$ with size $\alpha$ is given by $\{n_1^{1/2}T < \Phi^{-1}(\alpha/2) m_1 \xi_{0,1}^{1/2} \} \cup \{n_1^{1/2}T > \Phi^{-1}(1-\alpha/2)m_1 \xi_{0,1}^{1/2} \}$. Letting $C = \{\Phi^{-1}(1 - \alpha/2) - \Phi^{-1}(1 - \beta)\} \xi_{0,1}^{1/2}/|\mu_{G,1}|$ and $\mu_{G,1}>0$ without loss of generality, it can be verified for any $\Delta_{0} > C$,
	\begin{equation*}
		\begin{aligned}
			& \lim_{n_1 \to \infty} \mathrm{P}\left\lbrace \phi_T(T) = 1 \mid \mbox{H}_{1,n_1}(\Delta_{0})  \right\rbrace  = \lim_{n_1 \to \infty} \mathrm{P} \left\lbrace n_1^{1/2}T > \Phi^{-1}(1-\alpha/2)m_1 \xi_{0,1}^{1/2} \right\rbrace \\
			= & \lim_{n_1 \to \infty} \mathrm{P} \left\lbrace \frac{n_1^{1/2}T - m_1 \Delta_0 \mu_{G,1}}{m_1 \xi_{0,1}^{1/2}}  > \Phi^{-1}(1-\alpha/2) - \frac{\Delta_0 \mu_{G,1}}{\xi_{0,1}^{1/2}} \right\rbrace
		\end{aligned}
	\end{equation*}
	\begin{equation*}
		\begin{aligned}
			= & \mathrm{P} \left\lbrace Z > \Phi^{-1}(1-\alpha/2) - \frac{\Delta_0 \mu_{G,1}}{\xi_{0,1}^{1/2}} \right\rbrace \\
			> & \mathrm{P} \left\lbrace Z > \Phi^{-1}(1-\alpha/2) - \Phi^{-1}(1-\alpha/2) + \Phi^{-1}(1-\beta) \right\rbrace = 1 - \beta.
		\end{aligned}
	\end{equation*}
	And for any $\Delta_{0} < -C$, we have
	\begin{equation*}
		\begin{aligned}
			& \lim_{n_1 \to \infty} \mathrm{P}\left\lbrace \phi_T(T) = 1 \mid \mbox{H}_{1,n_1}(\Delta_{0})  \right\rbrace  = \lim_{n_1 \to \infty} \mathrm{P} \left\lbrace n_1^{1/2}T < \Phi^{-1}(\alpha/2)m_1 \xi_{0,1}^{1/2} \right\rbrace \\
			= & \lim_{n_1 \to \infty} \mathrm{P} \left\lbrace \frac{n_1^{1/2}T - m_1 \Delta_0 \mu_{G,1}}{m_1 \xi_{0,1}^{1/2}}  < \Phi^{-1}(\alpha/2) - \frac{\Delta_0 \mu_{G,1}}{\xi_{0,1}^{1/2}} \right\rbrace \\
			> & \mathrm{P} \left\lbrace Z < \Phi^{-1}(\alpha/2) + \Phi^{-1}(1-\alpha/2) - \Phi^{-1}(1-\beta) \right\rbrace = 1 - \beta.
		\end{aligned}
	\end{equation*}

	The same techniques can be applied to the remaining local power analysis, and therefore we omit them here.
}

\scsubsection{Appendix D. Supplementary Tables}

In this subsection, we provide the results for the second-order RIT and BIT in the simulation for Example 2 in Table \ref{tab:computation_dcor_and_ipcor}, and the feature definitions for the Credit Dataset in Table \ref{tab:credit_intro}.

\begin{table}[htb]
	\begin{center}
		\caption{Comparison of Empirical Size and Power of Rescaled Distance Covariance $T_{\rm dcov}$ and  Rescaled Projection Correlation $T_{\rm IPcov}$ and the corresponding BITs with Different $n_1 s$.}
		\label{tab:computation_dcor_and_ipcor}
		\renewcommand\arraystretch{1.5} 
		\begin{tabular}{l|ccccc|ccccc}
			\hline \hline
			&\multicolumn{5}{c|}{Empirical Size}&\multicolumn{5}{c}{Empirical Power}\\
			\hline
			&\multicolumn{1}{c|}{}&\multicolumn{4}{c|}{$n_1  s$}	&\multicolumn{1}{c|}{}&\multicolumn{4}{c}{$n_1  s$}\\
			\hline
			&\multicolumn{1}{c|}{ RIT}&1000&1500&2000&2500&\multicolumn{1}{c|}{ RIT}&1000&1500&2000&2500\\
			\hline
			$n_1$&\multicolumn{10}{c}{Rescaled Distance Covariance $T_{\rm dcov}$ Eg.1}\\
			30 & 0.052 & 0.057 & 0.059 & 0.063 & 0.052 & 0.843 & 0.822 & 0.835 & 0.841 & 0.838 \\
			50 & 0.053 & 0.059 & 0.057 & 0.068 & 0.052 & 0.983 & 0.971 & 0.980 & 0.982 & 0.981 \\
			70 & 0.050 & 0.069 & 0.073 & 0.069 & 0.059 & 0.999 & 0.999 & 0.998 & 0.999 & 0.999 \\
			100 & 0.057 & 0.068 & 0.064 & 0.056 & 0.061 & 1.000 & 0.999 & 1.000 & 1.000 & 1.000 \\
			\hline
			$n_1$&\multicolumn{10}{c}{Rescaled Projection Covariance $T_{\rm IPcov}$ Eg.1}\\
			30 & 0.053 & 0.060 & 0.050 & 0.056 & 0.049 & 0.826 & 0.808 & 0.816 & 0.821 & 0.820 \\
			50 & 0.049 & 0.058 & 0.044 & 0.063 & 0.058 & 0.988 & 0.975 & 0.981 & 0.976 & 0.986 \\
			70 & 0.064 & 0.062 & 0.064 & 0.061 & 0.055 & 1.000 & 0.994 & 0.998 & 0.999 & 1.000 \\
			100 & 0.062 & 0.070 & 0.076 & 0.063 & 0.060 & 1.000 & 1.000 & 1.000 & 1.000 & 1.000 \\
			\hline
			$n_1$&\multicolumn{10}{c}{Rescaled Distance Covariance $T_{\rm dcov}$ Eg.2}\\
			30 & 0.053 & 0.048 & 0.068 & 0.062 & 0.044 & 0.799 & 0.766 & 0.789 & 0.802 & 0.794 \\
			50 & 0.063 & 0.057 & 0.066 & 0.047 & 0.058 & 0.969 & 0.952 & 0.973 & 0.962 & 0.967 \\
			70 & 0.051 & 0.072 & 0.064 & 0.059 & 0.054 & 0.996 & 0.990 & 0.996 & 0.997 & 0.996 \\
			100 & 0.062 & 0.083 & 0.072 & 0.063 & 0.059 & 1.000 & 1.000 & 0.999 & 1.000 & 1.000 \\
			\hline
			$n_1$&\multicolumn{10}{c}{Rescaled Projection Covariance $T_{\rm IPcov}$ Eg.2}\\
			30 & 0.047 & 0.066 & 0.052 & 0.062 & 0.054 & 0.805 & 0.782 & 0.786 & 0.798 & 0.803 \\
			50 & 0.052 & 0.068 & 0.061 & 0.060 & 0.053 & 0.977 & 0.962 & 0.973 & 0.969 & 0.972 \\
			70 & 0.048 & 0.075 & 0.064 & 0.067 & 0.059 & 1.000 & 0.992 & 0.997 & 1.000 & 1.000 \\
			100 & 0.061 & 0.074 & 0.077 & 0.059 & 0.068 & 1.000 & 0.998 & 1.000 & 1.000 & 1.000 \\
			\hline
		\end{tabular}
	\end{center}
\end{table}

\begin{table}[ht]
	\centering
	\renewcommand\arraystretch{1}
	\caption{Feature Groups and Detailed Feature Definitions in the Credit Dataset}
	\label{tab:credit_intro}
	\begin{tabular}{p{5cm}|p{8cm}}
		\hline
		\textbf{Group Name}& \textbf{Feature Definitions} \\
		\hline
		{\textbf{Repayment Ability}} & Total balance on credit cards and personal lines of credit except real estate; Monthly income; Monthly debt payments divided by monthly gross income; Number of dependents in the family \\
		\hline
		{\textbf{Credit History}} & Number of times borrower has been 30-59 days past due but no worse in the last 2 years; Number of times borrower has been 60-89 days past due but no worse in the last 2 years; Number of times borrower has been 90 days or more past due \\
		\hline
		{\textbf{Property Status}} & Number of Open loans (installment like car loan or mortgage) and Lines of credit (e.g. credit cards); Number of mortgage and real estate loans including home equity lines of credit; Age of borrower in years \\
		\hline
	\end{tabular}
	
\end{table}

\bibliographystyle{asa}
\bibliography{Mybib}

\newpage

\end{document}